\DeclareMathOperator*{\argmax}{arg\,max}
\DeclareMathOperator*{\argmin}{arg\,min}
\DeclareMathOperator*{\maxlen}{\operatorname{MAXLEN}}
\newcommand{\defeq}{\vcentcolon=}
\newtheorem{prop}{Proposition}
\begin{document}
\title{Outlier detection for mixed-type data: A novel approach}

%%=============================================================%%
%% Prefix	-> \pfx{Dr}
%% GivenName	-> \fnm{Joergen W.}
%% Particle	-> \spfx{van der} -> surname prefix
%% FamilyName	-> \sur{Ploeg}
%% Suffix	-> \sfx{IV}
%% NatureName	-> \tanm{Poet Laureate} -> Title after name
%% Degrees	-> \dgr{MSc, PhD}
%% \author*[1,2]{\pfx{Dr} \fnm{Joergen W.} \spfx{van der} \sur{Ploeg} \sfx{IV} \tanm{Poet Laureate} 
%%                 \dgr{MSc, PhD}}\email{iauthor@gmail.com}
%%=============================================================%%

\author*[1]{\fnm{Efthymios} \sur{Costa}}\email{efthymios.costa17@imperial.ac.uk}
%\equalcont{These authors contributed equally to this work.}

\author[1]{\fnm{Ioanna} \sur{Papatsouma}}\email{i.papatsouma@imperial.ac.uk}
%\equalcont{These authors contributed equally to this work.}

\affil*[1]{\orgdiv{Department of Mathematics}, \orgname{Imperial College London}} %\orgaddress{\street{Street}, \city{City}, \postcode{100190}, \state{State}, \country{Country}}}
%\orgaddress{\street{Nea Hili}, \city{Alexandroupoli}, \postcode{68100}, \country{Greece}}}

%%==================================%%
%% sample for unstructured abstract %%
%%==================================%%

\abstract{Outlier detection can serve as an extremely important tool for researchers from a wide range of fields. From the sectors of banking and marketing to the social sciences and healthcare sectors, outlier detection techniques are very useful for identifying subjects that exhibit different and sometimes peculiar behaviours. When the data set available to the researcher consists of both discrete and continuous variables, outlier detection presents unprecedented challenges. In this paper we propose a novel method that detects outlying observations in settings of mixed-type data, while reducing the required user interaction and providing general guidelines for selecting suitable hyperparameter values. The methodology developed is being assessed through a series of simulations on data sets with varying characteristics and achieves very good performance levels. Our method demonstrates a high capacity for detecting the majority of outliers while minimising the number of falsely detected non-outlying observations. The ideas and techniques outlined in the paper can be used either as a pre-processing step or in tandem with other data mining and machine learning algorithms for developing novel approaches to challenging research problems.}

%189 words, max 200 words

\keywords{outlier detection, anomaly detection, mixed-type data, heterogeneous data}
\pacs[MSC Classification]{62H30}
\maketitle

\section{Introduction}

Outlier detection aims to flag atypical observations in a data set; these are the `outliers' (also called anomalies) and they may be points with values that do not conform to some pattern suggested by the data or observations whose values arouse suspicion regarding the mechanism that has been used to generate them. The process of finding outliers within a data set is usually not straightforward and requires special care, as these observations may be indications of a serious offense or a critical situation. For instance, they could be intrusions in a network \citep{aggarwal2007data, di2008intrusion}, cases of financial fraud or money laundering \citep{ngai2011application}, malicious individuals in online social networks \citep{savage2014anomaly} or pathologies in medical images \citep{tschuchnig2022anomaly}, among others. The wide range of uses of outlier detection (also known as anomaly detection or novelty detection) in a variety of domains, such as the aforementioned ones, has driven the development of numerous algorithms designed to detect anomalies within a given data set. However, despite the increasing popularity of many of these techniques, most of them are restricted to just one type of data and that one is mainly continuous data.\\

Mixed-type data, that is data consisting of both continuous and discrete (also known as categorical) variables, is commonly encountered in plenty of fields. As an example, clinical records may include information such as a patient's age, weight and height, as well as demographic characteristics like gender, race and marital status. In marketing research, discrete variables that portray a client's demographic, psychographic or socio-economic background, together with quantitative data related to their purchase behaviour are used for advertising products or services tailored to their needs and interests. The list of examples of mixed-type data uses in real-world applications is non-exhaustive, underscoring the importance of developing efficient methods capable of detecting data abnormalities in mixed-attribute domains.

To the best of our knowledge, the first algorithm that was developed to detect anomalies in mixed-attribute data sets was the Link-based Outlier and Anomaly Detection in Evolving Data Sets (LOADED) algorithm of \cite{ghoting2004loaded}, who suggested the use of association rules from the frequent itemset mining literature \citep{agrawal1994fast} for detecting anomalies in the discrete space. The core limitation of the LOADED algorithm is the use of covariance matrices to detect outliers in the continuous domain, something which requires large computational memory; moreover, the method does not look for interactions between discrete and continuous variables, thus failing to detect anomalies in the mixed-attribute space successfully. An improvement of this algorithm in terms of computational cost was then presented by \cite{otey2006fast}, with further ameliorations in the detection of continuous outliers being proposed by \cite{koufakou2010}, who introduced the Outlier Detection for Mixed Attribute Datasets (ODMAD) algorithm. The primary concern with ODMAD is that it does not account for interactions between discrete and continuous features, besides making use of the cosine similarity. However, this does not guarantee that the majority of outliers in the mixed-attribute space will be detected and in fact, clustered anomalies are very likely to be missed by ODMAD. Furthermore, the ODMAD algorithm is not fully unsupervised as its implementation involves certain threshold parameters, the values of which need to be defined by the user. Different threshold values can lead to very different results and the authors tune these based on existing labels that indicate which observations are outlying in the data set.

A different approach was taken by \cite{zhang2011effective}, who developed a Pattern-based Outlier Detection (POD) algorithm that uses logistic regression to assign scores of outlyingness to observations of mixed-type. POD flags the top $k$ points with the highest scores as outliers, where the value of $k$ is chosen according to the user's belief on the proportion of outliers included in the data set, hence making it hard to specify the value of $k$ in a completely unsupervised setting. The work of \cite{bouguessa2015practical} attempts to overcome these issues by looking at the problem from a different point of view. More precisely, anomaly detection is seen from a mixture modelling perspective, with outlier scores being calculated independently for discrete and continuous attributes and a bivariate Beta mixture model being fit on these two score variables (upon normalisation). Anomalies are then defined as the observations that have been assigned to the mixture component with the highest average discrete and continuous scores of outlyingness. Despite the fact that this approach does not rely on user-defined threshold values for the scores, it is limited to calculating scores separately for different data types, just like what is being done by the ODMAD algorithm. However, outliers in a mixed-attribute space may as well have low scores of outlyingness in both the discrete and the continuous domains, yet they may still be atypical.

Recent advances in the field of deep learning have led to many state-of-the-art deep learning techniques gaining increasing attention and also being used for detecting outliers. The use of deep learning methods for anomaly detection is sometimes being referred to as Deep Anomaly Detection and some of the most popular such algorithms include Generative Adversarial Networks (GAN) \citep{schlegl2017unsupervised}, autoencoder networks \citep{an2015variational, chen2018autoencoder} and Restricted Boltzmann Machines (RBMs) \citep{fiore2013network}; a complete overview of deep learning techniques for anomaly detection can be found in \cite{pang2021deep}. One of the main drawbacks that these methods present is that they can only handle mixed-type data using one-hot encoding for discrete variables which usually leads to a sparse representation of the data and which can become very challenging to deal with, as described in \cite{thudumu2020comprehensive}. Recently, \cite{do2016outlier} proposed an energy-based approach for outlier detection of mixed-type data, extending the work on Mixed-variate RBMs (Mv.RBM) of \cite{tran2011mixed}. However, the number of layers to be used, as well as the model architecture of an autoencoder built for anomaly detection are not always trivial to the user, thus posing a serious limitation.

In this paper we propose a novel method for detecting outliers in mixed-attribute space. The main contributions of this work are as follows: we improve upon existing approaches to the problem of outlier detection in mixed-attribute domain, leveraging key insights that stem from prior research in the field and further reducing the amount of required user input throughout the process. We distinguish among several types of outliers that a data set may include by thoroughly defining these in Section \ref{sec:defoutliers}. We then define scores of outlyingness which are used for quantifying the likelihood of a data point being anomalous in Section \ref{sec:scores}. Sections \ref{sec:detectmargouts} and \ref{sec:detectjointouts} focus on the main tools and techniques used for successfully flagging outlying observations of different types in a mixed feature space, while limiting the number of falsely detected non-outliers. The efficacy of our method is illustrated via a large number of simulations on data sets with varying characteristics.

\section{Definition of outliers}\label{sec:defoutliers}

The definition of an outlier is crucial in the development of a technique that can detect data abnormalities. Especially in the case of mixed-type data, we typically face the problem of not having a proper definition of outliers. Our intuition is that anomalies may appear in the discrete or the continuous space, but they could also exist in both domains. Moreover, anomalies can be detected in the mixed-attribute space as well, meaning that they may not be outlying in either domain, yet they may be anomalous because they do not conform to an existing (and usually unknown) pattern between the discrete and the continuous features. In this section, we define these types of outliers and explain the strategy that we follow in order to detect them. We have also summarised some useful notation in Table \ref{tab:notation}; notice that the notation appears in the order with which each term appears in the remainder of the paper.
\vspace{-0.15cm}
\begin{table}[h!]
\centering
\footnotesize{
\begin{tabular}{ll}
Notation & Definition           \\[0.025cm] \hline
$\boldsymbol{X}$ & Data set including both discrete and continuous features\\[0.025cm]
$\boldsymbol{X}_{i,D}$ & Discrete feature values of the $i$th observation\\[0.025cm]
$\boldsymbol{X}_{i,C}$ & Continuous feature values of the $i$th observation\\[0.025cm]
$\boldsymbol{X}_{D_j}$ & Discrete variable $j$\\[0.025cm]
$\boldsymbol{X}_{C_j}$ & Continuous variable $j$\\[0.025cm]
$n$ & Number of observations\\[0.025cm]
$p$ & Total number of features\\[0.025cm]
$p_D$ & Number of discrete features\\[0.025cm]
$p_C$ & Number of continuous features\\[0.025cm]
$\ell_j$ & Number of levels of $j$th discrete variable\\[0.025cm]
$\mathcal{O}$ & Set of outliers\\[0.025cm]
$\mathcal{O}_M$ & Set of marginal outliers\\[0.025cm]
$\mathcal{O}_J$ & Set of joint outliers\\[0.025cm]
$\mathcal{E}$ & Set of unique discrete score values\\[0.025cm]
$\rho$ & Maximum proportion of outliers believed to be in the data set\\[0.025cm]
$\epsilon$ & Additional proportion of outliers that we are willing to tolerate\\[0.025cm]
$s_{D,(i, \cdot)}$ & Discrete score of outlyingness for observation $\boldsymbol{X}_i$\\[0.025cm]
$c_{D,(i, j)}$ & Contribution of $j$th discrete variable to $\mathcal{S}_{D,(i, \cdot)}$\\[0.025cm]
$s_{C,(i, \cdot)}$ & Continuous score of outlyingness for observation $\boldsymbol{X}_i$\\[0.025cm]
$d$ & Sequence of discrete levels (itemset)\\[0.025cm]
$\lvert d \rvert$ & Length of itemset $d$\\[0.025cm]
$\operatorname{supp}(d)$ & Support/Frequency of itemset $d$\\[0.025cm]
$\maxlen$ & Maximum length of itemset $d$\\[0.025cm]
$\sigma_d$ & Minimum frequency threshold for itemset $d$\\[0.025cm]
$u_{j,j'}$ & Theil's $U$ score between discrete variables $j$ and $j'$\\[0.075cm]
$u_{j,j'}^\mathrm{upper}$ & Upper threshold value for $u_{j,j'}$\\[0.025cm]
$m_D$ & Mean of discrete scores $s_{D,(i,\cdot)}$\\[0.025cm]
$m_C$ & Mean of sorted differences between continuous scores $s_{C,(i,\cdot)}$\\[0.025cm]
$s_D$ & Standard error of discrete scores $s_{D,(i,\cdot)}$\\[0.025cm]
$s_C$ & Standard error of sorted differences between continuous scores $s_{C,(i,\cdot)}$\\[0.025cm]
$\mathcal{I}_D$ & Set of indices of marginal outliers in discrete space\\[0.025cm]
$\mathcal{I}_C$ & Set of indices of marginal outliers in continuous space\\[0.025cm]
$\mathcal{J}$ & Set of indices of the continuous variables associated with $\boldsymbol{X}_{D_j}$ \\[0.025cm]
$\delta$ & Proportion of nearest neighbours\\[0.025cm]
$n_{j,l}$ & Number of non-marginally outlying observations for which $\boldsymbol{X}_{D_j} = l$\\[0.025cm]
$\pi_{j,l}$ & Proportion of non-marginally outlying observations for which $\boldsymbol{X}_{D_j} = l$\\[0.025cm]
$\alpha_1$ & Significance level of Kruskal-Wallis $H$ test\\[0.025cm]
$\alpha_2$ & Significance level of chi-square goodness of fit test\\[0.025cm]
%$\hat{\boldsymbol{Y}}$ & $\left(n \times r \right)$-dimensional matrix of predicted probabilities for $n$ observations and $r$ possible outcomes\\[0.025cm]
%$w_{i,j}$ & Weight of observation $i$ when trying to predict variable $j$\\[0.025cm]
%$\mathcal{L}^W$ & Weighted cross-entropy loss function\\[0.025cm]
$\hat{f}_l$ & Kernel Density Estimator of the density of the $l$th level of a discrete variable\\[0.025cm]
$\Lambda_i$ & \begin{tabular}{@{}l@{}}Ratio of maximum KDE value to that of the KDE of the true level of \\the $i$th observation\end{tabular}\\[0.025cm]
$\Lambda^*$ & KDE ratio threshold for detecting joint outliers\\[0.025cm]
$\Lambda_m^*$ & \begin{tabular}{@{}l@{}}Small KDE ratio threshold value to be used should too many points \\are misclassified using method of consecutive angles\end{tabular}\\[0.025cm]
$N\left(\Lambda^*\right)$ & Number of misclassified points using KDE for which $\Lambda_i > \Lambda^*$\\[0.025cm]
$\theta_{\Lambda^*}$ & \begin{tabular}{@{}l@{}}Angle between the horizontal axis and the line segment connecting\\ $N\left(\Lambda^*-0.5\right)$ and $N\left(\Lambda^*\right)$\end{tabular}\\[0.025cm]
$\Lambda^*_\mathrm{elbow}$ & \begin{tabular}{@{}l@{}}Value of $\Lambda^*$ where the elbow of the curve of misclassified observations\\ by KDE against $\Lambda^*$ is located \end{tabular}\\[0.025cm]
$\theta_\mathrm{elbow}$ & Angle between the segments joining the elbow point with $N(1)$ and $N(20)$\\[0.025cm]
$\theta_\mathrm{thresh}$ & \begin{tabular}{@{}l@{}}Threshold value for $\theta_\mathrm{elbow}$, determining whether $\Lambda^*=3$ or the\\ method of consecutive angles should be used\end{tabular}
\end{tabular}}
\caption{Useful notation; terms are mentioned in the order that they appear within the rest of the manuscript.}
\label{tab:notation}
\end{table}

\subsection{Marginal outliers}

The first type of outliers that we define is the observations which are found to be outlying when we consider discrete and continuous features separately. These are referred to as `marginal outliers'. Marginal outliers that are anomalous in either just the discrete or just the continuous space are called `single marginal' outliers, while if an observation is outlying in both domains separately, it is referred to as a `combined marginal' outlier. In order to fix some notation, we assume that our data set $\boldsymbol{X}$ consists of a combination of discrete and continuous variables and includes $n$ observations. We denote the set of outliers by $\mathcal{O}$, while the set of marginal outliers is denoted by $\mathcal{O}_M$. We further introduce $\mathcal{O}_D$ and $\mathcal{O}_C$ as the sets of observations which are marginally outlying in just the discrete or just the continuous domains, respectively. Therefore, we say that a data point $\boldsymbol{X}_i$ ($i=1, \dots, n$) is a single marginal outlier if and only if $\boldsymbol{X}_i \in \left(\mathcal{O}_D \cap \overline{\mathcal{O}}_C\right) \cup \left(\overline{\mathcal{O}}_D \cap \mathcal{O}_C \right)$, where $\overline{A}$ denotes the complement of set $A$. The set of combined marginal outliers is $\mathcal{O}_D \cap \mathcal{O}_C$ and we can express the set of marginal outliers as $\mathcal{O}_M = \mathcal{O}_D \cup \mathcal{O}_C$.

We define outliers in the discrete space in the same way as \cite{otey2006fast}, who use ideas from the frequent itemset mining literature to introduce links between variables in the categorical space. More precisely, they claim that an outlier in the discrete domain is an observation that either includes a categorical level that does not appear frequently within a discrete variable or it may as well be a data point that includes a rare co-occurrence of levels of two or more categorical features. This definition of discrete outliers was shown to be very effective in practice in \cite{koufakou2007scalable}, with more details on which discrete levels are considered as frequent being given in Section \ref{sec:scores}. Outliers in the continuous domain are defined as observations that are far away from the rest of the data and which appear to be isolated in the $p_C$-dimensional space of continuous variables (where $p_C$ is the number of continuous variables in the data set). The advantage of this definition of continuous outliers is that it enables us to detect points in regions of extremely low density and flag them as outliers, without having to make any distributional assumptions about the data generating process.

\subsection{Joint outliers}

While the marginal outliers are observations which deviate much from the rest of the data, there may as well be anomalies that look perfectly normal at first sight. We define these observations to be the `joint outliers', which only exist in the joint space of discrete and continuous variables and are therefore not marginally outlying in either domain. Our motivation is that in many practical applications, variables of different types may be found to be associated in a certain way that deems any observations violating such relationships as outliers. These relationships may be known to the user, although in most practical applications, this will not be the case. As a result, it is important that the method we use to detect outliers in a mixed data set can disentangle any interactions among discrete and continuous features so that abnormal observations in the mixed space can be uncovered. Denoting the set of joint outliers by $\mathcal{O}_J$ and the set of marginal outliers by $\mathcal{O}_M$, we recall that these are completely disjoint by construction. Hence, we can partition the set of outliers $\mathcal{O}$ as the union of the sets of marginal and joint outliers; $\mathcal{O} = \mathcal{O}_M \cup \mathcal{O}_J$.

In order to motivate the importance of detecting the joint outliers in mixed-type data, we consider the scenario of a data set including clinical records of patients in a hospital. Suppose that different types of treatment are given to patients suffering from a disease based on their body mass index (BMI). The BMI is calculated as the ratio of the weight (in kilograms) to the square of the height (in meters) of a patient and different BMI value ranges correspond to different classifications (underweight, normal, overweight, obese). If the only data available is the height and the weight of each patient, as well as the type of treatment given to them, it is very unlikely that the user can detect instances of the wrong treatment being given to a patient. Had the user known that the treatment type is determined by the patient's BMI, which is a function of the two continuous variables available to them, they could have easily run a quick check and detect such data irregularities, but having access to information from the data collection team is seldom possible.

\section{Scores of outlyingness}\label{sec:scores}

We present the scores of outlyingness that we calculate for data points in both the discrete and the continuous domains, which are going to be used for detecting marginal outliers. Notice that these scores can vary for different data sets, meaning that there is no universal threshold value for the score that can be used to determine whether an observation is an outlier. 

\subsection{Discrete Score}

We recall our definition of marginal outliers in the discrete space; these are observations which include discrete levels that are infrequent within their respective discrete variables, or data points which include rare co-occurrences of levels or sequences of categorical attributes. Based on ideas taken from the association rule mining literature \citep{agrawal1994fast}, we present a method for calculating scores of outlyingness for the discrete features of each observation, which we will be referring to as `discrete scores'. The method can be seen as an improved and automated version of the discrete score that was used by \cite{koufakou2010} in the development of the ODMAD algorithm, with suitable threshold values being determined in a data-driven manner.

In order to introduce the method used to compute the discrete scores, we first make a brief introduction to the concept of an itemset. An itemset $d$ is a set consisting of one discrete level or of a sequence of discrete levels. The number of times that an itemset $d$ occurs within a data set is called the support of $d$ and it is denoted by $\operatorname{supp}(d)$. The itemset length, denoted by $\lvert d \rvert$ is defined as the number of categorical variables that have been considered to produce itemset $d$. Assuming that our data set includes $p_D$ discrete variables, we have a total of $2^{p_D}-1$ distinct sequences of variables (where a sequence of unit length refers to just the discrete variable itself) and if each discrete variable $j$ includes $\ell_j$ discrete levels, the total number of itemsets that one could encounter is equal to $\prod_{j=1}^{p_D}\left(\ell_j+1 \right)-1$.

Based on the definition of discrete outliers, we can look for itemsets with low support in order to find outlying observations in the categorical space. We define the discrete score for an observation $i$ as:

\begin{equation}\label{eq:discretescorefinal}
    s_{D,(i,\cdot)}=\sum_{\substack{d \subseteq \boldsymbol{X}_{i,D}: \\ \\\operatorname{supp}(d)<\sigma_d, \\ |d| \leq \maxlen, \\ \left\{\left\{k, k'\right\}: u_{k,k'}>u^{\mathrm{upper}}_{k,k'}\right\}\nsubseteq d}} \frac{1}{\operatorname{supp}(d) \times\lvert d \rvert^2}, \quad i=1,\dots,n,
\end{equation}

The rationale behind this formulation of the score is that the algorithm will scan over all itemsets $d$ that are contained within the discrete features of the $i$th observation. Then, if any of these itemsets appears less than $\sigma$ times within the data set, the score of the $i$th observation will be augmented in a way that is inversely proportional to the product of the support and the square of the length of $d$. Therefore, infrequent itemsets of smaller length are considered more likely to be outliers. Similar formulations of the discrete score were proposed by \cite{otey2006fast} and then by \cite{koufakou2010}. Moreover, we apply support-based pruning, so that supersets of infrequent itemsets are ignored in subsequent computations to reduce the computational cost. What this means in practice is that if an observation $\boldsymbol{X}_i$ includes a sequence of categorical variable levels $d_1$ which appears less than $\sigma_{d_1}$ times, then the contribution of any other sequence $d_2 \subseteq \boldsymbol{X}_{i,D}$ with $d_1 \subset d_2$ (and therefore $\lvert d_1 \lvert < \lvert d_2 \lvert)$ in $s_{D,(i,\cdot)}$ will be immediately set equal to zero.

Looking at Expression \eqref{eq:discretescorefinal}, we observe that it depends on the values of the parameters $\maxlen$ and $\sigma_d$. We recommend the use of a data-driven method for estimating reasonable values for the parameters. Starting with the threshold value $\sigma_d$, we would expect that in a completely uniform setting, each itemset has equal support. For instance, if the itemset that is being considered involves two discrete variables $j$ and $j'$, with $\ell_j$ and $\ell_{j'}$ levels respectively, then its support should be approximately equal to $n/\left(\ell_j \ell_{j'}\right)$, where $n$ is the number of observations. We could look at the occurrence of each itemset as a realisation of a Multinomial random variable with one trial (also known as the categorical or the multinoulli distribution), and set the number of events equal to the amount of itemsets that can be observed for the categorical variables involved. We can then construct simultaneous $100(1-\alpha)\%$ confidence intervals for the Multinomial probabilities of a $\mathrm{Multinomial}\left(1, \left\{ \left(n/\prod_{j \subseteq d}\ell_j\right), \dots, \left(n/\prod_{j \subseteq d}\ell_j\right) \right\} \right)$ random variable. Notice that $\prod_{j \subseteq d} \ell_j$ refers to the product of the levels of all categorical variables $j$ which are being considered in itemset $d$ and the vector of probabilities is of length $\prod_{j \subseteq d} \ell_j$. The construction of simultaneous confidence intervals for the event probabilities is done using the method described in \cite{sison1995simultaneous}. The confidence intervals will be identical, due to the expected proportions also being equal. Hence, once the confidence intervals have been constructed, we can take the lower bound of just one of them and multiply it by the number of observations $n$ to get a threshold value for the support of any itemset $d$ that involves the categorical features considered. We decide to set $\alpha = 0.01$ in all our simulations presented here, in order for the discrete score to be augmented only for observations with very infrequent levels or combinations of levels; a greater $\alpha$ value makes the algorithm more `conservative' by penalising less infrequent itemsets. This also helps alleviate issues caused by mild class imbalance, as it may just happen that a categorical level is observed less frequently than the rest, without necessarily being a sign of outlyingness; of course in case a certain level is extremely infrequent within a categorical variable, all observations including this will have an increase in their discrete score.

As we consider itemsets of greater length, the number of combinations of categorical levels that can be observed, which we denoted earlier by $\prod_{j \subseteq d}\ell_j$, will grow very large. This leads to a sparse representation of the data and consequently, we may encounter $\sigma_d<2$ for some itemset $d$. As a result, every sequence of levels of the categorical variables involved in $d$ will be considered as frequent and thus, there is no point in looking at itemsets of greater length. Therefore, we choose $\maxlen$ to be the largest number of categorical features for which any randomly chosen combination of $\maxlen$ discrete variables will yield $\sigma_d \geq 2$, where $d$ is any itemset of length $\maxlen$. This can be expressed as:
\begin{equation*}\label{eq:maxlendef}
    \maxlen = \max\left\{M: \sigma_d \geq 2 \ \mathrm{for \ all } \ d \subseteq\boldsymbol{X}_{i,D} \ \mathrm{with} \ \lvert d \rvert \leq M \right\}.
\end{equation*}

The final component of Expression \eqref{eq:discretescorefinal} consists of accounting for the association among discrete features. Strongly correlated discrete variables can be viewed to contain enough information for each other so that we can predict the level of one given the other. As a result, the combination of specific discrete levels may not be so frequent within the data set, without necessarily being outlying. A solution to this problem requires quantifying the nominal association between pairs of categorical features. We use the Uncertainty Coefficient (also known as Theil's $U$ coefficient) \citep{theil1970estimation} for this task; the $U$ score between two categorical variables $j$ and $j'$ is given by:

\begin{equation*}
    u_{j,j'} \defeq U\left(\boldsymbol{X}_{D_j} \mid \boldsymbol{X}_{D_{j'}}\right) = \frac{H\left(\boldsymbol{X}_{D_j}\right)-H(\boldsymbol{X}_{D_j}\lvert \boldsymbol{X}_{D_{j'}})}{H(\boldsymbol{X}_{D_j})},
\end{equation*}
where $H\left(Y\right)$ denotes the Shannon entropy of a discrete random variable $Y$ \citep{shannon1948mathematical}. Theil's $U$ score is not symmetric, meaning that in general $u_{j,j'} \neq u_{j',j}$, and it is affected by a different number of levels of the categorical features $j$ and $j'$. The upper threshold value for $u_{j,j'}$ is determined via a simulation strategy that considers pairs of distinct categorical variables $j$ and $j'$, samples $n$ realisations from a bivariate zero-mean Gaussian distribution with the correlation of the two features equal to 0.35 and then uses quantile discretisation with $\ell_j$ and $\ell_{j'}$ levels respectively. The values of $u_{j,j'}$ and $u_{j',j}$ are computed, their maximum is chosen and the process is repeated a total of fifty times. The mean of the maxima obtained is finally used as an upper threshold value $u_{j,j'}^\mathrm{upper}$; if $u_{j,j'}>u_{j,j'}^\mathrm{upper}$ or $u_{j',j}>u_{j',j}^\mathrm{upper}$, any itemsets $d$ of length at least equal to two that include the variables $j$ and $j'$ are set not to contribute in $s_{D,(i,\cdot)}$. This procedure is developed based on the fact that a higher correlation (in absolute value) prior to discretisation yields a greater $U$ score once the categorical features have been generated. The choice of a correlation of 0.35 between the two components of the bivariate Gaussian ensures that any association among discrete features is sufficiently weak and even a correlation of 0.4 can lead to an inflation of the discrete scores, based on an empirical study that we conducted. This additional restriction is referred to as the `correlation correction'.

The aforementioned scheme for calculating scores of outlyingness for discrete variables can be taken one step forward. More precisely, not only will we be computing the discrete score for each observation, but we will also be calculating the contribution of each categorical variable to this score. This can facilitate understanding of the source of outlyingness for marginal outliers detected by their discrete features. For any discrete variable $j$, its contribution to the discrete outlier score of an observation $i$ is given by:

\begin{equation}\label{eq:discretescorevarfinal}
    c_{D,(i,j)}=\sum_{\substack{d \subseteq \boldsymbol{X}_{i,D}: \\ j \subseteq d, \\\operatorname{supp}(d)<\sigma_d, \\ |d| \leq \maxlen, \\ \left\{\left\{k, k'\right\}: u_{k,k'}>u^{\mathrm{upper}}_{k,k'}\right\}\nsubseteq d}} \frac{1}{\operatorname{supp}(d) \times\lvert d \rvert^3}, \quad i=1,\dots,n, \quad j=1,\dots,p_D.
\end{equation}

Expression \eqref{eq:discretescorevarfinal} looks very similar to Expression \eqref{eq:discretescorefinal} but it has two main differences. The first one is that we are summing over sequences $d$ for which it holds that $j \subseteq d$; hence, any combination of discrete variables that yields an increase of the discrete score of observation $i$ will only be considered if the discrete variable $j$ is included in this combination. Secondly, the denominator now includes the cube of the length of the sequence $d$; this is because of the equal contribution of each discrete variable included in $d$, which is given by $1/\lvert d \rvert$. In fact, the contribution of each discrete variable to the discrete score of each observation can be stored in a $\left(n \times p_D\right)$-dimensional matrix with $(i,j)$th entry given by $c_{D,(i,j)}$. This matrix has the property that the sum of the $i$th row is equal to $s_{D,(i,\cdot)}$ and up to $\maxlen$ elements in each row can be non-zero.

We calculate the discrete scores for an artificial data set with a thousand observations, five discrete and five continuous features using Expression \eqref{eq:discretescorefinal}. We plot these against the continuous scores for each observation (calculated using the methodology described in the following subsection) on what we call a `score profile' plot. The inliers all have a zero discrete score and marginal outliers are well-separated from inliers and joint outliers, thus being much easier to detect. The experimental design is described in more detail in Appendix \ref{appendix:design}.
\vspace{-0.25cm}
\begin{figure}[h!]
  \centering
  \includegraphics[scale=0.6]{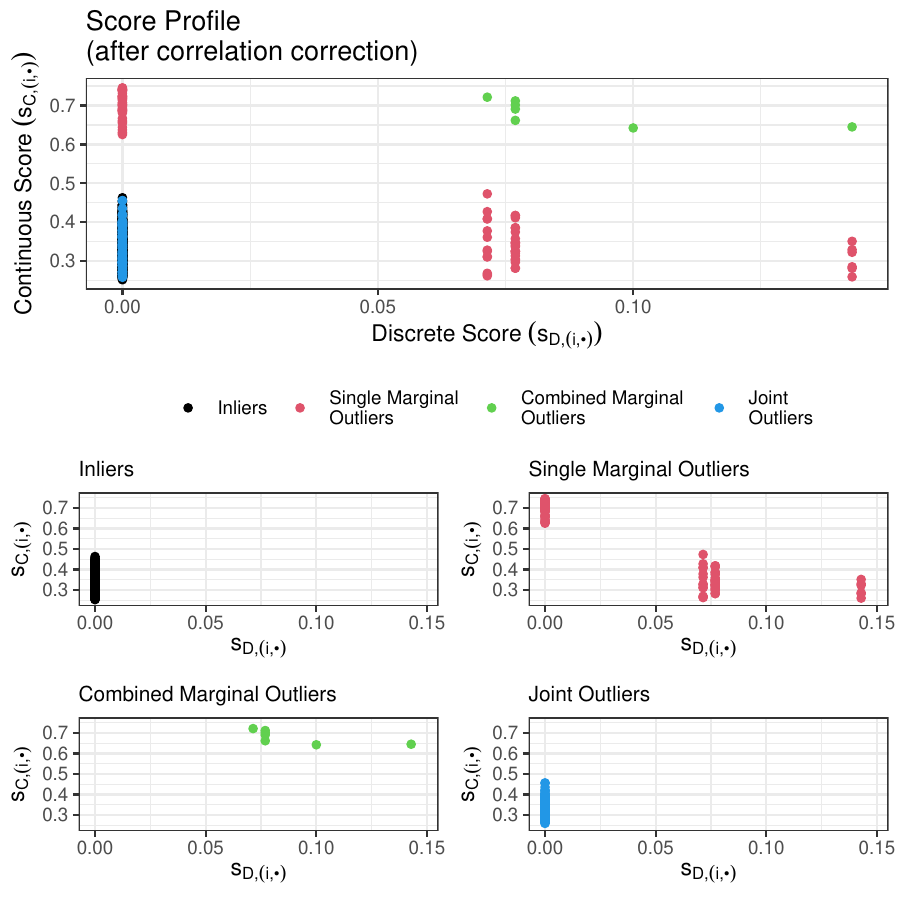}
  \vspace{-0.1cm}
  \caption{Score profile for artificially generated data set. Discrete scores have been computed using Expression \eqref{eq:discretescorefinal}.}
  \label{fig:score_profile_after}
\end{figure}
\subsection{Continuous Score}

Our definition of outliers in the continuous space states that a point is marginally outlying in the continuous domain if it is far away from the rest of the observations. This motivates the use of the Isolation Forest algorithm of \cite{liu2008isolation} for computing the continuous score of outlyingness for an observation, which we briefly describe here. The main idea of the Isolation Forest algorithm is that anomalies are few and easy to isolate from the rest of the data, so in multiple random partitionings of the observations of a data set, the anomalies will require a smaller number of splits to be isolated.

An Isolation Forest is built using an ensemble of Isolation Trees, where each Isolation Tree represents a random partitioning of the data. More precisely, the algorithm randomly selects a sample of the observations and a continuous feature $j$ from the set of continuous variable $\boldsymbol{X}_C$, on which a split is made. A random value $\psi$ in the range $\left[ \min \boldsymbol{X}_{C_j}, \max \boldsymbol{X}_{C_j} \right]$ is selected and the observations are partitioned in two sets, the first containing these data points with $j$th continuous variable value greater than $\psi$ and the second containing the rest. This process continues with a different continuous variable chosen for each split, up to a certain number of times. The partitioning of the data using these splits can be represented by a binary tree structure, where each parent node is a condition that defines a split on some continuous variable and the two daughter nodes include the points that satisfy the condition and these that do not satisfy it, respectively. An external node with no children means that one point has been isolated (and can therefore not be split any further), while the amount of times the splitting procedure is repeated is the height of the tree. The tree height is by default chosen to be equal to $\lceil \log_2 n_s \rceil$, where $n_s$ is the sub-sampling size and $\lceil \cdot \rceil$ is the ceiling function. This is motivated by the fact that the average tree height of a binary tree constructed from $n_s$ observations is equal to $\lceil \log_2 n_s \rceil$, hence any data points which have not been isolated after $\lceil \log_2 n_s \rceil$ splits are more likely not to be anomalous.

In order to construct an Isolation Forest, we need to grow multiple Isolation Trees on different sub-samples of the data. The rationale behind the use of sub-sampling is motivated by the problems of swamping and masking, corresponding to the wrong identification of inliers as outliers and the difficulty in detecting clustered anomalies, respectively. The number of sub-samples is empirically suggested by \cite{liu2008isolation} to be equal to $2^8$; a greater sub-sampling size increases the computational complexity of the algorithm. Once all Isolation Trees have been grown, the continuous score for an observation $i$ is calculated using:

\begin{equation}\label{eq:contscore}
    s_{C,(i,\cdot)} = 2^{-\frac{\mathbb{E}\left(h\left(\boldsymbol{X}_{i,C}\right) \right)}{c\left(n_s\right)}},
\end{equation}
where $\mathbb{E}\left(h\left(\boldsymbol{X}_{i,C}\right) \right)$ is the average depth of an observation $i$ in all the Isolation Trees grown. An anomaly is more likely to need a smaller number of splits until it is isolated, thus its average depth will be lower and this will clearly lead to a greater continuous score. The normalising constant $c\left(n_s\right)$ is the average depth in an unsuccessful search in a Binary Search Tree and it is equal to $c\left(n_s\right)=2H_{n_s-1}-2\left(n_s-1\right)/n_s$, where $H_{n_s}$ is the harmonic number and it is approximated by $\log{n_s}+\gamma$, with $\gamma\approx 0.57722$ being the Euler-Mascheroni constant \citep{bruno1999data}.

We make use of an extension of the Isolation Forest algorithm, called the Extended Isolation Forest \citep{hariri2019extended}. The main difference of the Extended Isolation Forest to the original Isolation Forest algorithm is that instead of defining splits across individual continuous variables, linear combinations of continuous variables are considered. The splits are no longer defined by hyperplanes which are parallel to the axis corresponding to the continuous variable on which the split is being made; inclined hyperplanes are used to partition the data into two regions, thus defining a split that takes more than just one continuous variable into account. This results in more accurate continuous score values, with some illustrative examples given in \cite{hariri2019extended}. For the rest of the paper, it can be assumed that we make use of the Extended Isolation Forest algorithm for continuous outlier detection, with a sub-sample size $n_s$ equal to $2^8$, 500 trees, a maximum height of 100 and all $p_C$ continuous variables are used for defining the splits. The choice of a maximum tree height of 100 may come in contrast to the recommendation of \cite{liu2008isolation} who suggest a maximum height of $\lceil\log_2n_s\rceil$ but we want to ensure that the only observations that take large $s_{C,(i,\cdot)}$ values are those that deviate significantly from the rest of the data points in the continuous domain.

\section{Detection of marginal outliers}\label{sec:detectmargouts}

Detecting the marginal outliers in a mixed data set is a key step of the process of identifying anomalies. Successful detection of the marginal outliers is essential for subsequent parts of the procedure and especially for the detection of joint outliers; having a big amount of marginal anomalies in the data may mask some interesting relationships among discrete and continuous variables. In this section, we present a method that can be used to flag atypical observations in the discrete space, as well as the strategy that is implemented for outlier detection in the continuous space. The methods presented are based on the score values $s_{D,(i,\cdot)}$ and $s_{C,(i,\cdot)}$ and are supported by properties of these scores which have been made upon visual inspection of score profiles for data sets with varying proportions of outliers, number of discrete levels for categorical variables and number of observations. We finally conduct a simulation study, from which it can be seen that our methods work well and can correctly flag the big majority of marginal outliers.

\subsection{Discrete space}

One of the main remarks that can be made when looking at the discrete scores is that they only take a small number of values, with many of them being exactly zero. This comes as no surprise, as in a data set of $n$ observations, we could potentially have $n$ distinct discrete scores in total. However, the fact that only a small proportion of points is assumed to be marginally outlying in the discrete space means that only a few of the observations are going to have a non-zero discrete score. Even the non-zero discrete scores corresponding to inliers (or joint outliers) are most likely small in value, since these should be equal to $1/(\mathrm{supp}(d) \times \lvert d \lvert^2)$, where $\lvert d \lvert$ is either equal to $\maxlen$ or to a value slightly smaller than that, thus yielding a small discrete score value, based on our definition of marginal outliers in the discrete space.

According to this observation, we implement $K$-Means on the set of discrete scores. In order to reduce the computational time and power required, we can consider the unique discrete score values obtained and perform one-dimensional $K$-Means on these. The choice of $K$ ranges from 1 up to the number of unique discrete scores, which is small enough to ensure a rapid implementation. Moreover, we know that there will be a cluster including points with zero score when running $K$-means, which we denote by $\mathcal{C}_0^K$. Running $K$-Means with $K=1$ implies that all data points will be assigned into $\mathcal{C}_0^1$. As $K$ increases, we expect the cluster that contains the most distant scores to be split, so as to ensure that the within sum of squares of each of the resulting two clusters (after the split) is reduced. Therefore, any low discrete scores that are very close to zero (thus corresponding to inliers) will remain in $\mathcal{C}_0^K$, until $K$ gets large enough. Looking at how the size of $\mathcal{C}_0^K$ varies with $K$ can then give us an indication of the number of clusters that need to be considered to flag marginal outliers. More precisely, we look at consecutive values of $K$ for which the size of $\mathcal{C}_0^K$ remains constant and greater than $n-\lceil (\rho + \epsilon )n \rceil$ upon removing any observations with an infrequent sequence of unit length from $\mathcal{C}_0^K$. Here, $\lceil \cdot \rceil$ refers to the ceiling function, with $\lceil x \rceil$ returning the smallest integer $x'$ such that $x' \geq x$, while $\rho$ is the maximum proportion of outliers we believe is included in the data and $\epsilon$ is an additional proportion of anomalies that we are willing to tolerate (with $\rho + \epsilon \leq 0.50$). We choose $\rho = 0.20$ and $\epsilon=0.02$ throughout the simulations presented in this paper, hence $0.22n$ as the maximum number of discrete outliers.

We further ensure that any observations with an infrequent level are flagged as marginal outliers, as these are in practice more likely to be outlying than e.g. observations with infrequent itemsets of a greater length. Recalling the definition of outliers in the discrete space as observations that include some highly unlikely discrete level or sequences of discrete levels, we expect to see a substantial difference between the minimum score of a marginal outlier in the discrete space and the maximum score of an inlier with a non-zero score, caused by the occurrence of itemsets $d$ slightly less times than their respective thresholds $\sigma_d$. In order to spot this difference, we look at consecutive differences of the absolute values of scaled scores (`scaled' here refers to the scores being shifted by the mean of all discrete scores and divided by their standard deviation). We define a difference in the scaled scores to be significant if it exceeds a unit, which we empirically find to be a good threshold. Notice that one can also obtain analytic expressions for the upper and lower bounds for the mean and the standard error of the discrete scores (see Propositions \ref{prop:avgdiscscore} and \ref{prop:sddiscscore} in Appendix \ref{appendix:proofs}). Once we have found these intervals of $K$ values for which $\lvert \mathcal{C}_0^K \rvert$ is constant, with $\lvert \mathcal{C}_0^K\rvert > n- \lceil (\rho+\epsilon)n \rceil$ upon removing any observations with scores less than the value for which a significant difference is detected, as well as observations containing an infrequent discrete level, we select the widest range of $K$ values for which all these conditions are satisfied and choose a value of $K$ from this range. We summarise the process in Algorithm \ref{code:marg_cat_outs} (Appendix \ref{appendix:algorithms}).

\subsection{Continuous space}

Once the marginal outliers in the discrete space have been flagged, we can discard these points from the process of detecting anomalous observations in the continuous space since their continuous scores are no longer relevant. Things become a bit more complicated now, since the continuous scores are most likely all going to be unique. However, the continuous scores obtained by the Extended Isolation Forest algorithm are all going to be relatively close for inliers; this follows easily by considering the way the scores are calculated by the Extended Isolation Forest algorithm (see Expression \eqref{eq:contscore}). Global outliers in the continuous domain will certainly need a much smaller number of splits to be isolated, thus leading to a much lower expected path length and as a result, to a greater score value. On the contrary, we expect that no inlier will need as few splits to be isolated, therefore the difference in score should be quite large between an inlier and an outlier. This is also expected to be the case for the difference between the score of the outlier that needs the most splits to be isolated and that of the inlier that needs the less splits to end in a terminal node. These two observations correspond to the outlier and the inlier with the lowest and the highest continuous scores respectively. Given that there is going to be a rather significant gap between these two scores, while the scores for the inliers are all going to be very dense, we seek to find large distances between the sorted continuous scores, in an attempt to identify the score value beyond which outliers are present.

In order to detect the large gaps between the sorted continuous scores, we look at the values of the differences between consecutive scores and find the values which are extreme. This is done using Chebyshev's Inequality \citep{chebyshev}, which states that for a random variable $Z$ with mean $\mathbb{E}(Z)$ and non-zero standard deviation $\sqrt{\mathrm{Var}(Z)}$, for any $\lambda \in \mathbb{R}^+$, it holds that $\mathbb{P}\left(\lvert Z - \mathbb{E}(Z) \rvert \geq \lambda \sqrt{\mathrm{Var}(Z)}\right) \leq 1/\lambda^2$. In our case, $Z$ is the random variable of the differences between consecutive sorted continuous score values and we define $m_C$ and $s_C$ to be the average and standard deviation of these differences, respectively. Since Chebyshev's inequality does not make any distribution-related assumptions on the random variable of interest, it can serve as a tool for detecting extreme values among the differences. Our approach is then straightforward; we consider integer $\lambda$ values in the range $[2,20]$ and we look at the amount of sorted differences $z_i$ satisfying $\lvert z_i - m_C \rvert \geq \lambda \times s_C$. This yields the set $\Delta_\lambda \vcentcolon= \left\{z_i : \lvert z_i - m_C \rvert \geq \lambda \times s_C \right\}$. As long as $\lvert \Delta_\lambda \rvert$ is constant and non-zero for many consecutive values of $\lambda$, that is an indication of some differences between consecutive score values being so large that only a big $\lambda$ value can further decrease $\lvert \Delta_\lambda \rvert$. Notice that observing differences that are below $20\times s_C$ can only happen with probability less than $0.25\%$, which is probably infinitesimally small enough to ensure that $\lvert \Delta_\lambda \rvert$ is constant for at least a few consecutive $\lambda$ values should any large gaps exist.

We plot the score profile for an artificial data set with 5000 observations, 8\% of which are marginal outliers in Figure \ref{fig:score_prof_gaps}, as well as a plot of $\lvert\Delta_\lambda\rvert$ against $\lambda$ for that data set. Figure \ref{fig:gaps_plot} reveals that $\lvert \Delta_\lambda \rvert$ keeps decreasing until it gets constant for $\lambda = 15$ and then drops by a unit for $\lambda\geq 19$. For $\lambda \in [15,18]$ we observe $\lvert\Delta_\lambda\rvert = 3$, meaning that two large gaps are detected. These are marked by the dashed lines in the score profile. The final step is to choose a lower threshold for the continuous score, beyond which our observations are flagged as outliers. We prefer to select the score value corresponding to the bottom line as long as this is above $0.4$ (it is very unlikely that an outlier will have a continuous score below 0.4 empirically). We also ensure that the total number of marginal outliers in the data set is again below $\lceil (\rho+\epsilon)n \rceil$. This is all summarised in Algorithm \ref{code:marg_cont_outs}.

\begin{figure}[h!]
  \centering
  \begin{minipage}[b]{0.475\textwidth}
    \includegraphics[width=\textwidth]{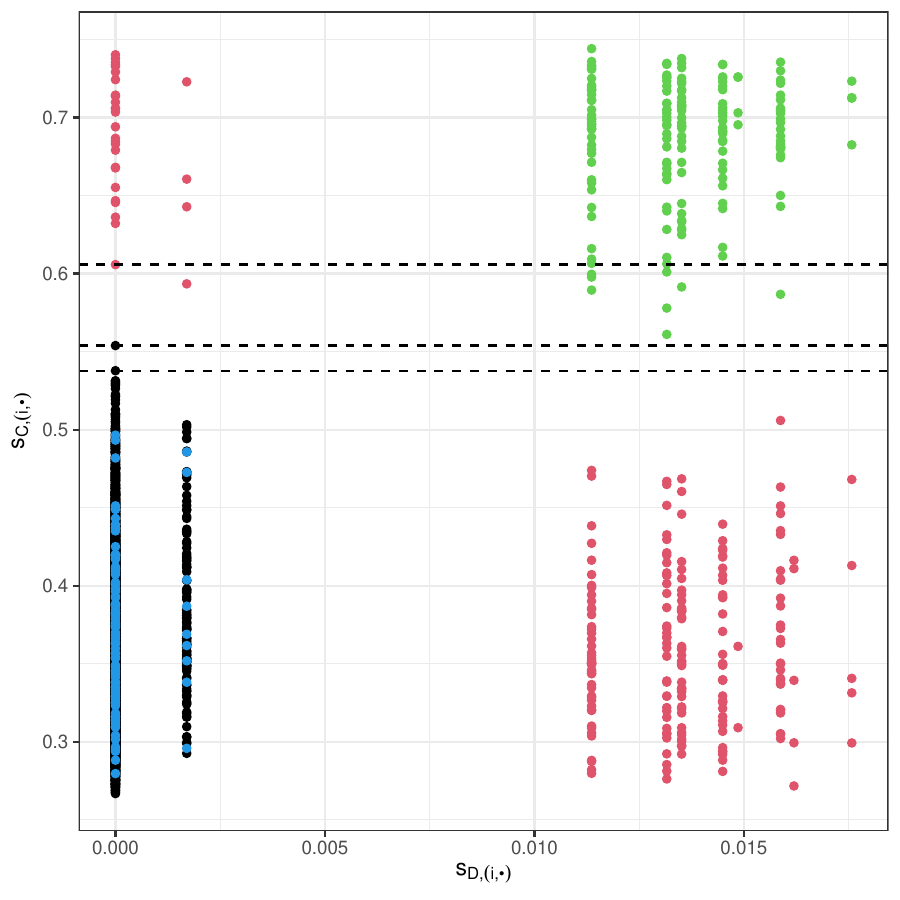}
    \caption{Continuous and discrete scores for data set; the dashed lines indicate the two large gaps between consecutive sorted continuous scores.}
    \label{fig:score_prof_gaps}
  \end{minipage}
  \hfill
  \begin{minipage}[b]{0.475\textwidth}
    \includegraphics[width=\textwidth]{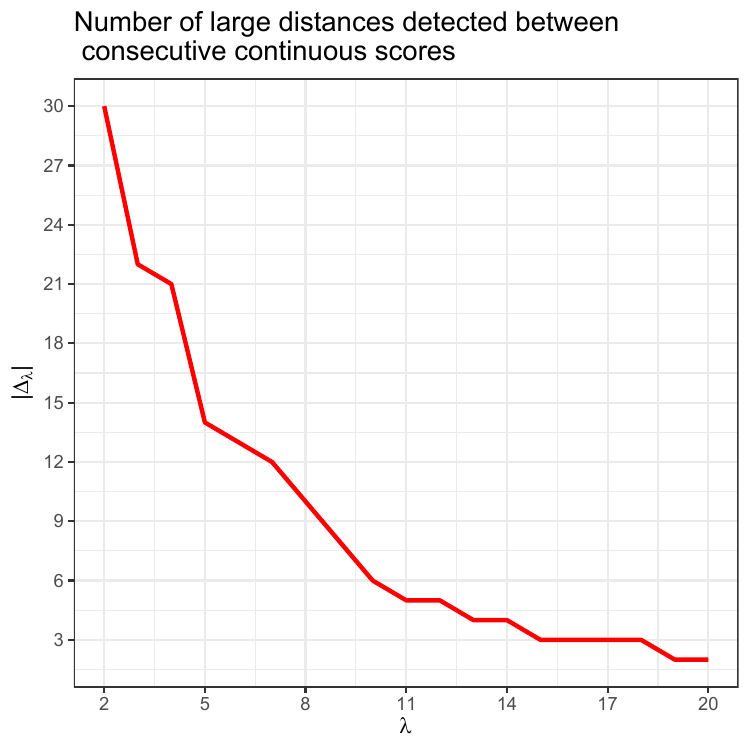}
    \caption{$\lvert \Delta_\lambda \rvert$ against $\lambda$; we see that $\lvert \Delta_\lambda \rvert$ remains constant for $\lambda \in [15,18]$, hinting two large differences between consecutive sorted continuous scores.}
    \label{fig:gaps_plot}
  \end{minipage}
\end{figure}

We have used Algorithms \ref{code:marg_cat_outs} \& \ref{code:marg_cont_outs} from Appendix \ref{appendix:algorithms} to detect the marginal outliers in artificial data sets with varying number of observations, proportion of marginal outliers and number of discrete levels. More precisely, we have experimented with data sets including one, three, five, seven and ten thousand data points. For each of these numbers of observations, we have considered proportions of outliers of 5\%, 10\%, 15\% and 20\% and for each proportion of outliers, the outliers in each data set consisted of 20\%, 50\% and 80\% marginal outliers (the rest being joint outliers). The categorical variables in each data set consisted of two up to seven discrete non-outlying levels. Notice that in all cases, the number of discrete levels for each categorical feature was set to be equal; letting this differ for each discrete variable would be interesting to check but it would lead to a prohibitively large number of simulations. A hundred artificial data sets were generated for each of these scenarios, except for the case of data sets with a thousand observations and seven discrete levels (the number of discrete levels is too large for such small data sets, thus the respective threshold values for itemsets of unit length lead to every single discrete level being considered as infrequent), leading to a total of 34800 data sets on which the two algorithms were tested. The results are summarised in terms of the recall and the F1 score; the recall is the proportion of marginal outliers which are correctly identified, while the F1 score also takes into account the number of non-marginal outliers that have been erroneously flagged as such.
\vspace{-0.4cm}
\begin{figure}[h!]
  \centering
  \includegraphics[scale=0.68]{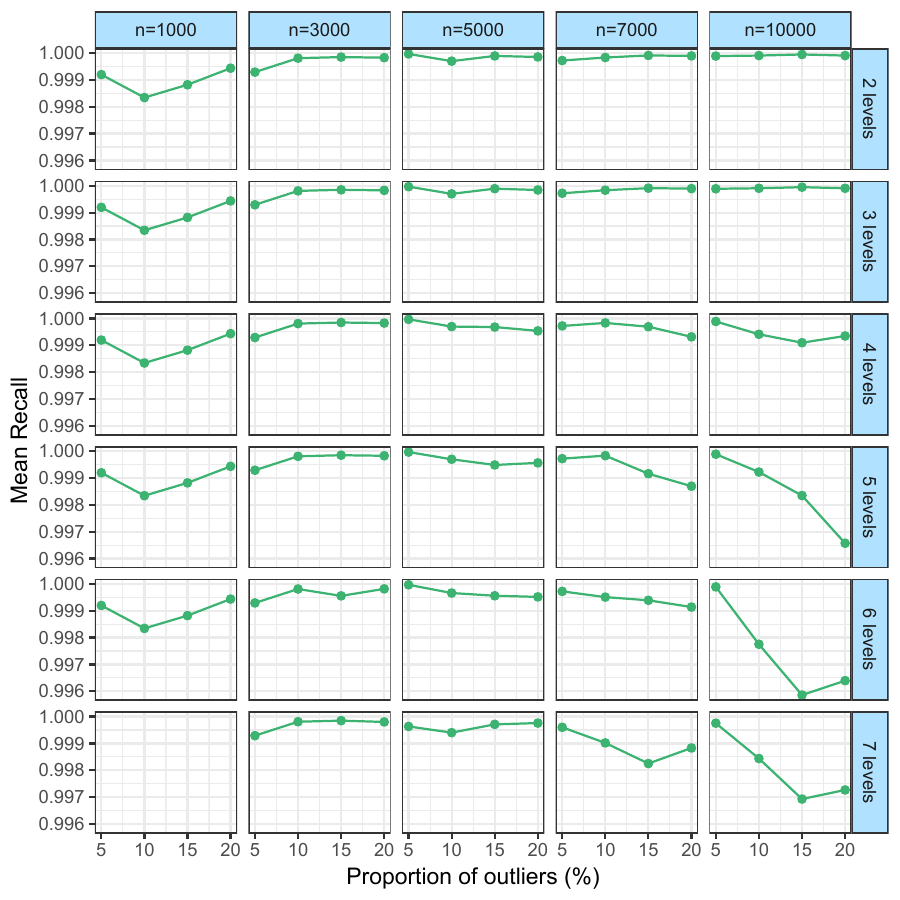}
  \vspace{-0.1cm}
  \caption{Mean recall for simulation study on detection of marginal outliers in artificial data sets for varying number of observations and discrete levels, as well as for increasing proportion of outliers.}
  \label{fig:res_marg_recall}
\end{figure}

As can be seen, our results in Figure \ref{fig:res_marg_recall} reveal that the recall is constantly above 99.5\%, hence the vast majority of marginal outliers are successfully flagged. The performance drops as the number of discrete levels increases, especially for larger data sets but the recall values remain high. A similar pattern is observed for the F1 score; Figure \ref{fig:res_marg_f1} in Appendix \ref{appendix:extrafigstables} shows that while a large number of marginal outliers are detected for large sample sizes, more inliers are also flagged as marginally outlying, as the number of levels increases. This comes as no surprise, as a larger number of observations yields a higher $\maxlen$ value, thus itemsets of greater length are considered, while a large number of discrete levels can lead to some infrequent sequences of length greater than a unit. However, the results remain at a reasonably good level, with not so many inliers being flagged as outlying and with almost every single marginal outlier being successfully detected.

\section{Detection of joint outliers}\label{sec:detectjointouts}

Working with mixed-type data imposes a significant challenge; dealing with the mixed-attribute space. While a plethora of algorithms exist for detecting anomalies in the continuous and the discrete space individually, we need to treat the mixed attribute domain carefully enough, so that we can flag outliers that may exist there. This is where we consider our definition of joint outliers. This definition is a special case of the concept of contextual outliers, which are observations that deviate from the rest of the data with respect to some context \citep{datamining2012}. In our case, the context is a set of continuous features, while the behavioural attribute (defining the characteristics of each observation with respect to the context) is one discrete variable at a time. We give an intuitive explanation to ease understanding. Assume a discrete feature is associated with a set of continuous variables so that projecting the data in the space spanned by this set and colouring each point according to the level of the discrete feature involved reveals a separation between the different levels. Any observation that is spotted to be in the `wrong region' based on its discrete variable level is then flagged as a joint outlier. In this section, we describe a method for finding any such associations and flagging observations that do not conform to it. We also introduce a method that aids the choice of suitable threshold values related to the process.

\subsection{Finding associations between discrete and continuous features}

A question that naturally arises from the definition of joint outliers is whether and how we can find any association among a categorical and a set of continuous features. We approach this by viewing the joint outliers as contextual anomalies, reducing our problem to that of context identification in contextual outlier detection.

Identifying the context with respect to which there exists a data structure is a problem that becomes increasingly difficult in the presence of variables which are not associated in any way with a target discrete feature of interest. \cite{liang2016robust} use ideas of `contextual neighbours' and `local expected behaviour' for identifying contextual outliers but these require knowing the context a priori, while \cite{zhang2016relevant} consider the sparsity of the variables for high-dimensional data sets as a context identification step. However, both approaches are restricted on contextual and behavioural features being continuous only. To the best of our knowledge, the only available algorithm for automatic context identification which can handle different types of data is the ConOut algorithm of \citep{meghanath2018conout}. Specifically for the case of a discrete behavioural variable and a set of continuous contextual features, they propose using the Kruskal-Wallis $H$ test \citep{kruskal1952use}. We consider this as a starting point and extend the idea by including an additional stage of testing.

The Kruskal-Wallis $H$ test checks for differences in average group ranks, where `group' refers to each level of the target discrete feature in this case. The use of ranks is a desirable property here due to the presence of outliers with respect to each level, thus making actual values prohibitive for the analysis, while the non-parametric nature of the test does not impose any distributional restrictions either. Fixing a target discrete variable $\boldsymbol{X}_{D_j}$ and having removed any marginal outliers, we perform this test for each continuous feature individually and compare the resulting $p$-value with a pre-specified significance level $\alpha_1$. Any continuous feature for which the test is rejected is considered to be a candidate belonging to the final set of contexts, which we denote by $\mathcal{J}$.

%Checking if the average ranks differ serves as a first indication of a structure between the different levels of $\boldsymbol{X}_{D_j}$ but it is not enough on its own.
While the $H$ test returns a list of candidate context features $\mathcal{J}$, we need to ensure that the levels are still `well-separated' with respect to the final context. Having a good level of `separation' in this case can be translated as the `core' of each level being surrounded mostly by observations of that same level, where the `core' refers to the most centrally-located point within the level with respect to some similarity measure. The choice of the similarity measure is defined in a manner that takes the geometry of the observations into account.

Should no structure exist, we expect that for any given observation, the proportion of its nearest neighbours of any level $l$ with respect to any distance metric will be roughly equal to that of the occurrence of level $l$ in the data set. Therefore, we can formulate the problem as a hypothesis test of the $\lceil \delta \times n_{j,l} \rceil$-nearest neighbours of the core of level $l$ following a Multinomial distribution with $\lceil \delta \times n_{j,l} \rceil$ trials and probability vector $\boldsymbol{\pi}_j = \left(\pi_{j,1}, \ldots, \pi_{j, \ell_j}\right)$. The value of $\delta$ is chosen to be at most equal to 0.50, as we seek to detect the most centrally located points of the $n_{j,l}$ observations for which $\boldsymbol{X}_{D_j} = l$. Moreover, $\pi_{j,l}$ is the proportion of these observations in the data set upon the removal of marginal outliers. We perform a Pearson's chi-square goodness of fit test for each of the $\ell_j$ levels independently and obtain $\ell_j$ $p$-values. The $p$-values are then compared to a pre-specified significance level $\alpha_2$, where the Holm-Bonferroni method \citep{holm1979simple} is applied to account for multiple testing considerations. If all $\ell_j$ tests reject the null hypothesis, then we can assert that a structure has been identified.

As mentioned earlier, the core of each level is detected with respect to a similarity measure that considers the geometry of the observations. More precisely, our method uses a weighted Minkowski distance, scaled by the eigenvalues obtained by performing the Robust Principal Component Analysis (ROBPCA) method of \cite{hubert2005robpca} to account for the variability along each dimension. The final step of the procedure is to check whether and which continuous variables need to be removed from the set of contextual features $\mathcal{J}$, where the product of the $\ell_j$ $p$-values for each subset of $\mathcal{J}$ determines what the final context should be (see Proposition \ref{prop:minpvalue} in Appendix \ref{appendix:proofs} for a justification in the context of the given problem). This procedure of combining $p$-values was introduced by \cite{zaykin2002truncated} and we refer the reader to the original paper for a more theoretical explanation. The process used for context identification is summarised in Algorithm \ref{code:context_id_1d} for one dimension and its extension in higher dimensions is included in Algorithm \ref{code:context_id_hd} (Appendix \ref{appendix:algorithms}). Notice that Algorithm \ref{code:context_id_hd} describes a backward elimination approach for determining $\mathcal{J}$; if this fails for the initial set of candidate features, a forward selection process (starting from pairs of features) is used instead.

We finally introduce three different types of relationships between the first discrete and a set of two continuous variables for illustrative purposes; these are referred to as the `Linear', the `Product' and the `Quotient' designs and are used for simulation study purposes. For the linear design, we define the levels of the first discrete variable by quantile discretisation of the difference of the values of the continuous features involved. Similarly for the product and the quotient designs, we discretise the values of the product and the quotient of these continuous variables, respectively. These designs, depicted in Figure \ref{fig:designs_2d_scatter}, are chosen to resemble some real-world examples (e.g. the Body Mass Index (BMI) which is computed as a ratio that leads to a classification into four categories, or the Air Quality Index (AQI), which also involves linear operations).

\vspace{-0.25cm}
\begin{figure}[h!]
  \centering
  \includegraphics[scale=0.75]{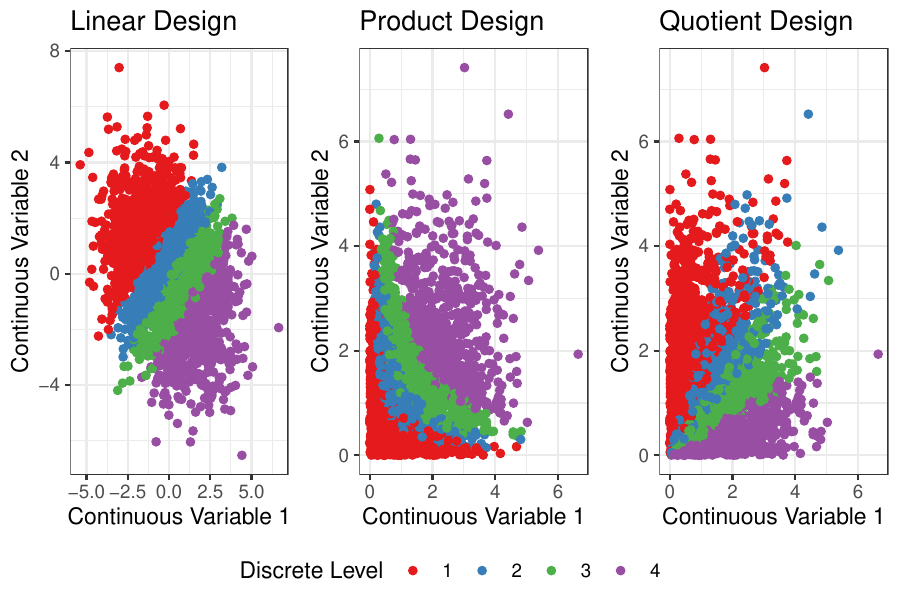}
  \vspace{-0.1cm}
  \caption{Illustration of the Linear, the Product and the Quotient designs on artificial data sets with 3000 observations, out of which 8\% are marginal and 2\% are joint outliers. The relationships are defined between the first two continuous features and the first discrete variable, consisting of four non-outlying discrete levels. The marginally outlying points are not plotted.}
  \label{fig:designs_2d_scatter}
\end{figure}

\subsection{Detecting joint outliers via density estimation}

Having identified whether any discrete variables in the data are associated with any set of continuous features, we can proceed by detecting the joint outliers. We will assume (without loss of generality) that there exists a relationship between the $j$th discrete variable $\boldsymbol{X}_{D_j}$ and the set of continuous features $\boldsymbol{X}_{C_\mathcal{J}}$, where $\mathcal{J}$ is a set of indices for the continuous features associated with $\boldsymbol{X}_{D_j}$. The joint outliers are the observations that do not conform to the existing data pattern, so that if we were to project the observations in the space spanned by $\boldsymbol{X}_{D_j}$ and colour them according to $\boldsymbol{X}_{D_j}$, then we would observe them in the wrong regions, meaning that they would be located in regions of high density of observations possessing a different level. Thus, flagging these outlying observations can be viewed as a density estimation problem.

The general idea of density estimation is that if we have multiple observations from some unknown distribution with density $f$, we seek to approximate $f$ using the data that is available to us. In our case, given the realisations $\boldsymbol{X}_{C_\mathcal{J}}$, our goal is to estimate $\ell_j$ densities, corresponding to the $\ell_j$ discrete levels of $\boldsymbol{X}_{D_j}$. We can use Kernel Density Estimation (KDE) to accomplish this. Given $n$ observations, the kernel density estimator for a density of interest $f$ is given by:
\begin{equation}\label{eq:kde}
    \hat{f}(x) = \frac{1}{nh}\mathlarger{\sum}\limits_{i=1}^n K\left(\frac{X_i-x}{h} \right),
\end{equation}
where $h$ is the bandwidth and $X_1, \dots, X_n$ are the realisations of a univariate random variable with density $f$ (the extension to a multivariate setting follows straightforwardly). The function $K$ is a kernel function, which is a non-negative function that satisfies some moment conditions \citep[for a more extensive discussion, see][]{parzen1962estimation}. A more detailed derivation of Expression \eqref{eq:kde} can also be found in \cite{loader2006local}.

Implementing KDE requires the choice of a value for the bandwidth, as well as of the kernel function, with the former being of vital importance if bias or variance levels are of interest. However, \cite{loader2006local} argues that the effect of the bandwidth value, as well as of the degree of the polynomial used in local regression (this is used to approximate the density function in a small neighbourhood of a point $x$) is small when KDE is to be used for classification purposes. Nevertheless, we choose an adaptive nearest neighbor bandwidth to avoid fitting problems caused by data sparsity. More precisely, pairwise distances between the fitting point $x$ and each observation $x_i$ are computed and the bandwidth is then chosen as the $k$th smallest distance, with $k=\lfloor n\alpha \rfloor$, where we set the smoothing parameter $\alpha$ equal to 0.3. The kernel function does not influence the classification results on a large scale either, so we proceed with the Gaussian kernel. A final but crucial remark is that KDE only makes use of the continuous variables $\boldsymbol{X}_{C_\mathcal{J}}$; the use of redundant features will lead to poor classification results, which highlights the great importance of correctly detecting the set of continuous features associated with $\boldsymbol{X}_{D_j}$.

Once the kernel density estimators for all $\ell_j$ levels of $\boldsymbol{X}_{D_j}$ have been constructed, we can do classification of each of the observed data points. We are not interested in the observations that were previously marked as marginal outliers, so we only classify the remaining observations in an attempt to detect the joint outliers. The classification rule here consists of looking at the level yielding a maximal density estimate at each point; thus for an observation $i$, the predicted level $\ell^*$ is given by:
$$
\ell^* = \argmax_{l=1,\dots, \ell_j} \hat{f}_l \left(\boldsymbol{X}_{D_j} \mid \boldsymbol{X}_{i,C_\mathcal{J}} \right),
$$where $\hat{f}_l(\cdot)$ denotes the KDE of the density of the $l$th level and $\boldsymbol{X}_{i,C_\mathcal{J}}$ represents the vector of values of the continuous variables indexed by $\mathcal{J}$ for the $i$th observation. By our definition of the joint outliers, in an ideal scenario we would expect these to be misclassified and that would complete the process of detecting outliers in the mixed-attribute domain. However, in many practical applications the different levels will be overlapping, such as what is shown in Figure \ref{fig:designs_2d_scatter} and these require some extra care.

In order to alleviate the aforementioned shortcoming, we look at the ratio of the maximum KDE value to that of the KDE for the true level of each observation that is misclassified. More precisely, assuming that the $i$th observation has true level $l^{\text{true}}$ but it is misclassified, we define the following ratio:
\begin{equation}\label{eq:kderatio}
\Lambda_i = \frac{\max\limits_{l=1, \dots, \ell_j}\hat{f}_l\left(\boldsymbol{X}_{D_j} \mid \boldsymbol{X}_{i, C_\mathcal{J}}\right)}{\hat{f}_{l^\mathrm{true}}\left(\boldsymbol{X}_{D_j} \mid \boldsymbol{X}_{i, C_\mathcal{J}}\right)}
\end{equation}
The ratio in Expression \eqref{eq:kderatio}
will always be greater than a unit under the assumption that the $i$th observation has been misclassified; had this not been the case, we would get $\Lambda_i=1$. Now inliers close to the boundary regions of the density that corresponds to their discrete level are expected to be misclassified but the corresponding value of Expression \eqref{eq:kderatio} is not expected to be too large. As a result, setting a threshold $\Lambda^*$ for $\Lambda_i$ could serve as a criterion regarding which observations are actually outlying in the mixed-attribute domain and which are just inliers; any observations for which $\Lambda_i > \Lambda^*$ will be treated as joint outliers and the rest of the misclassified points will be considered inliers.

One way of finding a suitable threshold value $\Lambda^*$ is by looking at the number of misclassified observations for which $\Lambda_i > \Lambda^*$ as $\Lambda^*$ varies. We expect this number to drop sharply as $\Lambda^*$ increases, with a significant drop for a small $\Lambda^*$ value and then reaching a plateau. The large drop is an indication of the inliers being treated as such if we allow for the threshold value of Expression \eqref{eq:kderatio} to be large enough, while the joint outliers, being observations that lie in completely wrong regions in the space spanned by $\boldsymbol{X}_{C_\mathcal{J}}$, should yield much larger ratio values. This is displayed graphically in Figure \ref{fig:Kplots}, where we have plotted the amount of misclassified observations satisfying the criterion $\Lambda_i > \Lambda^*$ for a range of $\Lambda^*$ values from one to twenty, in steps of half a unit (smaller steps could have been considered but that would increase the computational cost) for three different cases. These three plots were generated from artificial data sets consisting of 3000 observations, with the discrete variables including four discrete levels. An association between the first discrete and two continuous features was imposed using the product design, in order to define the joint outliers. The top left plot (Case 1) corresponds to a data set generated as described above with 5\% of its observations being outliers, 20\% of which are joint outliers (the remaining 80\% are marginal outliers). For the second and third data sets (Cases 2 and 3), we have contaminated 20\% of the observations, 50\% and 80\% of which are joint outliers, respectively.
\vspace{-0.5cm}
\begin{figure}[h!]
  \centering
  \includegraphics[scale=0.3]{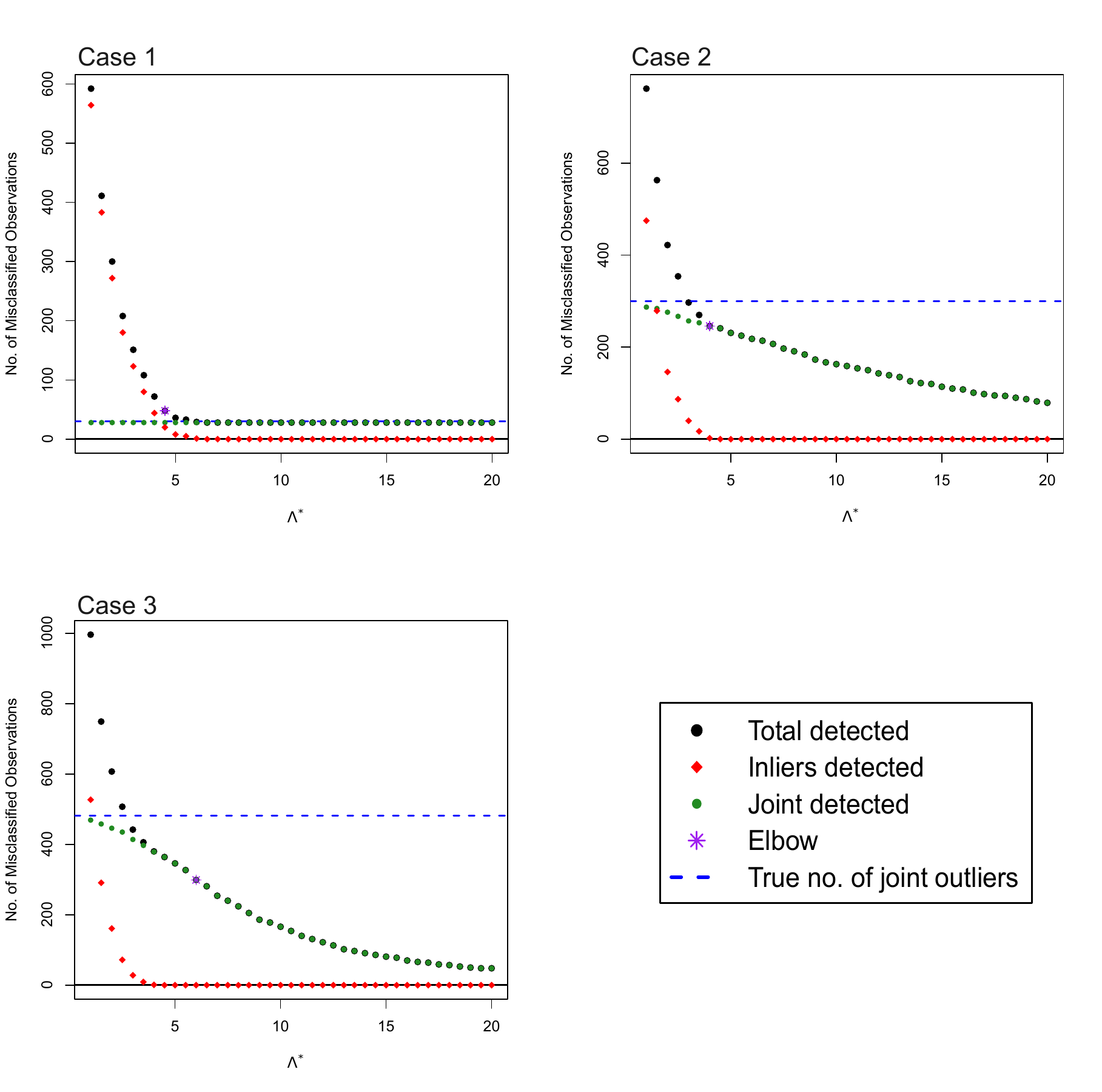}
  \vspace{-0.1cm}
  \caption{Number of misclassified observations satisfying $\Lambda_i > \Lambda^*$ against $\Lambda^*$ for the three cases described. The black dots correspond to the total amount of misclassifications, with the red and green points being the inliers and joint outliers included in this total, respectively. The blue dashed line shows the true number of joint outliers in the data set and the purple asterisk is the elbow point of the curve defined by the total number of misclassifications.}
  \label{fig:Kplots}
\end{figure}

Figure \ref{fig:Kplots} reveals some interesting patterns, as well as some issues that we may encounter when developing a strategy for choosing $\Lambda^*$. For instance, we can see the number of inliers (red diamonds) drops significantly as $\Lambda^*$ increases, which was expected. The number of joint outliers (green points) does not drop that sharply and is in fact almost constant (see instance in Case 1), although that does not hold for Cases 2 and 3. Moreover, although a lower value of $\Lambda^*$ closer to a unit can ensure a large detection rate of joint outliers, the number of erroneously flagged inliers is very large. We have also plotted the elbow point (purple asterisk) for each of the three cases. Detecting the elbow based on discrete data is achieved using the Kneedle algorithm of \cite{satopaa2011finding}, who define the knee of a graph as the point of maximum curvature when a curve is rotated in such a way that the minimum and maximum values on the vertical axis are co-aligned. Despite the Kneedle algorithm having been developed to detect the knee of a concave increasing curve, it can also be used to find the elbow when the curve is convex and decreasing (such as in our case). In a completely unsupervised setting, we would only have access to the total number of misclassified points (the black dots), so we could look at the elbow of the graph of misclassifications, making the Kneedle algorithm could be of great use. However, this algorithm fails to return the optimal $\Lambda^*$ value in some cases; in Case 1, it underestimates the value of $\Lambda^*$, leading to a bunch of inliers being treated as joint outliers, whilst in Case 3, there is no clear elbow due to the maximum curvature not being sufficiently large. Therefore, the method overestimates $\Lambda^*$ and yields a very conservative threshold value, missing a great amount of joint outliers.

\subsection{Method of consecutive angles}

Visual inspection of a graph like in Figure \ref{fig:Kplots} can be used to determine a reasonable $\Lambda^*$ threshold value for the detection of joint outliers. We propose a method which chooses the threshold value $\Lambda^*$ in a completely unsupervised way, which we refer to as the `method of consecutive angles'. The method relies on the remarks that were made earlier regarding the total number of misclassifications for which $\Lambda_i > \Lambda^*$ and attempts to keep the amount of misclassified joint outliers large, while the number of misclassified inliers remains at low levels. We provide a rough sketch of this method, which uses the rate of decrease as its main tool, instead of the curvature. In order to measure the rate of decrease of the curve, we compute the angle between the line segment joining the number of misclassified points satisfying $\Lambda_i > \Lambda^*$ for consecutive values of $\Lambda^*$ and the horizontal axis. Assuming that $N(\Lambda^*)$ is the number of misclassified points for which $\Lambda_i > \Lambda^*$, the angle between the horizontal axis and the line segment joining $N(\Lambda^*-0.5)$ and $N(\Lambda^*)$, denoted by $\theta_{\Lambda^*}$, is calculated by $\theta_{\Lambda^*} = \arctan\left\{2[N(\Lambda^*-0.5) - N(\Lambda^*) ]\right\}$, for $\Lambda^*= 1.5, 2, \dots, 20$. Our $\Lambda^*$ threshold value is the one for which $\theta_{\Lambda^*} = \theta_{\Lambda^*+0.5}$, meaning that a constant rate of decrease is reached.

Notice that a few considerations need to be made. For instance, the angles only measure the slope but having an equal slope for two consecutive line segments does not necessarily signify a good choice; if the decrease is very large (e.g. we may have 200 misclassified observations for $\Lambda^*$, then $150$ misclassifications for $\Lambda^*+0.5$ and $100$ misclassifications for $\Lambda^*+1$), then it is likely that this choice of $\Lambda^*$ is not ideal. We impose an additional restriction that $\Lambda^*$ is only chosen if $\theta_{\Lambda^*} = \theta_{\Lambda^*+0.5}$ and further $N(\Lambda^*) - N(\Lambda^*+0.5) < \gamma$, where $\gamma$ is a user-specified small integer (we use $\gamma$ for instance). If we are unable to find a $\Lambda^*$ value for which both conditions are satisfied or if that value is unreasonably large (here we consider $\Lambda^*=11$ as the maximum value that is deemed reasonable, in order to allow for some more flexibility), then we go back to the elbow as our $\Lambda^*$ choice. Since the elbow would not be a good choice for the first case, we look at a plot of the angles for different values of $\Lambda^*$ for this data set in Figure \ref{fig:angle_K1}.

\begin{figure}[h!]
  \centering
  \includegraphics[scale=0.5]{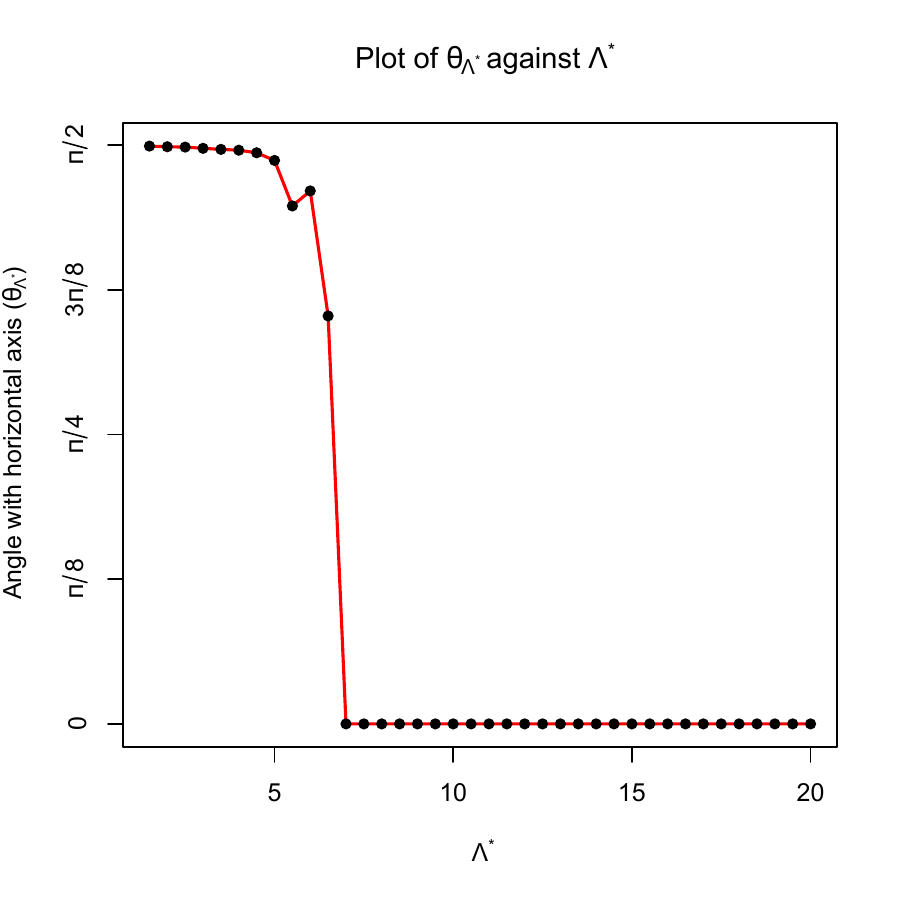}
  \vspace{-0.3cm}
  \caption{Angle with horizontal axis $\theta_{\Lambda^*}$ against values of $\Lambda^*$ for Case 1}
  \label{fig:angle_K1}
\end{figure}

As we can see, we now get a much better value of $\Lambda^*$, equal to 6.5. Using this method for Case 2, we get the exact same result as with the elbow, as a result of the flexibility of our proposed method to choose the elbow point in case it cannot find a value of $\Lambda^*$ satisfying both conditions that we defined earlier. However, when it comes to cases like the third one, the above method would return the $\Lambda^*$ value for which the elbow is detected and this would be a rather poor choice. A much lower value of $\Lambda^*$ would produce much better results in cases like the third one; despite the increase in the number of misclassified inliers, the increase in the number of joint outliers detected would be enough to give us a higher F1 score. We suggest choosing a small value $\Lambda^*_m$ (e.g. $\Lambda^*_m=2$ or 3) for such cases, to allow for some level of overlap of the estimated densities for each class but at the cost of some falsely misclassified inliers. An overview of the method of consecutive angles is outlined in Algorithm \ref{code:consecutive_angles} in Appendix \ref{appendix:algorithms}.

As a matter of fact, the method of consecutive angles is not guaranteed to work perfectly well for all possible settings. For instance, if the proportion of joint outliers is rather large (for instance 20\% of the whole data set) and these are all defined on a binary feature, it is probably safer to choose a much smaller $\Lambda^*$ value of 2 or 3. We recommend proceeding with the method of consecutive angles upon looking at the obtuse angle between the line segments joining the elbow point with the number of misclassified points for $\Lambda^*=1$ and $\Lambda^*=20$. This angle, which we will be referring to as the elbow angle from now on, is calculated by:
\begin{equation}\label{eq:elbowangle}
    \theta_\mathrm{elbow} = \arctan{\left(\left\lvert\frac{\Lambda^*_\mathrm{elbow}-1}{N\left(\Lambda^*_\mathrm{elbow} \right) - N(1)}\right\rvert\right)} + \arctan{\left(\left\lvert\frac{\Lambda^*_\mathrm{elbow}-20}{N\left(\Lambda^*_\mathrm{elbow} \right) - N(20)}\right\rvert\right)},
\end{equation}
where $\Lambda^*_\mathrm{elbow}$ is the value of $\Lambda^*$ where the elbow of the curve is located. A similar expression to \eqref{eq:elbowangle} is mentioned in \cite{zhao2008knee}, who use this to detect the knee of a curve in order to determine the optimal clustering based on the Bayesian Information Criterion (BIC) values obtained for different numbers of clusters. Typically, a small value of $\theta_{\mathrm{elbow}}$ is an indication that the method of consecutive angles is a decent choice, but this also depends on the number of levels of the discrete target feature we are targeting and the dimensionality of the problem. For a binary variable, choosing $\Lambda^*=2$ or 3 is probably the safest option but as the number of levels increases, the method of consecutive angles should be preferred. We have performed a simulation study using the same experimental design as before (Linear, Product, Quotient) for three and four dimensions and varying proportions of joint outliers and number of levels of the target discrete feature. We present the results of this study as guidelines on the choice of method in Table \ref{tab:thetathresh34cont} (Appendix \ref{appendix:extrafigstables}).

We finally look at the F1 scores obtained by selecting the optimal method based on $\theta_\mathrm{elbow}$ for each scenario. It is evident that the performance deteriorates as the number of continuous variables involved increases (see Figure \ref{fig:res_df_F1_Mean}). This comes as no surprise and is a common issue in multivariate kernel density estimation according to \cite{wand1994kernel}, with \cite{scott2015multivariate} having shown that when targeting a univariate and a ten-dimensional standard Gaussian density, the accuracy we get with 50 observations in one dimension can only be achieved with a sample size greater than $10^6$ in the ten-dimensional setting. Therefore, the curse of dimensionality is one of the main limitations of this approach, unless we have a very big number of data points that allows kernel density estimation to produce more reliable results.

\vspace{-0.3cm}
\begin{figure}[h!]
  \centering
  \includegraphics[scale=0.6]{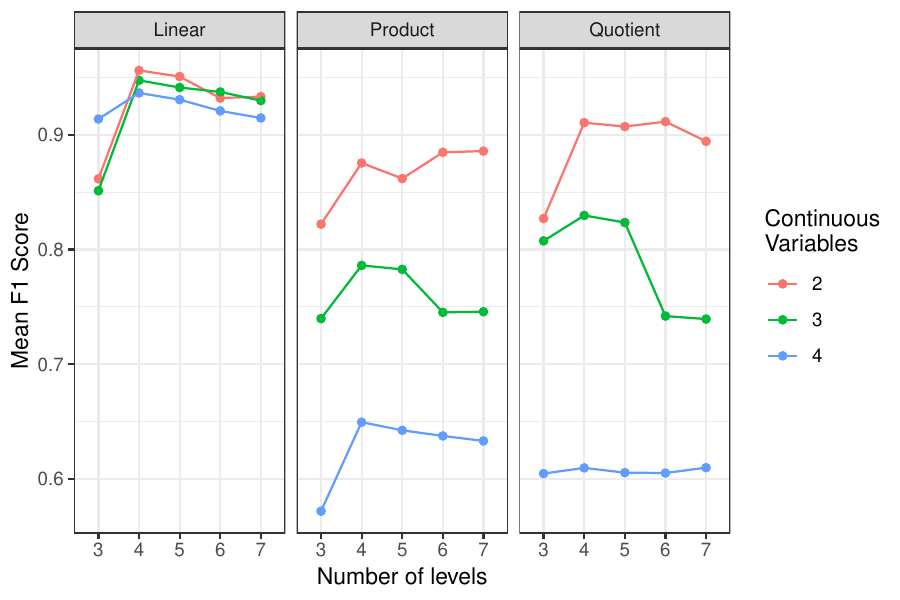}
  \vspace{-0.1cm}
  \caption{Mean F1 Score for linear, product \& quotient designs when there exists a relationship between one discrete with 3--7 levels and 2--4 continuous variables.}
  \label{fig:res_df_F1_Mean}
\end{figure}
Another interesting observation can be made by comparing Figures \ref{fig:res_df_F1_Mean} \& \ref{fig:res_df_Recall_Mean}, which leads to a comparison of the F1 Score to the Recall. Both plots look similar, with similar patterns observed for the same design, number of continuous variables and discrete levels, while the actual values are also close to each other. This is a good indication that the corresponding Precision values are quite high, meaning that the majority of observations flagged are joint outliers. Moreover, it is important to realise that the average Recall is in all but one scenario above 0.5, meaning that over half of the joint outliers are correctly identified. We would also expect these values to increase for more data points, as the curse of dimensionality seems to be the main reason why the Recall (as well as the F1 scores) drops significantly as more continuous variables are introduced. This is much more evident for the product and the quotient designs, where the boundaries of the densities corresponding to different levels are non-linear and should thus need more data points for kernel density estimation to produce more reliable results. 
\vspace{-0.3cm}
\begin{figure}[h!]
  \centering
  \includegraphics[scale=0.6]{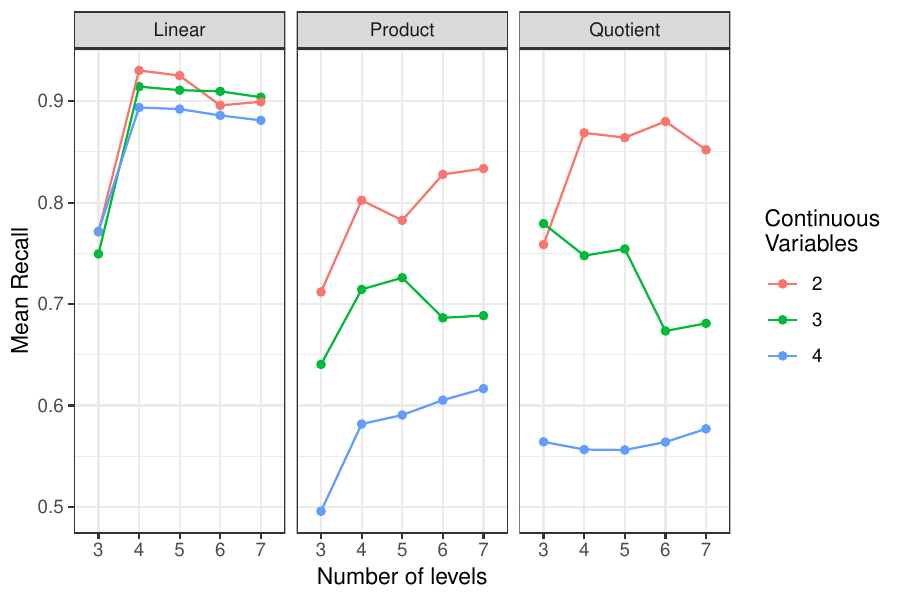}
  \vspace{-0.1cm}
  \caption{Mean Recall for linear, product \& quotient designs when there exists a relationship between one discrete with 3--7 levels and 2--4 continuous variables.}
  \label{fig:res_df_Recall_Mean}
\end{figure}
Notice that the aforementioned phenomenon, also known as the boundary bias problem in the density estimation literature, has been extensively studied and several mitigation proposals have been made. For instance, \cite{silverman1986density} suggests a method for alleviating this issue that is based on reflecting the data about its boundaries. However, this can not be applied in our case as the technique requires knowing the boundaries for each density in advance and these cannot be calculated from the data due to the presence of joint outliers.

\section{Conclusion}

In this paper we present a novel method for detecting outliers in mixed-type data. Our method consists of four steps; calculating scores of outlyingness for the discrete and the continuous features of each observation independently, flagging outliers in any of the two feature spaces, looking for associations between each discrete variable and the set of continuous attributes and detecting outliers in the mixed-attribute space. We assess the performance of our method on data sets with varying characteristics, such as the proportion of outliers within the data, the number of observations, the number of discrete levels of the discrete features and the dimensionality of the space spanned by the continuous variables associated with a discrete feature. Considerably high levels of detection accuracy and a low number of non-outlying observations being falsely flagged as anomalous were achieved.

We defined two types of outliers that a user may encounter within a data set consisting of both discrete and continuous features. While most algorithms only consider observations which are outlying in either just the discrete or the continuous space (these are the marginal outliers, as we have defined them), we have argued that anomalies can also exist in the mixed-feature space and we have defined these to be the joint outliers. The sets of marginal and joint outliers are completely disjoint, with the latter including data points which violate any existing association between a discrete and a set of continuous features.

The calculation of the scores for the discrete variables has been inspired by the ODMAD algorithm of \cite{koufakou2010}. Our revised definition of the score of outlyingness for a discrete feature requires no user-provided threshold values, as these may lead to very inaccurate results; instead, we present a strategy that determines these threshold values in an automated manner, while providing some theoretical results regarding the discrete scores. We further considered and alleviated some of the shortcomings of the definition of the ODMAD score, such as the correlation correction, reducing the computational cost of the process. These slight amendments have led to the discrete scores of the discrete features of the marginal outliers being higher than the respective scores for observations which are not anomalous in the discrete space. As a result, the detection of outliers in the discrete space has been successful in a series of simulations conducted, with very few to none non-outlying observations being detected. However, a comparison to the performance of ODMAD or other less hands-free algorithms for detecting outliers in a mixed-attribute space is infeasible due to the hyperparameter selection step required by such methods, which can yield substantial variations in performance. We finally defined the contribution of each discrete feature to the discrete score.

We used the Extended Isolation Forest algorithm of \cite{hariri2019extended} to compute the scores of the continuous features. In this case, our choice of method has relied on the definition of outliers in the continuous space that is being implied by the algorithm. Having calculated both the discrete and the continuous scores for each observation, we have plotted these against each other on what we call a `score profile' plot. Visual inspection of the score profiles generated for multiple data sets with different characteristics has aided us in developing a method for detecting marginal outliers, which has accomplished an average detection rate of over 99.5\% in our simulation study. Furthermore, the number of falsely detected non-outlying observations has been almost zero for most of the data sets that the method was tested on, with this number starting to grow larger for more than four discrete levels.

The third step of our method deals with identifying associations between discrete and continuous features, upon the removal of the marginal outliers. More precisely, we have devised a procedure that makes use of statistical tests to identify whether a set of continuous features is associated with the levels of a discrete feature. Detecting an association then leads us to the final step of projecting the data in the space spanned by the continuous features identified and using Kernel Density Estimation for predicting the discrete level of the categorical variable for which a relationship was found. We have proposed using a KDE ratio for misclassified observations and have developed a method for calculating a suitable threshold value for this, beyond which any misclassified observations are treated as joint outliers. The strategy deployed has performed very well overall, always detecting over half of the joint outliers on average, albeit a significant amount of non-outlying observations being flagged sometimes as well.

Our proposed method may have managed to deal successfully with the detection of outliers in all three spaces (discrete, continuous \& mixed-attribute), yet it presents certain limitations. As the number of discrete levels gets larger, we have seen that the amount of falsely detected non-outlying observations has increased as well. While in practice one would not expect to encounter a discrete feature with a very large number of levels that frequently, there may be cases of data sets including variables such as the postcode of an individual which can even take thousands of unique values. This is something that may require additional care; for instance, some aggregation may need to be done in advance, so that the number of unique levels decreases. However, this could potentially lead to a loss of valuable information that might be essential for meta-analysis. Additionally, the two-stage test used to detect associations between a discrete and a set of continuous features may fail to discover a structure in instances of multimodal distributions of discrete levels when projected in a set of continuous variables. Finally, although Kernel Density Estimation works and produces rather good results, it is important to be aware of its limitations which we have already mentioned in the relevant section. The curse of dimensionality does play an important role here and if a complex association between a discrete feature and a large number of continuous variables is present, we cannot be certain of a good performance of our method. One may argue that such a relationship (involving a large number of continuous variables) is not so frequent but this is still a possibility that could lead to an abundance of falsely flagged non-outlying observations.

We conclude by giving some recommendations on how the presented method could be used for extending the research that is currently being conducted in certain fields. An interesting application would be that of using our method as a tool for robust cluster analysis of mixed-type data. Enhancing the robustness of clustering algorithms is a challenging problem that has been attempted by multiple researchers \citep[for example, see][]{peel2000robust, browne2015mixture, punzo2016parsimonious, coretto2016robust}, although very few of them have worked with data of more than one type. To the best of our knowledge, there does not exist any non-model-based robust clustering algorithm that can handle data of mixed-type, which highlights the potential for future research on the field. We seek to develop such an algorithm which could possibly improve interpretability of the clustering output. This would require making further adjustments so that the definition of an outlier complies with that of a cluster, which is what has been done in several existing robust clustering algorithms for continuous data, such as Trimmed K-Means \citep{cuesta1997trimmed} and which poses significant challenges as we are working in the mixed-attribute space. It is worth noting that the effect of joint outliers (which are a specific type of contextual outliers in the mixed-attribute domain) in the clustering output has not been explored so far. One could also make use of this method for developing more efficient linear regression techniques; extremities are commonly the reason behind biased parameter estimates and misleading predictions of a regression model. Being able to detect anomalies in advance could help mitigate these problems and some of the ideas presented above could be used in conjunction with concepts used in existing robust regression techniques for the development of even more robust regression algorithms for mixed-type data. These recommendations could pave the way for future research opportunities and culminate in the development of novel methodologies which will in turn provide insight on research questions of interest.

\section*{Software}

The \texttt{R} package \texttt{DOMID} (Detecting Outliers in MIxed-type Data) contains the code and the functions used for the implementation of the method proposed in this paper and we refer the readers to use \texttt{DOMID} for reproducing the simulations described and their findings. The package can be installed through \texttt{GitHub} using \texttt{devtools} by accessing the repository in: \url{https://anonymous.4open.science/r/DOMID-1918/}.

%\backmatter

%\bmhead{Supplementary Material}

%\bmhead{Acknowledgments}
%EC gratefully acknowledges funding provided by EPSRC’s StatML CDT.

%%===================================================%%
%% For presentation purpose, we have included        %%
%% \bigskip command. please ignore this.             %%
%%===================================================%%
\bibliography{main.bib}% common bib file
\newpage
\begin{appendices}

\section{Experimental Design}\label{appendix:design}

For a data set of $n$ observations and $p$ features, out of which $p_D$ are discrete and $p_C$ are continuous, we generate $n$ observations from a $p$-variate Gaussian distribution. The mean is set to be equal to a $p$-dimensional vector of zeros. The covariance matrix is a positive definite $(p \times p)$-dimensional matrix with random diagonal values (corresponding to the component variances) restricted in the range $[0.1, 5]$. The \texttt{genPositiveDefMat} function from the \texttt{clusterGeneration} package is used for this purpose and we set \texttt{ndim = }$p$, \texttt{covMethod = "unifcorrmat"}, \texttt{alphad = 5} and \texttt{rangeVar = c(0.1, 5)}. Discrete features are then obtained by discretising $p_D$ of the $p$ variables using quantile discretisation and by providing a number of discrete levels for each discrete feature. We explain how we generate marginal and joint outliers in the subsections below.

\subsection*{Marginal Outliers}

We denote the proportion of outliers in the data set by $q$, while $q_M$ and $q_J$ represent the proportions of marginal and joint outliers respectively. We randomly choose $n \times q_M$ observations, half of which will be outlying in the discrete space and the remaining will be outlying in the continuous domain. For the discrete outliers, we randomly sample an integer $z$ from $\{1, \dots, p_D\}$ for each of the $n \times q_M /2$ observations selected and choose $z$ discrete variables. We then change the level of these $z$ discrete features of the observation and set it equal to $\ell_j + 1$ for each discrete variable $j$ with $\ell_j$ levels. As an example, if $z=1$ for an observation $i$ and the binary feature $j$ is chosen, we set $x_{i,j} = 3$.

Continuous outliers are simulated in a similar manner to the discrete outliers. We once again choose an integer $z$, this time from $\{1, \dots, p_C\}$ and pick $z$ continuous features in random. We add or subtract fifteen units from the observation values for each of the continuous variables selected. The choice of fifteen is such as to ensure that even in the most extreme case of a maximum allowed variance of five units, adding or subtracting fifteen units can guarantee that the point is still far enough from the rest along each continuous component. Combined marginal outliers are finally generated by drawing an integer from $\{ 1, \dots , n\times q_M/2\}$, selecting this amount of discrete outliers and repeating the process for generating continuous outliers on them, so that they are outlying in both the discrete and the continuous domains.

\subsection*{Joint Outliers}

In order to generate $n \times q_J$ joint outliers, we first need to define associations between discrete and sets of continuous features. Assuming without loss of generality that we wish to impose a linear association between the first discrete and the first two continuous features, we start by looking at the quantiles of the differences between the values of the first two continuous variables. We use quantile discretisation with $\ell_1$ levels to produce new levels for the first discrete variable, where $\ell_1$ refers to the number of discrete levels of the first discrete feature (prior to the generation of marginal outliers). The observations that had been chosen to be marginally outlying in the discrete or in both domains remain unaffected. Then we randomly select $n \times q_J$ observations among the ones which are not marginally outlying and we change the level of their first discrete variable, ensuring that this is still a value from $\{ 1, \dots, \ell_1\}$. If we wish to generate $n_A$ associations, the process is repeated $n_A$ times (with the levels of the target discrete features being determined by the type of association we wish to enforce) and the number of observations for which the discrete level value is altered for each association is given by $\lfloor n \times q_J / n_A \rfloor$.

\section{Proofs}\label{appendix:proofs}
\setcounter{figure}{0}
\renewcommand\thefigure{B.\arabic{figure}}
\setcounter{table}{0}
\renewcommand\thetable{B.\arabic{table}}
\setcounter{equation}{0}
\renewcommand\theequation{B\arabic{equation}.\arabic{prop}}
\begin{prop}\label{prop:avgdiscscore}
The average of all discrete scores $m_D$ satisfies:$$
\frac{1}{n\left(\Xi-1\right)\left(\maxlen\right)^2} \leq m_D \leq \frac{p_D\left(\lvert \mathcal{E}\rvert -1\right)}{n},$$
where $\mathcal{E}$ is the set of all unique values of the discrete scores $s_{D,(i,\cdot)} \ \left(i\in\{1,\dots,n\}\right)$, $\lvert \mathcal{E}\rvert$ is the cardinality of $\mathcal{E}$ which is assumed to be greater than one and $\Xi$ is the maximum of all threshold values $\sigma_d$ for itemsets $d$ with $\lvert d \rvert = \maxlen$ (i.e. $\Xi=\max_{d: \lvert d \rvert = \maxlen}\left\{\sigma_d\right\}$).
\end{prop}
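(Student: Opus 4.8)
The plan is to work throughout with the total sum $n\,m_D=\sum_{i=1}^n s_{D,(i,\cdot)}$ and to treat the two inequalities separately, since each is a statement about this sum divided by $n$.

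For the \textbf{lower bound} the key observation is that $|\mathcal E|>1$ forces at least one observation to have a strictly positive discrete score, and every score is a sum of non-negative terms of the form $1/(\operatorname{supp}(d)\,|d|^2)$. Hence $n\,m_D$ is at least one such positive term, and in fact at least the \emph{smallest} value a single term can take. I would argue that a single contributing itemset makes its smallest contribution when the product $\operatorname{supp}(d)\,|d|^2$ is as large as possible: the length is capped at $|d|=\maxlen$, and an itemset of that length is counted only while $\operatorname{supp}(d)<\sigma_d\le\Xi$, so, support being integer-valued, $\operatorname{supp}(d)\le\Xi-1$. This gives a minimal single contribution of $1/\big((\Xi-1)\maxlen^2\big)$, and dividing by $n$ yields the stated lower bound.

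For the \textbf{upper bound} I would first rewrite $n\,m_D$ by exchanging the order of summation. The support-based pruning makes the event ``itemset $d$ contributes to the score of observation $i$'' equivalent to ``$d\subseteq\boldsymbol X_{i,D}$ and $d$ is a minimal infrequent itemset obeying the length and correlation-correction constraints'', and the latter is a property of $d$ alone. Consequently $d$ contributes to exactly $\operatorname{supp}(d)$ observations, so
\begin{equation*}
n\,m_D=\sum_{i=1}^n s_{D,(i,\cdot)}=\sum_{d}\operatorname{supp}(d)\cdot\frac{1}{\operatorname{supp}(d)\,|d|^2}=\sum_{d}\frac{1}{|d|^2},
\end{equation*}
the sum running over the admissible minimal infrequent itemsets. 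Since $|d|\ge1$, each summand is at most one, reducing the task to counting these itemsets and showing the count is controlled by $p_D(|\mathcal E|-1)$; I would do this by grouping itemsets according to the distinct positive score value they help realise (there are $|\mathcal E|-1$ of these) and using that the rare content of any observation is spread over at most $p_D$ variables.

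The \textbf{main obstacle} is this final counting step. The summation swap is clean, but the delicate part is relating the number (and $|d|^{-2}$-weighting) of admissible minimal infrequent itemsets to the number of \emph{distinct} attained scores $|\mathcal E|-1$, since many itemsets can share a score value and longer itemsets must not be over-counted; I expect the argument to lean on unit-length infrequent levels dominating, of which each observation has at most $p_D$, so that longer-itemset contributions are absorbed into this per-variable budget. A secondary point to verify carefully is the lower-bound claim that the globally smallest single contribution is attained at $|d|=\maxlen$ rather than at a shorter itemset sitting just below its own threshold; confirming that $(\Xi-1)\maxlen^2$ dominates $(\sigma_d-1)\,|d|^2$ across all shorter admissible lengths is the one routine inequality on which the lower bound rests.
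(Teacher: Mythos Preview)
Your lower-bound argument matches the paper's exactly: since $|\mathcal E|>1$ at least one observation has a positive score, the smallest conceivable single contribution is $1/\big((\Xi-1)\maxlen^2\big)$, and dividing by $n$ gives the bound. The secondary worry you raise---whether a shorter itemset sitting just below its own (larger) threshold could yield an even smaller contribution, i.e.\ whether $(\sigma_d-1)|d|^2\le(\Xi-1)\maxlen^2$ across all admissible lengths---is legitimate, and the paper does not verify it either; it simply asserts that the minimal nonzero score occurs at length $\maxlen$ with support $\Xi-1$.

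For the upper bound your route diverges from the paper's. The paper does \emph{not} swap to a sum over itemsets. Instead it writes $m_D=\sum_{s_X\in\mathcal E\setminus\{0\}}s_X\,q(s_X)$ with $q(s_X)=n_{s_X}/n$, so the factor $|\mathcal E|-1$ appears immediately as the number of summands, and the entire task reduces to showing $s_X\,n_{s_X}\le p_D$ for every nonzero attained score value. The paper argues this by relating $\operatorname{supp}(d)$ to $n_{s_X}/r$ for some $r\le p_D$, but the justification is openly heuristic: it leans on length-one infrequent itemsets being dominant and uses phrases such as ``it is extremely unlikely'' and ``we can safely assume''. Your summation swap $n\,m_D=\sum_d 1/|d|^2$ over admissible minimal infrequent itemsets is algebraically cleaner and is genuinely correct, but it relocates the difficulty precisely to the spot you flagged as the main obstacle: tying the \emph{number} of such itemsets to $|\mathcal E|-1$. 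That link is unnatural in your decomposition (distinct itemsets can produce identical scores, and one score value can be realised by many itemsets), whereas the paper's grouping by score value gets $|\mathcal E|-1$ for free and buries all the softness inside the per-value inequality $s_X\,n_{s_X}\le p_D$. In short, both approaches hit the same weak point; the paper simply packages it differently, and if you want to align with the paper you should group by attained score value rather than by itemset.
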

\begin{proof}
    Let $S_X$ be the random variable that is associated with the discrete scores for a data set $\boldsymbol{X}$ consisting of $p_D$ discrete features and denote the realisations of $S_X$ by $s_X$. We also define $\mathcal{E}$ as the set that contains all unique discrete score values and denote its cardinality by $\lvert \mathcal{E} \rvert$. We clearly have that $0 \in \mathcal{E}$ and we assume that $\lvert \mathcal{E} \rvert > 1$, meaning that at least one discrete outlier exists or that even if no outliers are present in the data, then there exists at least one observation in $\boldsymbol{X}$ with non-zero discrete score. The probability mass function of $S_X$ is given by:$$
    p_{S_X}\left(s_X\right) = q\left(s_X\right)\mathbbm{1}\left\{s_X \in \mathcal{E}\right\},$$
    where $\mathbbm{1}\{\cdot\}$ is an indicator function and $q\left(s_X\right)$ is defined as the proportion of observations in $\boldsymbol{X}$ with discrete score $s_X$. More precisely:$$
    q\left(s_X\right) = \frac{n_{s_X}}{n} = \frac{\# \ \mathrm{of \ observations \ in \ }\boldsymbol{X} \ \mathrm{with \ discrete \ score \ }s_X}{\# \ \mathrm{of \ observations \ in \ }\boldsymbol{X}}.$$
    We now recall the definition of the discrete score as given in Expression \eqref{eq:discretescorefinal}:$$
    s_{D,(i,\cdot)}=\sum_{\substack{d \subseteq \boldsymbol{X}_{i,D}: \\ \\\operatorname{supp}(d)<\sigma_d, \\ |d| \leq \maxlen, \\ \left\{\left\{k, k'\right\}: u_{k,k'}>u^{\mathrm{upper}}_{k,k'}\right\}\nsubseteq d}} \frac{1}{\operatorname{supp}(d) \times\lvert d \rvert^2}, \quad i=1,\dots,n.$$
    Hence, we have that for an observation $i$ with discrete score $s_X$, the product $s_Xq\left(s_X\right)$ is equal to:$$
    \sum_{\substack{d \subseteq \boldsymbol{X}_{i,D}: \\ \\\operatorname{supp}(d)<\sigma_d, \\ |d| \leq \maxlen, \\ \left\{\left\{k, k'\right\}: u_{k,k'}>u^{\mathrm{upper}}_{k,k'}\right\}\nsubseteq d}} \frac{1}{\operatorname{supp}(d) \times\lvert d \rvert^2}\times \frac{n_{s_X}}{n}.$$
    We observe that $\operatorname{supp}(d)$ is the number of times each itemset $d$ included in $\boldsymbol{X}_{i,D}$ appears within $\boldsymbol{X}$. This is bounded above by $n_{s_X}/r$, where $1 \leq r \leq p_D$. Essentially, the support of an itemset $d$ is the one that yields the score $s_X$, while this could appear in more than just one discrete variable. As an illustrative example, we can consider, without loss of generality, the case of the first $p_D$ observations containing a unique discrete level for discrete features $1, \dots, p_D$, respectively and assume that no other infrequent itemsets are included in these observations. In such a case, the support of each infrequent itemset $d$ with $\lvert d \rvert = 1$ will be equal to a unit and the score for each of these $p_D$ observations will be equal to a unit as well, yielding $s_X=1$, $n_{s_X}=p_D$ and $r=p_D$. Notice that in fact, the value of $r$ could be greater than $p_D$; $r$ represents the number of unique combinations of features which produce the score $s_X$. However, if we were to consider itemsets of length two with unit support, it is extremely unlikely that we would observe these on all $C^{p_D}_{2}$ possible combinations of discrete variables and it is also very unlikely that such observations exist, without containing any infrequent discrete levels (which would lead to pruning). Moreover, the number of outliers in $\boldsymbol{X}$ is assumed to be small enough for us to safely assume that observations with infrequent itemsets of unit length are dominating in terms of the amount of observations with non-zero discrete scores. Thus, we can calculate the expectation of $S_X$ as:$$
    m_D = \mathbb{E}\left(S_X\right) = \sum\limits_{s_X \in \mathcal{E}} s_Xq\left(s_X\right) = \sum\limits_{s_X \in \mathcal{E}\backslash\{0\}} s_Xq\left(s_X\right)$$
    where using the fact that $\operatorname{supp}(d)=n_{s_X}/n$ and that $\lvert d \rvert \geq 1$, we get that $s_Xq\left(s_X\right)$ is bounded above by $p_D/n$. The lowest possible score, assuming there exists at least one observation with a non-zero discrete score, is attained for an infrequent itemset of length equal to $\maxlen$ which occurs one time less than the maximum threshold value out of all threshold values for itemsets of length $\maxlen$. Therefore, we obtain:$$
     \frac{1}{n\left(\maxlen\right)^2\left(\Xi-1\right)}\leq m_D \leq \frac{p_D\left(\lvert\mathcal{E}\rvert-1\right)}{n},
    $$
which is the required result.
\end{proof}
\setcounter{equation}{0}
\renewcommand\theequation{B\arabic{equation}.\arabic{prop}}
\begin{prop}\label{prop:sddiscscore}
    The sample standard deviation of the discrete scores, denoted by $s_D$, is bounded below by $L_B$ and bounded above by $U_B$, defined as:$$\begin{aligned}
    L_B & = \frac{1}{\sqrt{n}\left(\maxlen\right)^2\left(\Xi-1\right)},\\
    U_B & = \frac{1}{\left(\maxlen\right)^2\left(\Xi-1\right)}\sqrt{\frac{p_D\left(\lvert\mathcal{E}\rvert-1\right)\left(\maxlen\right)^4s_X^{\left(\lvert\mathcal{E}\rvert\right)}\left(\Xi-1\right)^2-1}{n(n-1)}},\end{aligned}$$
    where $s_X^{\left(\lvert\mathcal{E}\rvert\right)}$ is the maximum discrete score for a data set $\boldsymbol{X}$ with $p_D$ discrete features, $\mathcal{E}$ is the set of all unique values of the discrete scores $s_{D,(i,\cdot)} \ \left(i\in\{1,\dots,n\}\right)$, $\lvert \mathcal{E}\rvert$ is the cardinality of $\mathcal{E}$ which is assumed to be greater than one and $\Xi$ is the maximum of all threshold values $\sigma_d$ for itemsets $d$ with $\lvert d \rvert = \maxlen$ (i.e. $\Xi=\max_{d: \lvert d \rvert = \maxlen}\left\{\sigma_d\right\}$).
\end{prop}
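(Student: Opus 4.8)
The plan is to work from the sample-variance identity
\begin{equation*}
s_D^2 = \frac{1}{n-1}\sum_{i=1}^n\left(s_{D,(i,\cdot)}-m_D\right)^2 = \frac{n}{n-1}\left(\overline{s^2}-m_D^2\right), \qquad \overline{s^2}\defeq\sum_{s_X\in\mathcal{E}}s_X^2\,q(s_X),
\end{equation*}
reusing the probability mass function $q(\cdot)$, the per-value bound $s_Xq(s_X)\le p_D/n$, and the two-sided estimate for $m_D$ already established in Proposition \ref{prop:avgdiscscore}. Throughout I would write the smallest attainable non-zero score as $s_{\min}=\tfrac{1}{(\maxlen)^2(\Xi-1)}$ (the score produced by a single infrequent itemset of length $\maxlen$ occurring $\Xi-1$ times, exactly as identified in the proof of Proposition \ref{prop:avgdiscscore}) and the largest as $s_{\max}=s_X^{(\lvert\mathcal{E}\rvert)}$, so that $s_{\min}\le s_X\le s_{\max}$ for every non-zero score.

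For the lower bound $L_B$ I would argue that, subject to the standing assumptions $\lvert\mathcal{E}\rvert>1$ and $0\in\mathcal{E}$ (so at least one observation has score $0$ and at least one has score $\ge s_{\min}$), the sample variance is minimised by the extreme configuration in which a single observation attains $s_{\min}$ and the remaining $n-1$ observations have score $0$. Fixing the number of non-zero scores and minimising $\sum_j v_j^2-(\sum_j v_j)^2/n$ over $v_j\ge s_{\min}$ is a convex problem whose constrained minimiser places every non-zero score at $s_{\min}$; optimising over the count of non-zero scores then drives it down to a single one. Evaluating the variance at this configuration gives $s_D^2=s_{\min}^2/n$, i.e. $s_D\ge s_{\min}/\sqrt{n}=L_B$, which shows the bound is in fact attained.

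For the upper bound I would control the spread through the second-moment device of Proposition \ref{prop:avgdiscscore}. Writing the mean of squared scores as $\overline{s^2}=\sum_{s_X\in\mathcal{E}\setminus\{0\}}s_X\cdot\bigl(s_Xq(s_X)\bigr)$ and using $s_X\le s_{\max}$ together with $s_Xq(s_X)\le p_D/n$ over the $\lvert\mathcal{E}\rvert-1$ non-zero values yields $\overline{s^2}\le p_D(\lvert\mathcal{E}\rvert-1)s_{\max}/n$. Pairing this with the lower estimate $m_D\ge s_{\min}/n$ of Proposition \ref{prop:avgdiscscore}, so that $m_D^2\ge s_{\min}^2/n^2$, and substituting both into the variance identity is designed to deliver the numerator $p_D(\lvert\mathcal{E}\rvert-1)s_{\max}-s_{\min}^2$ over $n(n-1)$, which is $U_B^2$ once $s_{\min}=\tfrac{1}{(\maxlen)^2(\Xi-1)}$ and $s_{\max}=s_X^{(\lvert\mathcal{E}\rvert)}$ are restored.

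The step I expect to be most delicate is pinning down the multiplicative constant. The estimate $s_X^2q(s_X)\le s_{\max}p_D/n$ bounds the \emph{mean} of squares $\overline{s^2}$, not the raw sum $\sum_i s_{D,(i,\cdot)}^2=n\overline{s^2}$, while the sample-variance identity carries the factor $n/(n-1)$ rather than $1/(n-1)$; tracking this factor faithfully is exactly what separates the advertised $U_B$ from the looser alternative $\tfrac{1}{n-1}\bigl(p_D(\lvert\mathcal{E}\rvert-1)s_{\max}-s_{\min}^2/n\bigr)$. I would therefore have to be careful that the second-moment bound is inserted against the same averaged quantity that appears in the identity, leaning on the dominance heuristic of Proposition \ref{prop:avgdiscscore} (non-zero scores arising chiefly from unit-length infrequent itemsets) to license the per-value bound $s_Xq(s_X)\le p_D/n$ that drives everything. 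By comparison the lower bound is routine once the single-outlier configuration is recognised as the minimiser, the only thing to verify being that spreading the score mass over several small values cannot push the variance below $s_{\min}^2/n$.
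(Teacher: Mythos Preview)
Your approach coincides with the paper's on both halves: for $L_B$ the paper, like you, takes the configuration in which a single observation carries the minimal non-zero score $s_{\min}=1/\bigl((\maxlen)^2(\Xi-1)\bigr)$ and the remaining $n-1$ are zero, though the paper simply asserts this is the minimising configuration whereas you supply the convexity/boundary justification; for $U_B$ the paper uses exactly your chain $\sum_{s_X}s_X^2q(s_X)\le\tfrac{p_D}{n}\sum_{s_X\in\mathcal{E}}s_X\le\tfrac{p_D(|\mathcal{E}|-1)}{n}\,s_X^{(|\mathcal{E}|)}$ together with $m_D\ge s_{\min}/n$ from Proposition~\ref{prop:avgdiscscore}.

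Regarding the constant you flag as delicate: the paper does not start from your identity $s_D^2=\tfrac{n}{n-1}\bigl(\overline{s^2}-m_D^2\bigr)$ but instead writes $s_D^2=\bigl(\sum_{s_X}s_X^2q(s_X)-n\,m_D^2\bigr)/(n-1)$ and substitutes the two bounds directly into that, which lands on $U_B^2$ as stated. Your instinct that the standard sample-variance identity together with these same bounds produces the looser $\tfrac{1}{n-1}\bigl(p_D(|\mathcal{E}|-1)s_{\max}-s_{\min}^2/n\bigr)$ rather than $U_B^2$ is correct; the discrepancy sits in the variance formula the paper opens with (which differs from yours by a factor of $n$ on the second-moment term), not in anything you have done with the bounds.
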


\begin{proof}
    We first notice that the sample standard deviation for the discrete scores is given by:$$
    s_D = \sqrt{\frac{\sum\limits_{s_X \in \mathcal{E}}s_X^2q\left(s_X\right) - nm_D^2}{n-1}},$$
    and we know from Proposition \eqref{prop:avgdiscscore} that $m_D \geq 1/\left(n\left(\maxlen\right)^2\left(\Xi-1\right)\right)$. Moreover, from the proof of the aforemenetioned Proposition, we have that:$$
    \sum\limits_{s_X \in \mathcal{E}}s_X^2q\left(s_X\right) \leq \frac{p_D}{n}\sum\limits_{s_X \in \mathcal{E}}s_X \leq \frac{p_D\left(\lvert\mathcal{E}\rvert-1\right)}{n}s_X^{\left(\lvert\mathcal{E}\rvert\right)},$$
    where we define $s_X^{\left(\lvert\mathcal{E}\rvert\right)}$ to be the maximum discrete score observed. The upper bound follows directly by substituting the upper and lower bounds mentioned for the two terms.\\

    \noindent For the lower bound, since we assume $\lvert \mathcal{E}\rvert>1$, the minimum standard deviation cannot be zero (which would've been the case had $\lvert \mathcal{E}\rvert = 1$) but its minimum value is attained if just one observation has the minimum possible non-zero discrete score. That is the case of an infrequent itemset of length $\maxlen$ that appears one time less than the maximum threshold value our of all thresholds for itemsets of length $\maxlen$, which we defined by $\Xi$. In such a case, the standard deviation can be shown to be equal to the expression for the lower bound. 
\end{proof}
\setcounter{equation}{0}
\renewcommand\theequation{B.\arabic{prop}.\arabic{equation}}
\begin{prop}\label{prop:maxdiscscore_expression}
    The maximum value of the discrete score $s_{D,(i, \cdot)}$ for the $i$th observation of a data set with $p_D$ discrete variables is attained for all itemsets of length $k \leq p_D$ appearing just once.
\end{prop}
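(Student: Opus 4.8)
The plan is to treat $s_{D,(i,\cdot)}$ as a finite sum of non-negative contributions indexed by the itemsets contained in observation $i$, and to show that the described configuration simultaneously maximises every contribution and forces every admissible itemset into the sum. First I would fix the combinatorial structure: each nonempty subset of the $p_D$ discrete variables determines exactly one itemset $d \subseteq \boldsymbol{X}_{i,D}$, namely the tuple of levels that observation $i$ takes on those variables. Hence there are exactly $\binom{p_D}{k}$ candidate itemsets of length $k$ and $2^{p_D}-1$ in total, and the only remaining freedom is the value of $\operatorname{supp}(d)$ for each, constrained by $\operatorname{supp}(d)\ge 1$ since the itemset occurs at least in observation $i$ itself.

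Next I would exploit monotonicity in the support. With $\lvert d\rvert$ fixed by the chosen subset, the summand $1/\bigl(\operatorname{supp}(d)\,\lvert d\rvert^2\bigr)$ is non-negative and strictly decreasing in $\operatorname{supp}(d)$, so each contribution is individually largest at $\operatorname{supp}(d)=1$, where it equals $1/\lvert d\rvert^2$. The supports are not free of one another, since $\operatorname{supp}(d)\le\operatorname{supp}(d')$ whenever $d'\subset d$; but all of these inequalities are saturated at the common value $1$ exactly when each individual level of observation $i$ is unique in $\boldsymbol{X}$, so the all-ones support profile is feasible and simultaneously minimises every support. It therefore maximises the sum term by term, and no other profile can do better.

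I would then check that the support-$1$ configuration does not silently discard terms through the frequency condition. Since $\operatorname{supp}(d)=1<\sigma_d$ whenever the threshold satisfies $\sigma_d\ge 2$ — the regime guaranteed for lengths up to $\maxlen$ by the definition of $\maxlen$ — every such itemset is infrequent and passes the condition $\operatorname{supp}(d)<\sigma_d$ in Expression \eqref{eq:discretescorefinal}. Assuming no variable pair activates the correlation correction, every candidate itemset of length $k\le p_D$ then contributes its maximal value $1/\lvert d\rvert^2$, and combining this with the monotonicity step identifies the maximiser of $s_{D,(i,\cdot)}$ as the configuration in which all itemsets appear exactly once.

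The main obstacle is reconciling the $k\le p_D$ quantifier in the statement with the two devices in Expression \eqref{eq:discretescorefinal} that remove terms: the cap $\lvert d\rvert\le\maxlen$ and the support-based pruning of supersets of infrequent itemsets. In the all-support-$1$ configuration every length-$1$ itemset is infrequent, so a literal application of pruning would zero out all longer itemsets, and for long itemsets one may even have $\sigma_d<2$. I would therefore phrase the conclusion as the exact maximum of the score in the idealised regime $\maxlen=p_D$ with pruning switched off, so that the stated configuration gives a tight upper bound $\sum_{k=1}^{p_D}\binom{p_D}{k}/k^2$ on the attainable score; the monotonicity argument shows this bound is met precisely when every itemset occurs once. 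Making explicit that this supremum is the quantity written $s_X^{(\lvert\mathcal{E}\rvert)}$ in Proposition \ref{prop:sddiscscore} would complete the argument.
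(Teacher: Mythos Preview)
Your argument misreads the statement and, as a consequence, sidesteps the mechanism that makes the proposition non-trivial. The phrase ``all itemsets of length $k \le p_D$ appearing just once'' is to be read as ``for some fixed $k\le p_D$, all $\binom{p_D}{k}$ itemsets of that single length have unit support,'' not ``every itemset of every length has unit support.'' This reading is forced by the support-based pruning built into the score: once an itemset $d$ is infrequent, every strict superset of $d$ is dropped from the sum. You recognise this in your last paragraph, but your resolution --- switch pruning off and work in an idealised regime --- changes the object being maximised. Under the actual score, your all-support-$1$ configuration leaves only the $p_D$ length-$1$ terms, giving score $p_D$; the quantity $\sum_{k=1}^{p_D}\binom{p_D}{k}/k^{2}$ is never attained and is not a tight bound for the score as defined.

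The paper's proof keeps pruning on and asks whether a mixed configuration --- some variables contributing through unit-support length-$1$ itemsets, others having frequent singleton levels and contributing instead through unit-support itemsets of some longer length --- can ever beat a pure single-length configuration. They parametrise this by letting $\alpha_i$ count the variables participating in unit-support itemsets of length $i{+}1$, so the score becomes $p_D-\sum_i\alpha_i+\sum_i\binom{\alpha_i}{i{+}1}/(i{+}1)^{2}$ subject to $\sum_i\alpha_i\le p_D$, and the substantive work is showing the maximiser has at most one $\alpha_i$ non-zero (and then equal to $p_D$, or all zero when $p_D\le 8$). Your term-by-term monotonicity in the support is only the opening observation (any contributing itemset should have support $1$); it does not touch the trade-off between lengths that pruning creates, which is the actual content of the proposition and the reason Proposition~\ref{prop:maxdiscscore} must then identify the optimal $k$.
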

\begin{proof}
    We assume that we have $p_D \geq 2$ discrete variables (for the case of $p_D=1$, the result follows immediately by the definition of the discrete score) and recall the formulation of the discrete score for an observation $i$ as given in Expression \eqref{eq:discretescorefinal}:$$
    s_{D,(i,\cdot)}=\sum_{\substack{d \subseteq \boldsymbol{X}_{i,D}: \\ \\\operatorname{supp}(d)<\sigma_d, \\ |d| \leq \maxlen, \\ \left\{\left\{k, k'\right\}: u_{k,k'}>u^{\mathrm{upper}}_{k,k'}\right\}\nsubseteq d}} \frac{1}{\operatorname{supp}(d) \times\lvert d \rvert^2}, \quad i=1,\dots,n.$$
    Assuming infinitesimal nominal association between the discrete variables and a large $\maxlen$ value (to avoid having many restrictions on the calculation of the score), it is easy to see that itemsets of unit support yield a higher increase in the score. Therefore, we need all itemsets contributing to the score to appear just once in the data set. In this case, the total score becomes equal to the following expression:$$
    \mathcal{A} = p_D - \sum\limits_{i=1}^{p_D-1}\alpha_i + \sum\limits_{i=1}^{p_D-1}\dbinom{\alpha_i}{i+1}\frac{1}{\left(i+1\right)^2}, \quad \alpha_i \in \left\{0, i+1, \dots, p_D\right\}, \ \sum\limits_{i=1}^{p_D-1}\alpha_i \leq p_D.$$
    Notice that the first two terms correspond to the contribution of the itemsets of unit length that appear once in the data set, while the third term corresponds to the contribution of all possible itemsets of greater length (up to length $p_D$) to the score. The coefficients $\alpha_i$ represent the number of discrete variables which are included in infrequent itemsets (of unit support) of length $i+1$. We further impose the restriction that the sum of the $\alpha_i$'s is at most equal to $p_D$, since any itemsets of length $i+1$ are defined based on $i+1$ variables, and we are restricted to $p_D$ discrete features. Moreover, we could potentially have no infrequent itemsets of length $i+1$ (in which case $\alpha_i=0$) but if we do have any, then these should be observed in at least $i+1$ discrete features. Despite the abuse of notation, we assume that $C^0_{i+1}=0$ for convenience.\\
    
    \noindent Our goal is to find the set of values $\boldsymbol{\alpha}=\left(\alpha_1, \dots, \alpha_{p_D-1} \right)$ that maximise $\mathcal{A}$ and more precisely, we aim to show that this is achieved either when all the $\alpha_i$'s are equal to zero or when we are on the boundary of the solution space, i.e. when all the $\alpha_i$'s sum to $p_D$ and all of them but one are exactly zero. The first case corresponds to maximising the discrete score for all $p_D$ discrete features of an observation being unique within the data set, while the second is interpreted as achieving the maximum discrete score possible when all itemsets of length $i+1$ that can be generated from $p_D$ discrete variables appear just once and there exist no itemsets of greater or lower length which are infrequent. In both these cases, we basically end up with the conclusion that the maximum score is attained for all itemsets of one specific length occurring once in the data set. This means that for instance, we cannot maximise the discrete score of an observation for a combination of $\alpha_1$ itemsets of length two and $\alpha_2$ itemsets of length three appearing once in the data (where $\alpha_1, \alpha_2 > 0$ and $\alpha_1+\alpha_2 \leq p_D$), assuming we have at least $p_D \geq 5$ discrete features.\\

    \noindent We begin by showing that for $p_D \leq 8$, $\mathcal{A}$ is maximised for all $\alpha_i$'s being equal to 0. We would rather have $\alpha_i=0 \ \forall i$ if:$$
    \sum\limits_{i=1}^{p_D-1}\alpha_i > \sum\limits_{i=1}^{p_D-1}\dbinom{\alpha_i}{i+1}\frac{1}{(i+1)^2}.$$
    The above condition can be equivalently written, by some straightforward algebraic manipulations, as:$$
    \sum\limits_{i=1}^{p_D-1}\left\{ \alpha_i \times \left( \frac{\left(\alpha_i-1\right) \times \dots \times \left(\alpha_i-i\right)}{(i+1)^3\times i!}-1\right)\right\} < 0.$$
    Notice how since we assume that all the $\alpha_i$'s are non-negative, the expression above can be negative if and only if there exists at least a non-zero $\alpha_i$ value such that:
    \begin{equation}\label{eq:aiszero}
    F = \frac{\left(\alpha_i-1\right) \times \dots \times \left(\alpha_i-i\right)}{(i+1)^3\times i!}-1 = \frac{\Gamma\left(\alpha_i\right)}{\Gamma\left(\alpha_i-i\right)\times\Gamma\left(i+1\right)\times (i+1)^3} - 1 < 0,
    \end{equation}
    where $\Gamma(\cdot)$ is the gamma function. Now if the above expression is always negative for all possible non-zero values of the $\alpha_i$'s, we can ensure that $\mathcal{A}$ is maximised for all the $\alpha_i$'s being equal to zero. We plot the value of $F$ for all possible combinations of $i$ and non-zero $\alpha_i$ values in Figure \ref{fig:appendixproof1pltais}, for $p_D \in \{7, 8, 9, 10\}$. We can see that for $p_D \leq 8$, the value of $F$ is always negative, while for $p_D \geq 9$ some combinations of $i$ and $\alpha_i$ values yield non-negative $F$ values, which contradict the fact that $\alpha_i=0 \ \forall i$ is the optimal solution. This is also the case for larger $p_D$ values. Therefore, $\alpha_i = 0 \ \forall i$ maximises expression $\mathcal{A}$ for $p_D \leq 8$.\\
   
\begin{figure}[h!]
    \centering
    \includegraphics[scale=0.7]{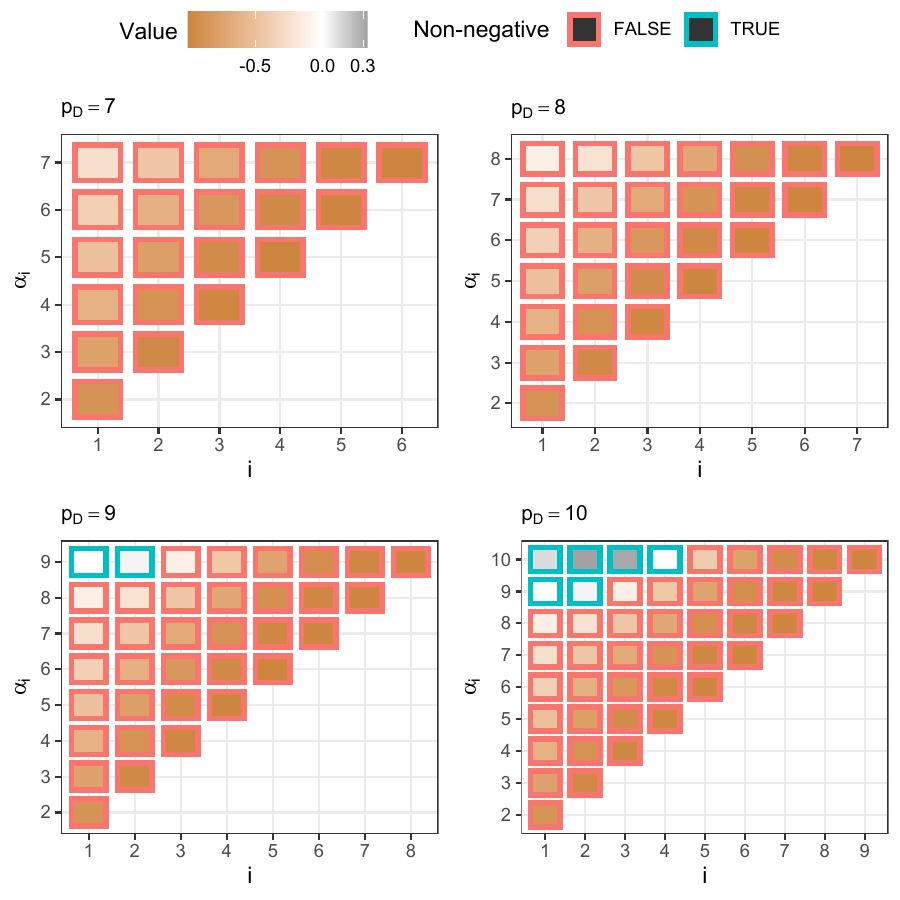}
    \vspace{-0.1cm}
    \caption{Value of $F$ (Expression \eqref{eq:aiszero}) for all possible combinations of $i$ and non-zero $\alpha_i$ values for $p_D \in \{7, 8, 9, 10\}$. The fill color of each box represents the value of $F$ and the border color denotes whether that value is non-negative. For $p_D \leq 8$ all possible combinations lead to a negative value of $F$, while for $p_D \geq 9$, some non-negative values are observed too.}
    \label{fig:appendixproof1pltais}
\end{figure}

\noindent From this point onwards, we will be assuming that $p_D \geq 9$. Our strategy consists of first showing that getting to the solution boundary is always preferable when having one non-zero $\alpha_i$ and then, we will show that setting $\alpha_j$ equal to a non-zero value, where $i \neq j$, decreases the value of $\mathcal{A}$. We begin by assuming that there exists an index $j$ such that $\alpha_j > 0$ and $\alpha_i=0 \ \forall i \neq j$ and we look at what happens to the value of $\mathcal{A}$ if we decrease $\alpha_j$ by a unit. Notice that due to the restriction on the values that $\alpha_j$ can take, we need to assume that $j \neq p_D -1$, since $\alpha_{p_D-1}$ can only be equal to 0 or $p_D-1$. We denote the loss that results from decreasing $j$ by a unit by $G_1$ and that is equal to:
\begin{align*}
    G_1 & = p_D - \alpha_j + \dbinom{\alpha_j}{j+1}\times \frac{1}{(j+1)^2} - \left\{ p_D - \alpha_j + 1 + \dbinom{\alpha_j-1}{j+1}\times \frac{1}{(j+1)^2}\right\}\\
    & = \frac{1}{(j+1)^2}\left\{ \dbinom{\alpha_j}{j+1} - \dbinom{\alpha_j-1}{j+1} \right\}-1\\
    & = \frac{\Gamma\left(\alpha_j\right)}{(j+1)\times \Gamma\left(j+2\right)\times \Gamma\left( \alpha_j-j\right)} - 1.
\end{align*}
We can also look at the case of decreasing $\alpha_j$ from its minimum possible non-zero value of $j+1$ to zero. We define $G_2$ to be the loss corresponding to this decrease:
\begin{align*}
    G_2 & = p_D - \alpha_j + \dbinom{\alpha_j}{j+1}\times \frac{1}{(j+1)^2} - p_D\\
    & = -(j+1) + \dbinom{j+1}{j+1}\times\frac{1}{(j+1)^2}\\
    & = \frac{1}{(j+1)^2}-(j+1).
\end{align*}
It is rather evident that for $j \in \mathbb{Z}^+$, the value of $G_2$ will be negative. A negative loss is equivalent to a gain, therefore we get a higher $\mathcal{A}$ value when going from $\alpha_j=j+1$ to $\alpha_j=0$. Deriving such a result for $G_1$ is not as straightforward but we can perform some simulations to see what the behaviour of $G_1$ is for different combinations of $j$ and $\alpha_j$ values as $p_D$ varies. This is illustrated in Figure \ref{fig:loss_decrease}, where the off-diagonal elements correspond to the value of $G_1$, while the diagonal ones correspond to the value of $G_2$. These are the loss values when decreasing $\alpha_j$ by a unit and when setting $\alpha_j$ from $j+1$ to zero, respectively.\\

\noindent We can observe a very specific pattern in the loss values; more precisely, the loss is negative if $\alpha_j \leq j+3$, which means that decreasing $\alpha_j$ will lead to a greater $\mathcal{A}$ value. However, if $\alpha_j \geq j+4$, decreasing $\alpha_j$ is apparently not a good idea, as that would yield a lower $\mathcal{A}$ value, the loss being larger as $\alpha_j$ increases. Notice that none of these loss values is equal to zero and this can be easily seen by setting $\alpha_j = j+3$ and substituting this into the expression for $G_1$ to derive an asymptotic bound for the minimum absolute value of the loss. The exact same pattern can be observed for greater $p_D$ values than the ones we have included here, but these are omitted. The above suggest that we should either have every single $\alpha_j$ value equal to zero or that we should be on the boundary, where $\alpha_j=p_D$. We have already seen that setting all parameters equal to zero is not desirable for $p_D \geq 9$, thus we can conclude that the optimal solution is on the boundary for this range of $p_D$ values.\\

\begin{figure}[h!]
    \centering
    \includegraphics[scale=0.7]{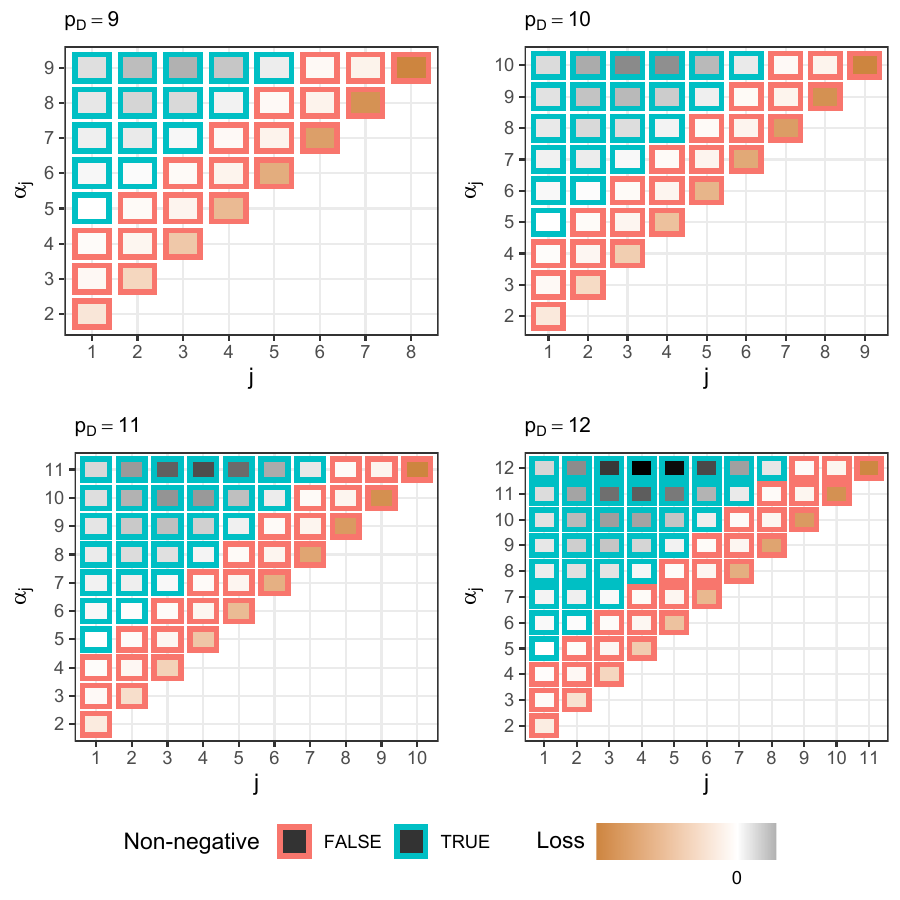}
    \vspace{-0.1cm}
    \caption{Value of $G_1$ (off-diagonal elements) for all possible combinations of $j$ and $\alpha_j$ and of $G_2$ (diagonal elements) for $p_D \in \{9, 10, 11, 12\}$. The fill color of each box represents the value of the respective loss incurred when decreasing $\alpha_j$ by a unit (off-diagonal elements) or when setting $\alpha_j$ from its minimum non-zero value of $j+1$ to zero (diagonal elements) and the border color denotes whether that value is non-negative.}
    \label{fig:loss_decrease}
\end{figure}

\noindent So far, we have managed to show that if $p_D \leq 8$, the optimal solution is for all the $\alpha_i$'s being zero, while for $p_D \geq 9$, if one of the $\alpha_i$'s is non-zero, it should be equal to $p_D$ so that $\mathcal{A}$ is maximised. The final part of the proof consists of showing that if we are on the boundary of the solution space, with $\alpha_j=p_D$ and $\alpha_i=0 \ \forall i \neq j$, then activating $\alpha_k$ so that $\alpha_k = k+1$ (where $k \neq j$) leads to a decrease in the value of $\mathcal{A}$. Now clearly, if $\alpha_k$ is set equal to $k+1$, the value of $\alpha_j$ has to drop to $p_D-(k+1)$, so that the boundary constraint is not violated. We define the loss incurred by activating $\alpha_k$ by $H$:
\begin{align*}
    H & = \dbinom{p_D}{j+1}\times \frac{1}{(j+1)^2} - \left\{ \dbinom{p_D-k-1}{j+1}\times \frac{1}{(j+1)^2} + \dbinom{k+1}{k+1}\times \frac{1}{(k+1)^2} \right\}\\
    & = \frac{1}{(j+1)^2}\times \left\{\dbinom{p_D}{j+1} - \dbinom{p_D-k-1}{j+1} \right\} - \frac{1}{(k+1)^2}\\ 
    & = \frac{1}{(j+1)^2 \times\Gamma\left(j+2\right)}\left\{ \frac{\Gamma\left(p_D+1\right)}{\Gamma\left(p_D-j\right)} - \frac{\Gamma\left(p_D-k\right)}{\Gamma\left(p_D-k-j-1\right)}\right\} - \frac{1}{(k+1)^2}.
\end{align*}

\noindent We distinguish between two possible cases.\\

\noindent \underline{Case 1:} $j<k$\\

\noindent If $j<k$, we can write $j=k-\beta$, where $\beta \in \mathbb{Z}^+$. Therefore, the loss function $H$ becomes:
\begin{align*}
    H = \frac{1}{(k-\beta+1)^2 \times\Gamma\left(k-\beta+2\right)}\left\{ \frac{\Gamma\left(p_D+1\right)}{\Gamma\left(p_D-k+\beta\right)} - \frac{\Gamma\left(p_D-k\right)}{\Gamma\left(p_D-2k+\beta-1\right)}\right\} - \frac{1}{(k+1)^2}.
\end{align*}

\noindent \underline{Case 2:} $j>k$\\

\noindent If $j>k$, we can write $j=k+\beta$, where $\beta \in \mathbb{Z}^+$. Therefore, the loss function $H$ becomes:
\begin{align*}
    H = \frac{1}{(k+\beta+1)^2 \times\Gamma\left(k+\beta+2\right)}\left\{ \frac{\Gamma\left(p_D+1\right)}{\Gamma\left(p_D-k-\beta\right)} - \frac{\Gamma\left(p_D-k\right)}{\Gamma\left(p_D-2k-\beta-1\right)}\right\} - \frac{1}{(k+1)^2}.
\end{align*}

\noindent The above expressions for $H$ give us additional restrictions, which are that $p_D-2k+\beta-1 > 0$ and $p_D-2k-\beta-1 > 0$, so that all terms are well-defined. These restrict the set of possible values of $k$ and $\beta$, which enables us to perform a simulation study to see how the value of $H$ varies for different combinations of values of $k$ and $\beta$, as well as for increasing $p_D$. In fact, for each value of $p_D$, we are only interested in the combination of $k$ and $\beta$ values that yield the lowest $H$ value, as we seek to find what the minimum loss is and how that behaves for varying $p_D$. Figure \ref{fig:minH_activation} shows the minimum value of $H$ achieved for $9\leq p_D \leq 100$; it can be seen that for both cases of $j<k$ and $j>k$, the minimum value of $H$ is always positive. This implies that no matter what the values of $j$ and $k$ are, activating $\alpha_k$ will always lead to a lower $\mathcal{A}$ value, suggesting that even when we are on the boundary of the solution space, having exactly one parameter equal to $p_D$ should be preferred to having two or more non-zero parameters that sum up to $p_D$.

\begin{figure}[h!]
    \centering
    \includegraphics[scale=0.7]{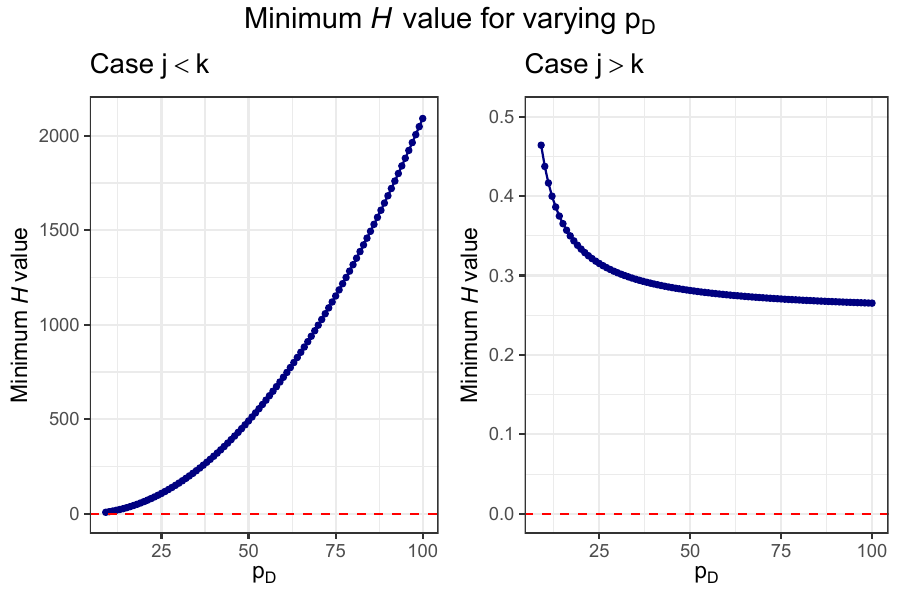}
    \vspace{-0.1cm}
    \caption{Minimum loss ($H$) value caused by the activation of a parameter $\alpha_k$ when $\alpha_j=p_D$, for $9 \leq p_D \leq 100$. The red dashed line at $y=0$ indicates that the loss value is always positive.}
    \label{fig:minH_activation}
\end{figure}

\noindent Thus, we have shown the following:
\begin{enumerate}
    \item The greatest $\mathcal{A}$ value for $p_D \leq 8$ is achieved when all the $\alpha_i$'s are equal to zero.
    \item When $p_D \geq 9$, $\mathcal{A}$ is maximised when we are on the boundary of the solution space.
    \item If $p_D \geq 9$ and we are on the boundary of the solution space with just one parameter $\alpha_j$ being equal to $p_D$, then activating any other parameter $\alpha_k$ (where $j\neq k$) so that it is not longer equal to zero, leads to a lower value of $\mathcal{A}$.
\end{enumerate}
Given the above, we can conclude that the optimal solution for $p_D \leq 8$ is attained when all parameters are equal to zero, while for $p_D \geq 9$, the optimal solution is on the boundary of the solution space with just one parameter $\alpha_i$ being non-zero and thus equal to $p_D$, as required.

\end{proof}
\setcounter{equation}{0}
\renewcommand\theequation{B.\arabic{prop}.\arabic{equation}}
\begin{prop}\label{prop:maxdiscscore}
    The maximum value of the discrete score $s_{D,(i, \cdot)}$ for the $i$th observation of a data set with $p_D$ discrete variables is given by:
    \begin{equation*}
    s^\mathrm{max}_{D,(i,\cdot)} = \left\{
\begin{array}{cl}
      p_D & \mathrm{if}\ 1\leq p_D \leq 8,\\[15pt] 
      \frac{\dbinom{p_D}{\min\left\{\maxlen, 3 \right\}}}{\left(\min\left\{\maxlen, 3 \right\}\right)^2} & \mathrm{if}\ 9\leq p_D \leq 10,\\[15pt]
      \frac{\dbinom{p_D}{\min\left\{\maxlen, \left\lfloor \frac{p_D}{2}-1 \right\rfloor \right\}}}{\left(\min\left\{\maxlen, \left\lfloor \frac{p_D}{2}-1 \right\rfloor \right\}\right)^2} & \mathrm{if}\ p_D \geq 11 \\
\end{array}
\right.
    \end{equation*}
\end{prop}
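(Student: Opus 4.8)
The plan is to invoke Proposition \ref{prop:maxdiscscore_expression} to collapse the problem to a single-variable optimisation, and then to pin down the optimal itemset length by a ratio test. By Proposition \ref{prop:maxdiscscore_expression}, the maximal score is realised by fixing one length $k$ and letting each of the $\binom{p_D}{k}$ itemsets of that length occur exactly once; substituting $\operatorname{supp}(d)=1$ and $|d|=k$ into Expression \eqref{eq:discretescorefinal} then gives the candidate value
\begin{equation*}
f(k) = \frac{\binom{p_D}{k}}{k^2}, \qquad 1 \leq k \leq \min\{p_D,\maxlen\},
\end{equation*}
so that $s^{\mathrm{max}}_{D,(i,\cdot)} = \max_k f(k)$ and the entire claim reduces to maximising $f$ over the admissible integer range.

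First I would study the consecutive ratio
\begin{equation*}
\frac{f(k+1)}{f(k)} = \frac{p_D-k}{k+1}\cdot\frac{k^2}{(k+1)^2} = \frac{(p_D-k)\,k^2}{(k+1)^3},
\end{equation*}
from which a direct rearrangement yields $f(k+1) \geq f(k)$ precisely when $p_D \geq g(k)$, where $g(k) = 2k+3+3/k+1/k^2$. The function $g$ is strictly increasing for $k \geq 2$; its only departure from monotonicity is the dip $g(2) = 35/4 < 9 = g(1)$, which is harmless here because no integer $p_D$ lies in the gap $[35/4,9)$. Consequently, for integer $p_D$ the sequence $f(1),f(2),\dots$ first increases and then decreases, so $f$ has a unique maximiser $k^\ast$, characterised for $k^\ast \geq 3$ by $g(k^\ast-1) < p_D \leq g(k^\ast)$.

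It then remains to read $k^\ast$ off the thresholds $g(1)=9$, $g(2)=35/4$, $g(3)=91/9$, and to treat the tail $p_D \geq 11$. For $p_D \leq 8$ we have $p_D < g(2) < g(1)$, so $f$ is decreasing from the start and $k^\ast=1$, giving $s^{\mathrm{max}}_{D,(i,\cdot)}=f(1)=p_D$. For $p_D \in \{9,10\}$ the chain $g(2) < p_D \leq g(3)$ holds, so $k^\ast = 3$. For $p_D \geq 11$ the assertion is $k^\ast = \lfloor p_D/2 - 1\rfloor$, which I would verify by writing $p_D=2t$ or $p_D=2t+1$ and checking the two defining inequalities directly: the upper inequality $p_D \leq g(\lfloor p_D/2-1\rfloor)$ is immediate because the correction terms $3/k+1/k^2$ are positive, while the lower inequality $p_D > g(\lfloor p_D/2-1\rfloor-1)$ reduces, after cancelling the linear part, to bounding those correction terms by $1$ (even $p_D$) or by $2$ (odd $p_D$), both of which hold comfortably for all $p_D \geq 11$. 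Finally, since $f$ increases on $\{1,\dots,k^\ast\}$, imposing the constraint $k \leq \maxlen$ simply truncates the maximiser to $\min\{\maxlen,k^\ast\}$, which produces the stated three-line formula.

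The main obstacle I anticipate is the boundary bookkeeping around $p_D = 10$ and $11$. The correction terms in $g$ are exactly what make the optimum jump from $k^\ast=3$ at $p_D=10$ to $k^\ast=4$ at $p_D=11$, and the same terms explain why the uniform expression $\lfloor p_D/2 - 1\rfloor$ would wrongly return $4$ at $p_D=10$ yet becomes exact from $p_D=11$ onwards; this is precisely what forces the separate treatment of $p_D \in \{9,10\}$. Carrying out the parity split cleanly, and confirming that the upper threshold inequality is strict so that $k^\ast$ is genuinely unique, is the only delicate part of the argument, everything else following from the single ratio identity and the monotonicity of $g$ for $k \geq 2$.
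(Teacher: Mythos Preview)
Your argument is correct and shares the paper's high-level strategy: both invoke Proposition~\ref{prop:maxdiscscore_expression} to reduce to maximising $f(k)=\binom{p_D}{k}/k^2$, and both land on the threshold $g(k)=2k+3+3/k+1/k^2$ (the paper's Expression~(B.4.3)). Where you diverge is in the execution. The paper derives the lower bound for $p_D$ separately as $k^{*}/(1-1/k^{*})^2+k^{*}-1$ without observing that this is simply $g(k^{*}-1)$; it then treats the two bounds as distinct curves, plots them, computes asymptotic gradients and vertical separation, and introduces the midpoint line $p_D=2k+2$ to read off the feasible strip. Your recognition that both inequalities are values of the \emph{same} function $g$ shifted by one index collapses all of this into a unimodality statement plus a short parity split, and your handling of the dip $g(2)<g(1)$ via the empty integer gap $[35/4,9)$ is cleaner than the paper's appeal to Figure~\ref{fig:ineq_region_wzoom2}. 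The paper's route buys a geometric picture of why the transition happens near $p_D=10$--$11$; yours buys a fully self-contained argument with no plots or limiting computations, and makes the uniqueness of $k^\ast$ (via the non-integrality of $g(k)$ for $k\geq 2$) easier to check.
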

\begin{proof}
    The most extreme case that may occur is that of all sequences of length $k\leq p_D$ appearing just once (thus having a support of a unit), with infinitesimal association between all possible pairs of discrete variables, and since we make sure $\sigma_d\geq 2$, that would immediately imply they are all infrequent (see Proposition \ref{prop:maxdiscscore_expression} for a justification of this). Moreover, all of them would be pruned and there would be no further contributions to $s_{D,(i,\cdot)}$. This yields a total score of $C^{p_D}_{k}/{k^2}$,where $C$ is the binomial coefficient. We therefore seek to find the sequence length $k^* \in \mathbb{Z}^+$ that maximises this expression of the score, for fixed $p_D$. This means that the score for $k=k^*-1$ and for $k^*+1$ (we assume $1<k^* \leq p_D$) will be less than that for $k^*$, or equivalently:
    \begin{align}
        \dbinom{p_D}{k^*+1}\frac{1}{\left(k^*+1\right)^2}-\dbinom{p_D}{k^*}\frac{1}{\left(k^*\right)^2} & < 0, \label{eq:ineq_maxdiscscore_1}\\
        \dbinom{p_D}{k^*}\frac{1}{\left(k^*\right)^2} - \dbinom{p_D}{k^*-1}\frac{1}{\left(k^*-1\right)^2} & > 0 \label{eq:ineq_maxdiscscore_2}.
    \end{align}
    Starting with Expression \eqref{eq:ineq_maxdiscscore_1}, we have:
    \begin{multline*}
        \dbinom{p_D}{k^*+1}\frac{1}{\left(k^*+1\right)^2}-\dbinom{p_D}{k^*}\frac{1}{\left(k^*\right)^2} =\\
         = \frac{p_D!}{\left( k^* + 1 \right)!\left(p_D - k^* - 1\right)!}\frac{1}{\left(k^*+1 \right)^2} - \frac{p_D!}{\left( k^* \right)!\left(p_D - k^* \right)!}\frac{1}{\left(k^* \right)^2}\\
         = \frac{p_D!}{\left(k^*\right)!\left(p_D-k^*-1\right)!\left(k^*\right)^2}\left\{\frac{1}{\left(k^*+1\right)\left(1+\frac{1}{k^*}\right)^2}-\frac{1}{p_D-k^*} \right\},
    \end{multline*}
    and since we know that the first term is strictly positive, it suffices to show that the expression inside the curly brackets is negative. We bring this to the following form: $$\begin{aligned}
        \frac{1}{\left(k^*+1\right)\left(1+\frac{1}{k^*}\right)^2}-\frac{1}{p_D-k^*} & = \frac{\left(p_D-k^* \right)-\left( k^*+1\right)\left(1+\frac{1}{k^*}\right)^2}{\left(k^*+1 \right)\left(1+\frac{1}{k^*}\right)^2\left(p_D-k^*\right)}\\
    \end{aligned},$$
    hence it suffices to show that the numerator is negative, since the denominator is again strictly positive. This gives:
    \begin{align}
        & \left(p_D-k^* \right)-\left( k^*+1\right)\left(1+\frac{1}{k^*}\right)^2 < 0 \nonumber\\
        \iff & p_D-k^* - \left(k^* + 3 + \frac{3}{k^*}+\frac{1}{\left( k^* \right)^2} \right) < 0 \nonumber\\
        \iff & p_D - 2k^* - 3 - \frac{3}{k^*} - \frac{1}{\left( k^* \right)^2}  < 0 \nonumber\\
        \iff & p_D < 2k^* + 3 + \frac{3}{k^*}+\frac{1}{\left( k^* \right)^2}. \label{eq:ineq1_maxdscore}
    \end{align}
    We proceed similarly to derive an additional bound based on Expression \eqref{eq:ineq_maxdiscscore_2}. More precisely:
    \begin{multline*}
        \dbinom{p_D}{k^*}\frac{1}{\left(k^*\right)^2} - \dbinom{p_D}{k^*-1}\frac{1}{\left(k^*-1\right)^2} = \\
        = \frac{p_D!}{\left( k^* \right)!\left(p_D - k^* \right)!}\frac{1}{\left(k^* \right)^2} - \frac{p_D!}{\left( k^* - 1 \right)!\left(p_D - k^* + 1\right)!}\frac{1}{\left(k^*-1 \right)^2}\\
        = -\frac{p_D!}{\left(k^*-1\right)!\left(p_D-k^*\right)!\left(k^*-1\right)^2}\left\{\frac{1}{p_D-k^*+1} - \frac{\left(1-\frac{1}{k^*} \right)^2}{k^*}\right\},
    \end{multline*}
    so by noticing that the first term is strictly positive and the product should be positive, we require that the expression inside the curly brackets is negative. A bit of re-arrangement gives:
    \begin{align*}
        \frac{1}{p_D-k^*+1} - \frac{\left(1-\frac{1}{k^*} \right)^2}{k^*} =
        \frac{k^*-\left(p_D-k^*+1 \right) \left(1-\frac{1}{k^*} \right)^2}{\left(p_D-k^*+1 \right)k^*},
    \end{align*}
    thus we require that the numerator is negative, since the denominator is strictly positive. Hence, we have:
    \begin{align}
        & k^*-\left(p_D-k^*+1 \right) \left(1-\frac{1}{k^*} \right)^2 < 0 \nonumber\\
        \iff & \frac{k^*}{\left(1-\frac{1}{k^*} \right)^2} < p_D-k^*+1 \nonumber\\
        \iff & \frac{k^*}{\left(1-\frac{1}{k^*} \right)^2} + k^* - 1 < p_D. \label{eq:ineq2_maxdscore}
    \end{align}
    We plot these bounds for $p_D$ in Figure \ref{fig:ineq_region_wzoom2} to get an idea of what the region of solutions looks like.
    \vspace{-.4cm}
    \begin{figure}[h!]
        \centering \centerline{\includegraphics[scale=0.4]{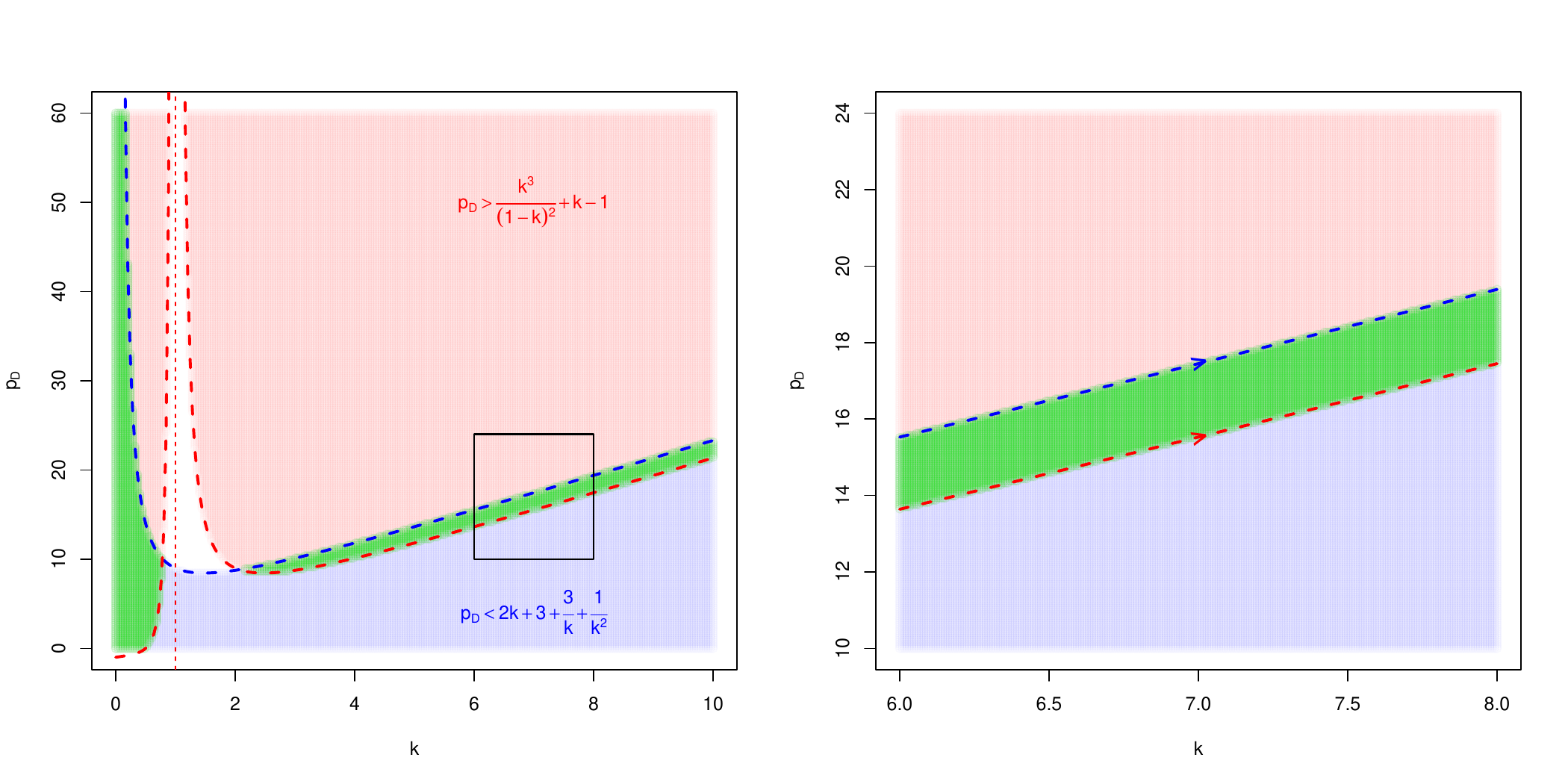}}
        \vspace{-0.1cm}
        \caption{Plot of bounds of $p_D$. The blue shaded region corresponds to the upper bound from Expression \eqref{eq:ineq1_maxdscore} and the red shaded region corresponds to the lower bound from Expression \eqref{eq:ineq2_maxdscore}. The green shaded region is the region where both these bounds are satisfied. The vertical dashed line at $k=1$ is the asymptote of the expression for the lower bound. The rectangular region $[6,8]\times[10,24]$ on the left subplot is zoomed in on the right, illustrating how the 2 boundary curves become parallel as $p_D$ increases.}
        \label{fig:ineq_region_wzoom2}
    \end{figure}
    \noindent We can find the points of intersection of the boundary curves by solving:$$
    \frac{k^3}{\left(1-k\right)^2}+k-1 = 2k+3+\frac{3}{k}+\frac{1}{k^2},$$
    which becomes equivalent to finding the roots of the following polynomial:$$
    -2k^4+4k^3+k^2-k-1 = 0.$$
    The two real roots are $k \approx 2.07$ and $k \approx 0.78$ which correspond to $p_D \approx 8.82$ and $p_D \approx 10.05$, respectively (in 2 decimal places). Given $k^* \in \mathbb{Z}^+$, we need to distinguish between the cases of $p_D \leq 8$ and $p_D \geq 9$. We start with the case of $p_D \geq 9$ and we see that the two boundary curves become parallel as $k \rightarrow +\infty$. We can compute their gradient to validate this:$$\begin{aligned}
        \frac{\mathrm{d}\left(2k+3+\frac{3}{k}+\frac{1}{k^2}\right)}{\mathrm{d}k} = 2-\frac{3}{k^2}-\frac{2}{k^3} & \overset{k\rightarrow \infty}{\longrightarrow} 2, \\
        \frac{\mathrm{d}\left(\frac{k^3}{\left(1-k\right)^2}+k-1\right)}{\mathrm{d}k} = \frac{k^3-3k^2}{k^3-3k^2+3k+1} + 1 & \overset{k\rightarrow \infty}{\longrightarrow} 2.
    \end{aligned}$$
    Therefore, we can access the set of solutions (corresponding to the region where both bounds are satisfied) by considering the line $p_D = 2k+\alpha$. The constant $\alpha$ can be computed by noticing that the aforementioned line will pass through the midpoint of each line segment defined by the two boundary curves for large $k$. Hence, we need to compute the difference between the two boundary curves:$$\begin{aligned}
        2k+3+\frac{3}{k}+\frac{1}{k^2} - \frac{k^3}{\left(1-k\right)^2}-k+1 = \frac{2k^4-4k^3-k^2+k+1}{k^2\left(k-1\right)^2} \overset{k\rightarrow \infty}{\longrightarrow} 2,
    \end{aligned}$$
    which means that for large $k$, the vertical distance between the two boundary curves is equal to two units. Their midpoint is just a unit away from each curve, thus we require:$$\begin{aligned}
        2k+3+\frac{3}{k}+\frac{1}{k^2} - (2k + \alpha) = 3 - \alpha + \frac{3}{k} + \frac{1}{k^2} \overset{k\rightarrow \infty}{\longrightarrow} 1,\\
    \end{aligned}$$
    which is satisfied for $\alpha=2$. This means that for large $k$ values, we can compute $k^*$ by accessing the region where the bounds for $p_D$ are satisfied via the line $p_D = 2k+2$. We need to find the values of $k$ (and thus the values of $p_D$) for which this can be used. We can only use this line to determine $k^*$ as long as it outputs a value greater than the lower bound for $p_D$, hence we need:$$\begin{aligned}
        & 2k+2 > \frac{k^3}{\left(1-k\right)^2}+k-1 \\
        \iff & k^2 - 5k + 3 > 0,
    \end{aligned}
    $$
    which holds for the following set of values of $k$:$$
     k \in \left(-\infty, \frac{-\sqrt{13}+5}{2}\right) \cup \left( \frac{\sqrt{13}+5}{2}, +\infty \right) \approx (-\infty, -0.70) \cup (4.30, +\infty).$$
     Recalling that $k$ is non-negative, the interval containing negative values can safely be rejected and we get $k\approx 4.30$, which corresponds to $p_D \approx 10.61$, meaning that we can only use the line $p_D = 2k+2$ for $p_D \geq 11$. Moreover, notice that using this line gives $k = p_D/2-1$, which means that for odd $p_D$ we do not get an integer value for $k$. However, we can see that for $p_D=11$, the only integer $k^*$ such that the bounds for $p_D$ are satisfied is $k^*=4$, while for $p_D = 12$ and $p_D = 13$, the only such integer is $k^*=5$. Since the gradient of the line is equal to 2, there can only be one integer value of $k$ satisfying the bounds for every two consecutive $p_D$ values, which we can calculate using:$$
     k^* = \left\lfloor\frac{p_D}{2}-1\right\rfloor,$$
     where $\lfloor \cdot \rfloor$ is the floor function ($\lfloor x \rfloor$ is the greatest integer $x'$ such that $x' \leq x$).\\

     \noindent For $p_D = 9$ and $p_D = 10$, $k^*=3$ is the only integer satisfying the derived bounds. Then, for $p_D \leq 8$, there is no integer satisfying Expressions \eqref{eq:ineq_maxdiscscore_1} \& \eqref{eq:ineq_maxdiscscore_2} and this is because there does not exist any $k^* \in \mathbb{Z}^+$ such that Expression \eqref{eq:ineq_maxdiscscore_2} holds (this can also be seen from Figure \ref{fig:ineq_region_wzoom2}). This implies that the expression for the score is decreasing for $k$. Hence, its maximum is achieved for the lowest possible integer value of $k$ which is equal to one. Summarising the above, the sequence length that maximises the discrete score $s_{D,(i,\cdot)}$ is given by:
     \begin{equation*}
    k^* = \left\{
\begin{array}{cl}
      1 & \mathrm{if}\ 1\leq p_D \leq 8,\\ 
      3 & \mathrm{if}\ 9\leq p_D \leq 10,\\
      \left\lfloor\frac{p_D}{2}-1\right\rfloor & \mathrm{if}\ p_D \geq 11 \\
\end{array}
\right.
    \end{equation*}
    The proof is completed by considering that $\maxlen$ is always at least equal to one; thus the maximum score for $1 \leq p_D \leq 8$ is equal to $p_D$, while for $p_D \geq 9$, the maximum score is equal to the expression that we mentioned at the beginning of the proof, evaluated at $k=k^*$. However, since $\maxlen$ can be less than $k^*$ and given that the maximum score expression is increasing for $k \in \mathbb{Z}^+_{\leq k^*}$, the maximum discrete score is attained for $k = \min\{k^*, \maxlen\}$.
\end{proof}

\setcounter{equation}{0}
\renewcommand\theequation{B.\arabic{prop}.\arabic{equation}}
\begin{prop}\label{prop:maxcontrib}
The maximum value of the contribution $c_{D,(i,j)}$ of the $i$th observation to the $j$th discrete variable of a data set with $p_D$ discrete variables is given by:
\footnotesize
\begin{equation*}
    c^\mathrm{max}_{D,(i,j)} = \left\{
\begin{array}{cl}
      p_D & \mathrm{if}\ 1\leq p_D \leq 13,\\[15pt] 
      \frac{\dbinom{p_D}{\min\left\{\maxlen, 5 \right\}\times \mathbbm{1}\left\{\maxlen\geq 4\right\} + \mathbbm{1}\left\{\maxlen\leq 3\right\}}}{\left(\min\left\{\maxlen, 5 \right\}\right)^3\times \mathbbm{1}\left\{\maxlen\geq 4\right\} + \mathbbm{1}\left\{\maxlen\leq 3\right\}} & \mathrm{if}\ p_D=14,\\[15pt]
      \frac{\dbinom{p_D}{\min\left\{\maxlen, 5 \right\}\times \mathbbm{1}\left\{\maxlen\geq 3\right\} + \mathbbm{1}\left\{\maxlen\leq 2\right\}}}{\left(\min\left\{\maxlen, 5 \right\}\right)^3\times \mathbbm{1}\left\{\maxlen\geq 3\right\} + \mathbbm{1}\left\{\maxlen\leq 2\right\}} & \mathrm{if}\ p_D=15,\\[15pt]
      \frac{\dbinom{p_D}{\min\left\{\maxlen, 6 \right\}\times \mathbbm{1}\left\{\maxlen\geq 3\right\} + \mathbbm{1}\left\{\maxlen\leq 2\right\}}}{\left(\min\left\{\maxlen, 6 \right\}\right)^3\times \mathbbm{1}\left\{\maxlen\geq 3\right\} + \mathbbm{1}\left\{\maxlen\leq 2\right\}} & \mathrm{if}\ p_D=16,\\[15pt]
      \frac{\dbinom{p_D}{\min\left\{\maxlen, 6 \right\}}}{\left(\min\left\{\maxlen, 6 \right\}\right)^3} & \mathrm{if}\ p_D=17,\\[15pt]
      \frac{\dbinom{p_D}{\min\left\{\maxlen, \left\lfloor \frac{p_D}{2}-\frac{5}{4} \right\rfloor \right\}}}{\left(\min\left\{\maxlen, \left\lfloor \frac{p_D}{2}-\frac{5}{4} \right\rfloor \right\}\right)^3} & \mathrm{if}\ p_D \geq 18 \\
\end{array}
\right.
    \end{equation*}
%\end{adjustwidth}
\normalsize
\end{prop}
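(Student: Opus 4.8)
The plan is to mirror the proof of Proposition \ref{prop:maxdiscscore}, adapting it to the two features that distinguish the contribution \eqref{eq:discretescorevarfinal} from the discrete score: the sum now runs only over itemsets $d$ with $j \subseteq d$, and the denominator carries $\lvert d \rvert^3$ instead of $\lvert d \rvert^2$. First I would reduce to the extremal configuration in which every contributing itemset shares one common length $k$ and has unit support. This is justified by an argument parallel to Proposition \ref{prop:maxdiscscore_expression}: unit-support itemsets maximise the per-itemset increment, and the same activation analysis on the length counts shows that splitting the infrequent itemsets across two or more lengths is never optimal, the extra factor $1/\lvert d \rvert$ merely rescaling each summand. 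In this configuration the contribution collapses, just as the score did but with the cube replacing the square, to $g(k) = \dbinom{p_D}{k}/k^3$, so the task becomes maximising $g$ over positive integers $k$ subject to $k \leq \maxlen$ and support-based pruning.

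Next I would pin down the unconstrained integer maximiser $k^*$ via the discrete optimality conditions $g(k^*+1) < g(k^*)$ and $g(k^*-1) < g(k^*)$, the cubic analogues of \eqref{eq:ineq_maxdiscscore_1}--\eqref{eq:ineq_maxdiscscore_2}. Cancelling the binomial ratios turns these into $p_D < 2k^* + 4 + 6/k^* + 4/(k^*)^2 + 1/(k^*)^3$ and $p_D > (k^*)^4/(k^*-1)^3 + k^* - 1$, the counterparts of \eqref{eq:ineq1_maxdscore}--\eqref{eq:ineq2_maxdscore}. Plotting these two boundary curves in the $(k, p_D)$-plane and locating their intersection points delineates the small regime in which the interior peak of $g$ never rises above $g(1) = p_D$; a direct comparison shows this holds exactly for $1 \leq p_D \leq 13$, where the maximum is $p_D$. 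For large $k$ the two curves become parallel with common slope $2$ and vertical separation tending to $2$, so the admissible strip is reached by a midpoint line $p_D = 2k + \mathrm{const}$; reading off its integer solution gives $k^* = \lfloor p_D/2 - 5/4 \rfloor$, and checking when this line clears the lower boundary shows it valid for $p_D \geq 18$.

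The main obstacle, and the reason the statement is genuinely piecewise rather than one truncated formula, is the band $14 \leq p_D \leq 17$ and its interaction with $\maxlen$. The essential new feature relative to the quadratic case is that $g$ is no longer increasing from $k=1$: the cube forces $g$ to dip below $g(1) = p_D$ at small $k$, and one checks that $g(2) < g(1)$ precisely when $p_D \leq 16$, with equality at $p_D = 17$. Thus $g$ carries two competing local maxima, at $k=1$ and at the interior peak $k^*$, and the clean truncation $\min\{\maxlen, k^*\}$ of Proposition \ref{prop:maxdiscscore} breaks down whenever a small $\maxlen$ traps the admissible lengths inside the dip. I would settle this by comparing, for each $p_D \in \{14, 15, 16\}$, the value $g(1) = p_D$ with the largest interior value attainable under the cap, thereby fixing the exact threshold on $\maxlen$ beyond which the interior peak overtakes $k = 1$ (namely $\maxlen \geq 4$ for $p_D = 14$ and $\maxlen \geq 3$ for $p_D = 15, 16$) and below which the maximum stays $p_D$; this is exactly what the indicators $\mathbbm{1}\{\maxlen \geq \cdot\}$ and $\mathbbm{1}\{\maxlen \leq \cdot\}$ in the statement record. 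The value $p_D = 17$ is the boundary at which the dip degenerates, so $g$ is weakly increasing and the clean form $\dbinom{p_D}{\min\{\maxlen, 6\}}/(\min\{\maxlen, 6\})^3$ is recovered, while the same clean form with $k^* = \lfloor p_D/2 - 5/4 \rfloor$ governs all $p_D \geq 18$. Folding in the pruning constraint and the definition of $\maxlen$ as in the score case then completes the proof.
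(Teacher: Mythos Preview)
Your proposal takes essentially the same approach as the paper's proof: both reduce to maximising $\binom{p_D}{k}/k^{3}$ via the strategy of Proposition~\ref{prop:maxdiscscore} and then handle $p_D \in \{14,15,16,17\}$ individually because the cubic denominator creates a local minimum between $k=1$ and the interior peak. You have correctly fleshed out what the paper leaves terse, including the explicit boundary inequalities, the asymptotic midpoint line yielding $k^{*}=\lfloor p_D/2 - 5/4\rfloor$, and the precise $\maxlen$ thresholds encoded by the indicator functions.
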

\begin{proof}
    The proof is the same as that of Proposition \ref{prop:maxdiscscore}, except we now seek to maximise the expression $C^{p_D}_k/k^3$ for $k \in \mathbb{Z}^+_{\leq p_D}$. For $p_D \leq 13$, there does not exist any integer values of $k$ such that $C^{p_D}_k/k^3 > p_D$, therefore the maximum contribution is equal to $p_D$ and it can be attained if all $p_D$ categorical levels of an observation are unique within the discrete variable they belong to.\\
    
    \noindent Using the same strategy as in Proposition \ref{prop:maxdiscscore} gives the expression for $p_D \geq 18$. The cases $p_D \in \{14, 15, 16, 17\}$ need to be investigated individually due to the monotonicity of the expression $C^{p_D}_k/k^3$ being different to that for lower or greater values of $p_D$. More precisely, the function presents local minima between $k=1$ and the maximising value of $k$, thus imposing additional constraints on the maximum score that depend on the value of $\maxlen$, which are incorporated in the final expression for $c^\mathrm{max}_{D,(i,j)}$.
\end{proof}
\newpage

\setcounter{equation}{0}
\renewcommand\theequation{B.\arabic{prop}.\arabic{equation}}
\begin{prop}\label{prop:minpvalue}
Let a discrete variable have $\ell$ levels and suppose we wish to test the following hypothesis:
\begin{align*}
    H_0: \ \hat{\boldsymbol{\pi}}_l = \boldsymbol{\pi} \ \text{vs. } H_1: \ \hat{\boldsymbol{\pi}}_l \neq \boldsymbol{\pi},
\end{align*}
where $\boldsymbol{\pi}$ is the vector of proportions of the $\ell$ levels in the data set and $\boldsymbol{\hat{\pi}}_l$ is the vector of proportions of the $\ell$ levels in the $k$ nearest neighbours of the core point of level $l$ ($l=1, \ldots, \ell$) with respect to a suitable distance metric and for a given set of continuous variables. Assuming that an association exists, $H_0$ is rejected for all $\ell$ levels at a pre-specified significance level (upon correction) and we further assume that the $l$th element of $\hat{\boldsymbol{\pi}}_l$ is at least equal to $1/2$. The smallest $p$-value for such a chi-squared goodness-of-fit test is obtained when all $k$ nearest neighbours of the core point of level $l$ are also of the same level.
\end{prop}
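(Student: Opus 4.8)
The plan is to recast the statement as a constrained optimisation problem and exploit convexity. First I would note that, for a fixed number of categories $\ell$ (and hence fixed degrees of freedom $\ell-1$), the chi-squared goodness-of-fit $p$-value is a strictly decreasing function of the observed test statistic, so minimising the $p$-value is equivalent to maximising the statistic. Writing the composition of the $k$ neighbours of the core of level $l$ as a frequency vector $\hat{\boldsymbol{\pi}}_l=(\hat{\pi}_{l,1},\dots,\hat{\pi}_{l,\ell})$, with observed counts $k\hat{\pi}_{l,m}$ and expected counts $k\pi_m$, the statistic is
\begin{equation*}
\chi^2 = k\sum_{m=1}^{\ell}\frac{\left(\hat{\pi}_{l,m}-\pi_m\right)^2}{\pi_m},
\end{equation*}
so the task reduces to maximising $f(\hat{\boldsymbol{\pi}}_l)=\sum_{m}\left(\hat{\pi}_{l,m}-\pi_m\right)^2/\pi_m$ over the set of admissible compositions.

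Next I would identify that feasible set. Each $\hat{\boldsymbol{\pi}}_l$ lies in the probability simplex $\{\hat{\pi}_{l,m}\geq 0,\ \sum_m\hat{\pi}_{l,m}=1\}$ and, by assumption, satisfies $\hat{\pi}_{l,l}\geq 1/2$; this is a compact convex polytope, and the integer count vectors of interest form a subset of it. Since $f$ is a sum of convex quadratics (each term has positive leading coefficient $1/\pi_m$), it is convex, and a convex function attains its maximum over a polytope at an extreme point. The structural heart of the argument is therefore to enumerate the extreme points: the vertex $e_l$, corresponding to all $k$ neighbours being of level $l$, together with the $\ell-1$ midpoints $\tfrac12(e_l+e_m)$ for $m\neq l$ (half the neighbours of level $l$, half of a single other level $m$), which arise where the cutting hyperplane $\hat{\pi}_{l,l}=1/2$ meets the edges of the simplex. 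Because $e_l$ is itself an admissible integer configuration, showing that $f(e_l)$ dominates $f$ at every vertex of the polytope immediately settles the claim over the integer subset too.

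It then remains to compare $f$ across these candidates. A direct computation gives $f(e_l)=(1-\pi_l)/\pi_l$ and $f\!\left(\tfrac12(e_l+e_m)\right)=1/(4\pi_l)+1/(4\pi_m)-1$, so that $f(e_l)\geq f\!\left(\tfrac12(e_l+e_m)\right)$ is equivalent to the simple inequality $\pi_l\leq 3\pi_m$. Concluding that $e_l$ is the global maximiser thus hinges on ruling out the ``half-on-a-single-rare-level'' vertices, and this is the step I expect to be the main obstacle, since the comparison is genuinely false without some control on how imbalanced the background proportions $\boldsymbol{\pi}$ may be. I would close this gap by invoking the standing assumptions: that an association exists and that $H_0$ is rejected for \emph{every} level after correction, which forces each level to be reasonably represented rather than one level swamping another, together with $\hat{\pi}_{l,l}\geq 1/2\geq\pi_l$.

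As a complementary check that clarifies why pushing more neighbours onto level $l$ is always the ``right'' direction, I would also examine the one-parameter family in which the non-$l$ mass is distributed proportionally to the background, i.e. $\hat{\pi}_{l,m}=(1-t)\pi_m/(1-\pi_l)$ for $m\neq l$ with $t=\hat{\pi}_{l,l}$. A short calculation collapses $f$ to
\begin{equation*}
f(t)=\frac{(t-\pi_l)^2}{\pi_l\left(1-\pi_l\right)},
\end{equation*}
which, when $\pi_l\leq 1/2$, is manifestly increasing in $t$ on $[\tfrac12,1]$ and hence maximised at $t=1$. This confirms that, along the natural ``inflate level $l$'' direction, having all $k$ neighbours of level $l$ yields the largest statistic and therefore the smallest $p$-value, which is exactly the asserted conclusion.
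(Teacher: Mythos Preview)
Your route is genuinely different from the paper's. The paper eliminates $O_{\ell,l}$ via the simplex constraint, differentiates the statistic with respect to $O_{l,l}$ while holding the remaining counts fixed, argues that this partial derivative is positive so the maximiser must have $O_{\ell,l}=0$, and then recurses to drive $O_{\ell-1,l},\dots,O_{2,l}$ to zero in turn. Your convexity-plus-vertex-enumeration argument is cleaner and strictly more informative: it lays bare the only configurations that can possibly compete with $e_l$, namely the half-and-half points $\tfrac12(e_l+e_m)$, which the paper's coordinate-wise differentiation never surfaces explicitly.

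Where both you and the paper stall is the same place, and you have located it precisely: the inequality $f(e_l)\geq f\bigl(\tfrac12(e_l+e_m)\bigr)$ is equivalent to $\pi_l\leq 3\pi_m$, and nothing in the stated hypotheses secures this. Your appeal to ``$H_0$ rejected for every level'' and ``$\hat\pi_{l,l}\geq\tfrac12$'' does not help, because those constrain the observed neighbour compositions $\hat{\boldsymbol{\pi}}_l$, not the background proportions $\boldsymbol{\pi}$; take for instance $\boldsymbol{\pi}=(0.5,0.4,0.1)$ with $l=1$, where $f(e_1)=1$ but $f\bigl(\tfrac12(e_1+e_3)\bigr)=2$. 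Your closing one-parameter family is only a single ray through the polytope and so cannot settle the vertex comparison either. In fairness, the paper's argument has the very same hole: its displayed derivative should read $2O_{1,1}/(k\pi_1)-2O_{\ell,1}/(k\pi_\ell)$ (the sign of the second term is flipped), and once corrected it is not everywhere positive on the feasible region without exactly the balance condition $\pi_l\leq 3\pi_m$ that you isolated. The proposition as stated needs an additional hypothesis controlling the spread of $\boldsymbol{\pi}$; your proof sketch has the virtue of making that need transparent rather than hiding it.
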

\begin{proof}
Using the assumption that $\hat{\boldsymbol{\pi}}_l \geq 1/2$, it follows that the observed points of class $l$ are at least given by $k/2$, where $k$ is the number of closest neighbours we consider. The chi-squared test statistic for this test (which we denote by $\scalebox{1.5}{$\chi$}^2_l$ in a slight abuse of notation), is then given by:
\begin{equation}\label{eq:chisquaredteststat}
    \scalebox{1.5}{$\chi$}^2_l= \sum\limits_{i=1}^\ell \frac{\left(E_{i,l} - O_{i,l} \right)^2}{E_{i,l}},
\end{equation}
where $E_{i,l}$ is the expected amount of neighbours of the $l$th core point which are of level $i$ and $O_{i,l}$ is the amount of them that is observed. We assume $k < \min_i n_i$ (where $n_i$ is the number of observations possessing level $i$) and we know that $E_{i,l} = k\pi_i$, $O_{l,l} \geq k/2$ and $\sum_i O_{i,l} = k$, which imply that $\sum_{i \neq l}O_{i,l} \leq k/2$. Assume w.l.o.g. that $l=1$ and observe that $O_{\ell,1} = k - \sum_{i=1}^{\ell-1} O_{i, 1}$ and $\pi_\ell = 1- \sum_{i=1}^{\ell-1}\pi_i$. Expression \eqref{eq:chisquaredteststat} then becomes:
\begin{equation}\label{eq:chisquaredteststat2}
    \scalebox{1.5}{$\chi$}^2_1= \sum\limits_{i=1}^{\ell-1} \frac{\left(E_{i,1} - O_{i,1} \right)^2}{E_{i,1}} + \frac{\left\{k\left(1-\sum\limits_{i=1}^{\ell-1}\pi_i \right) - \left( k - \sum\limits_{i=1}^{\ell-1}O_{i,1} \right)\right\}^2}{k\left(1-\sum\limits_{i=1}^{\ell-1}\pi_i \right)}
\end{equation}
Taking the derivative of Expression \eqref{eq:chisquaredteststat2} with respect to $O_{1,1}$ (these are integers but we extend the function to the reals and show that the expression is maximised for integer-valued $O_{1,1}$) we obtain the following expression:
\begin{equation}\label{eq:gradchisquared}
    \frac{\partial\scalebox{1.5}{$\chi$}^2_1}{\partial O_{1,1}} = \frac{2O_{1,1}}{k\pi_1} + \frac{2\left( k - \sum\limits_{i=1}^{\ell-1} O_{i,1} \right)}{k\left(1-\sum\limits_{i=1}^{\ell-1}\pi_i \right)}.
\end{equation}
Expression \eqref{eq:gradchisquared} consists of a strictly positive term, while the second term can only be non-negative if $\sum_{i=1}^{\ell-1}O_{i,1} = k$, meaning that $O_{\ell,1} = 0$. We then proceed in the exact same way, re-writing Expression \eqref{eq:chisquaredteststat2}, substituting $\sum_{i=1}^{\ell-1}O_{i,1}$ by $k$, taking the derivative with respect to $O_{i,1}$ and then we observe that $\sum_{i=1}^{\ell-2}O_{i,1} = k$. This recursive argument leads to $O_{1,1} = k$ and $O_{i,1} = 0 \ \forall i > 1$. Geometrically, Expression \eqref{eq:chisquaredteststat2} corresponds to an elliptic paraboloid when plotted on the set of coordinate axes defined by $\left(O_{1,1}, \ldots, O_{\ell-1, 1}\right)$. The cross section that attains a maximum is observed for $O_{1,1} = k$ and the maximum point is attained when $O_{i,1} = 0 \ \forall i > 1$. This is depicted in Figure \ref{fig:surface_plt}, where $\scalebox{1.5}{$\chi$}^2_1$ is plotted in the $\left(O_{1,1}, O_{2,1}\right)$ space for $k=10$, $\ell = 3$ and $\boldsymbol{\pi} = (0.4, 0.35, 0.25)$.\\

\noindent This result justifies the use of the product of $p$-values as a criterion for deciding the set of continuous variables for which the levels are best separated. If there exists a structure in the data and assuming that can be captured by the closest neighbours of the most centrally located point of a level with respect to some distance metric, then the null hypothesis should be comfortably rejected. We have shown that the amount of `deviation' from the null distribution is proportional to the value of the test statistic, with maximal test statistic value obtained when the $k$ neighbours of the core point of a level $l$ are all of the same level. Therefore, a lower product of $p$-values will correspond to a larger test statistic which indicates a larger deviation from the null distribution and thus a better separation of the $\ell$ levels. Finally, taking products of $p$-values ensures a fair comparison, as imbalanced levels produce $p$-values on different scales (less frequent levels yield lower $p$-values); this is accounted for by comparing products of all $\ell$ $p$-values for different subsets of continuous features.
\vspace{-.4cm}
\begin{figure}[h!]
  \centering
  \includegraphics[scale=0.85]{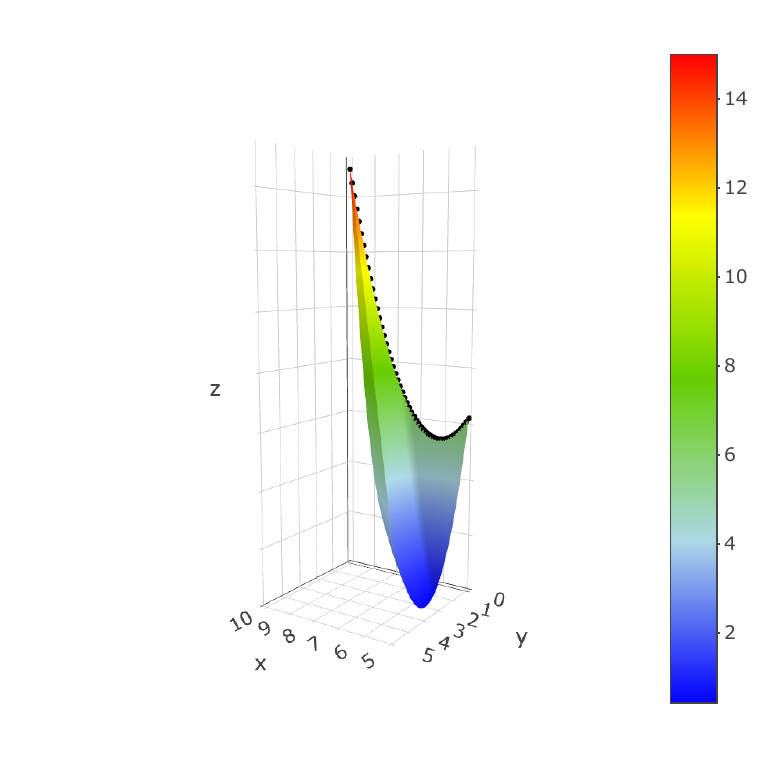}
  \caption{Plot of $\scalebox{1.5}{$\chi$}^2_1$ in the $\left(O_{1,1}, O_{2,1}\right)$ space for $\ell = 3$, $k=10$ and $\boldsymbol{\pi} = (0.4, 0.35, 0.25)$; the $x$ and $y$ axes correspond to $O_{1,1}$ and $O_{2,1}$ respectively, with $z = \scalebox{1.5}{$\chi$}^2_1$. The black dots display the cross section (parabola) for which the maximum $\scalebox{1.5}{$\chi$}^2_1$ value is achieved, which is for $O_{2,1} = 0$.}
  \label{fig:surface_plt}
\end{figure}

\end{proof}

\newpage 
\section{Algorithms}\label{appendix:algorithms}
\setcounter{algorithm}{0}
\renewcommand\thealgorithm{C.\arabic{algorithm}}
\begin{algorithm}[h!]
  \caption{Detection of marginal outliers in discrete space}
  \label{code:marg_cat_outs}
  \begin{algorithmic}[1]
  \small
\State \textbf{Input:} Computed discrete scores $s_{D,(i,\cdot)} \ (i=1,\dots,n)$, max proportion of outliers believed to be in the data set $\rho$, additional proportion of outliers we are willing to tolerate $\epsilon$.
\State Find the number of unique discrete score values $\lvert \mathcal{E}\rvert$.
\For{$K=1,2,\dots, \lvert \mathcal{E}\rvert$} 
    \State Run 1-dimensional $K$-Means clustering on the unique score values.
    \State Find $\mathcal{C}^K_0$ that includes the zero score and store its size $\lvert\mathcal{C}^K_0 \rvert$.
\EndFor
\State Set $K^* \leftarrow \min_{K} \{K: \lvert \mathcal{C}_0^K \rvert = M \}$, where $M$ is the mode of $\{\lvert \mathcal{C}_0^K \rvert\}_{K=1}^{\lvert\mathcal{E}\rvert}$.
\State Scale the scores using $s'_{D,(i,\cdot)}=\lvert s_{D,(i,\cdot)}-m_D \rvert/s_D, \ i=1,\dots,\lvert \mathcal{E}\rvert$, where $m_D$ is the mean discrete score and $s_D$ the standard error of the discrete scores.
\State Sort the scaled scores in ascending order and find the lowest index $j$ for which $s'_{D,(j+1,\cdot)} > s'_{D,(j,\cdot)} + 1$.
\State Remove observations with infrequent sequences of unit length or with score greater than $s'_{D,(j,\cdot)}$ from $\mathcal{C}_0^{K^*}$.
\If{$\lvert \mathcal{C}_0^{K^*} \rvert \leq n- \lceil (\rho+\epsilon)n \rceil$}
    \State Set $K^* \leftarrow K^*-1$ and go back to Step 10.
\EndIf
\State \textbf{Output:} Set of marginal outliers in the discrete space $\mathcal{I}_D$, i.e. observations not in $\mathcal{C}_0^{K^*}$.
\end{algorithmic}
\end{algorithm}

\begin{algorithm}[h!]
  \caption{Detection of marginal outliers in continuous space}
  \label{code:marg_cont_outs}
  \begin{algorithmic}[1]
  \small
\State \textbf{Input:} Computed continuous scores $s_{C,(i,\cdot)}, \ i=1,\dots,n$, set of indices of marginal outliers in discrete space $\mathcal{I}_D$, max proportion of outliers believed to be in the data set $\rho$, additional proportion of outliers we are willing to tolerate $\epsilon$.
\State Discard observations in $\mathcal{I}_D$.
\State Sort the continuous scores for remaining observations in ascending order and compute differences among consecutive sorted continuous scores $z_i$.
\State Scale the differences using $z'_i=\lvert z_i-m_C \rvert/s_C$, where $m_C$ and $s_C$ are the mean and the standard error of the $z_i$'s, respectively.
\For{$\lambda=2,\dots, 20$} 
    \State Calculate $\lvert \Delta_\lambda \rvert$, where $\Delta_\lambda = \{ z'_i: z'_i \geq \lambda, i \in \{1,\dots, n\}\backslash \mathcal{I}_D\}$.
\EndFor
\State Set $\lambda^* \leftarrow \max_{\lambda} \{\lambda: \lvert \Delta_\lambda \rvert = M \}$, where $M$ is the mode of $\{\lvert \Delta_\lambda \rvert \}_{\lambda=2}^{20}$.
\If{$\lvert \Delta_{\lambda^*} \rvert + \lvert \mathcal{I}_D \rvert \geq \lceil (\rho+ \epsilon)n \rceil$}
    \State \parbox[t]{\dimexpr\textwidth-\leftmargin-\labelsep-\labelwidth}{Go back to Step 10 of Algorithm \ref{code:marg_cat_outs} and set $K^* \leftarrow K^*-1$ to get a new set of marginal outliers in the discrete space 
    $\mathcal{I}_D$.\strut}
\EndIf
\State \textbf{Output:} Set of marginal outliers in the continuous space $\mathcal{I}_C$, i.e. observations with continuous score greater than the lowest score that differs at least by $\lambda^*$ from its next higher score.
\end{algorithmic}
\end{algorithm}

\begin{algorithm}[h!]
  \renewcommand{\break}{\textbf{break}}
  \caption{Context identification for a given discrete behavioural attribute for one dimension}
  \label{code:context_id_1d}
  \begin{algorithmic}[1]
  \small
\State \textbf{Input:} Data set $\boldsymbol{X}$, set of indices of marginal outliers $\mathcal{I}$, index of discrete variable $j$, proportion parameter $\delta$, significance levels $\alpha_1, \alpha_2$, order of Minkowski distance $r$.
\State Discard marginal outliers $\mathcal{I}$ from $\boldsymbol{X}$.
\State Define empty list $\boldsymbol{Z}_1$.
\For{$k=1, \ldots, p_C$}
    \State Perform the Kruskal-Wallis $H$ test on $\left( \boldsymbol{X}_{D_j}, \boldsymbol{X}_{C_k}\right)$ to get a $p$-value $p_k$.
    \If{$p_k \leq \alpha_1$}
        \State Store $k$ in $\mathcal{J}$.
    \EndIf
\EndFor
\For{$k \in \mathcal{J}$}
    \State \parbox[t]{\dimexpr\textwidth-\leftmargin-\labelsep-\labelwidth}{Construct a distance matrix $\boldsymbol{\mathrm{A}}$ for $\boldsymbol{X}_{C_k}$ with the Minkowski distance of order $p$.}
    \For{$l = 1, \ldots , \ell_j$}
        \State \parbox[t]{\dimexpr\textwidth-\leftmargin-\labelsep-\labelwidth}{Extract the distance matrix $\boldsymbol{\mathrm{A}}^{l}$ from $\boldsymbol{\mathrm{A}}$ for points for which $\boldsymbol{X}_{D_j} = l$.}
        \State \parbox[t]{\dimexpr\textwidth-\leftmargin-\labelsep-\labelwidth}{Find the most central point of level $l$ as $\boldsymbol{x}^*_l \defeq \argmin\limits_{y \in \{1, \ldots, n_l\}} \text{Median}\left(\boldsymbol{\mathrm{A}}^{l}_{y} \right)$.\strut}
        \State \parbox[t]{\dimexpr\textwidth-\leftmargin-\labelsep-\labelwidth}{Compute the $\lceil \delta \times n_l \rceil$-nearest neighbours of $\boldsymbol{x}^*_l$ and store the number of points from each level in $\Tilde{\boldsymbol{\pi}}_j$.\strut}
        \State \parbox[t]{\dimexpr\textwidth-\leftmargin-\labelsep-\labelwidth}{Compute the $p$-value $p_l$ for a chi-square goodness of fit test between $\boldsymbol{\pi}_j$ and $\Tilde{\boldsymbol{\pi}}_j$ and store it in a vector $\boldsymbol{P}_k$.\strut}
    \EndFor
    \State Order $\boldsymbol{P}_k$ in increasing order where $\boldsymbol{P}_k^{(i)}$ is the $i$th smallest $p$-value.
    \If{$\boldsymbol{P}_k^{(i)} \leq \alpha_2/(\ell_j+1-i) \forall i \in \{1, \ldots, \ell_j\}$}
        \State Store $\sum_{i=1}^{\ell_j}\log \left(\boldsymbol{P}_k^{(i)}\right)$ in $\boldsymbol{Z}_1$.
    \EndIf
\EndFor
\If{$\lvert \boldsymbol{Z}_1 \rvert \geq 1$}
    \State Set $\mathcal{J} \leftarrow \mathcal{J}_{z^*}$, where $z^* = \argmin\limits_{z \in \{ 1, \ldots, \lvert \boldsymbol{Z}_1 \rvert \}} \boldsymbol{Z}_1$.
\EndIf
\State \textbf{Output:} Set of contextual attributes $\mathcal{J}$.
\end{algorithmic}
\end{algorithm}

\begin{algorithm}[h!]
  \renewcommand{\break}{\textbf{break}}
  \caption{Context identification for a given discrete behavioural attribute for $>1$ dimensions}
  \label{code:context_id_hd}
  \begin{algorithmic}[1]
  \small
\State \textbf{Input:} Index of discrete variable $j$, contextual attributes $\mathcal{J}$, proportion parameter $\delta$, significance level $\alpha_2$, order of Minkowski distance $r$.
\State Set $\xi \leftarrow \lvert \mathcal{J}\rvert$.
\State Define empty lists $\boldsymbol{Z}_1$ and $\boldsymbol{Z}_2$.
\State Set $\mathcal{B} \leftarrow \mathcal{J}$.
\While{$\xi \leq 2$}
    \State Define empty lists $\boldsymbol{Q}_1$ and $\boldsymbol{Q}_2$.
    \For{$Q \in \{\mathcal{Q} \subseteq \mathcal{J} : \lvert \mathcal{Q} \rvert = \xi \wedge \mathcal{Q} \subseteq \mathcal{B}\}$}
        \For{$l=1, \ldots, \ell_j$}
            \State Define $\boldsymbol{X}^l_{C_Q}$ as the set of observations from $\boldsymbol{X}_{C_Q}$ for which $\boldsymbol{X}_{D_j} = l$.
            \State \parbox[t]{\dimexpr\textwidth-\leftmargin-\labelsep-\labelwidth}{Perform ROBPCA on $\boldsymbol{X}^l_{C_Q}$ to obtain a matrix of loadings $\boldsymbol{\mathrm{V}}^l_Q$ and a corresponding set of eigenvalues $\left(\lambda^l_1, \ldots, \lambda^l_\xi \right)$.\strut}
            \State Project the data on the rotated space so that $\boldsymbol{X}^{*l}_{C_Q} \leftarrow \boldsymbol{X}^l_{C_Q}\boldsymbol{\mathrm{V}}^l_Q$.
            \State \parbox[t]{\dimexpr\textwidth-\leftmargin-\labelsep-\labelwidth}{Compute a distance matrix $\boldsymbol{\mathrm{A}}$ by scaling the Minkowski distance of order $p$ using weights $\left( 1/\lambda_1^l \ldots, 1/\lambda_\xi^l\right)$.\strut}
            \State Extract the distance matrix $\boldsymbol{\mathrm{A}}^{l}$ from $\boldsymbol{\mathrm{A}}$ for points for which $\boldsymbol{X}_{D_j} = l$.
            \State \parbox[t]{\dimexpr\textwidth-\leftmargin-\labelsep-\labelwidth}{Find the most central point of level $l$ as $\boldsymbol{x}^*_l \defeq \argmin\limits_{y \in \{1, \ldots, n_l\}} \text{Median}\left(\boldsymbol{\mathrm{A}}^{l}_{y} \right)$.}
            \State \parbox[t]{\dimexpr\textwidth-\leftmargin-\labelsep-\labelwidth}{Compute the $\lceil \delta \times n_l \rceil$-nearest neighbours of $\boldsymbol{x}^*_l$ and store the number of points from each level in $\Tilde{\boldsymbol{\pi}}_j$.\strut}
            \State \parbox[t]{\dimexpr\textwidth-\leftmargin-\labelsep-\labelwidth}{Compute the $p$-value $p_l$ for a chi-square goodness of fit test between $\boldsymbol{\pi}_j$ and $\Tilde{\boldsymbol{\pi}}_j$ and store it in a vector $\boldsymbol{P}_{Q}$.\strut}
        \EndFor
        \State Order $\boldsymbol{P}_Q$ in increasing order where $\boldsymbol{P}_Q^{(i)}$ is the $i$th smallest $p$-value.
        \If{$\boldsymbol{P}_Q^{(i)} \leq \alpha_2/(\ell_j+1-i) \forall i \in \{1, \ldots, \ell_j\}$}
            \State Store $\sum_{i=1}^{\ell_j}\log \left(\boldsymbol{P}_Q^{(i)}\right)$ in $\boldsymbol{Q}_1$ and $Q$ in $\boldsymbol{Q}_2$.
        \EndIf
    \EndFor
    \If{$\lvert \boldsymbol{Q}_1 \rvert > 0$}
        \State Store $z^* \defeq \argmin\limits_{z \in \{1, \ldots, \lvert \boldsymbol{Q}_1\rvert\}}\boldsymbol{Q}_1$ in $\boldsymbol{Z}_1$.
        \State Set $\mathcal{B} \leftarrow \boldsymbol{Q}_{2_{z^*}}$ and store $\mathcal{B}$ in $\boldsymbol{Z}_2$.
        \State Set $\xi \leftarrow \xi-1$.
    \Else
        \State \break
    \EndIf
\EndWhile
\If{$\lvert \boldsymbol{Z}_1 \rvert \geq 1$}
    \State Set $\mathcal{J} \leftarrow \boldsymbol{Z}_{2_{\Tilde{z}}}$, where $\Tilde{z} \defeq \argmin\limits_{z \in \{ 1, \ldots , \lvert \boldsymbol{Z}_1 \rvert \}} \boldsymbol{Z}_1$.
\EndIf
\State \textbf{Output:} Set of contextual attributes $\mathcal{J}$.
\end{algorithmic}
\end{algorithm}
\clearpage
\begin{algorithm}[ht]
  \caption{Method of consecutive angles}
  \label{code:consecutive_angles}
  \begin{algorithmic}[1]
  \small
\State \textbf{Input:} Amount of density estimation misclassifications $N(1)$, KDE ratio threshold $\Lambda^*_m$ if method returns too many misclassifications, parameter controlling tolerance level for drop in number of misclassifications $\gamma$.
\For{$\Lambda^*=1.5, 2, \dots, 19.5, 20$} 
    \State Compute $\theta_{\Lambda^*} \leftarrow \arctan\{ 2[N(\Lambda^*-0.5)-N(\Lambda^*)]\}$.
    \If{$\Lambda^* \geq 2$ \textbf{AND} $\theta_{\Lambda^*} = \theta_{\Lambda^*-0.5}$ \textbf{AND} $N\left(\Lambda^*-0.5\right)-N\left(\Lambda^*\right) < \gamma$}
    \State Set $\Lambda^* \leftarrow \Lambda^*-0.5$.
    \State \textbf{break}
    \EndIf
\EndFor
\If{$\Lambda^*>11$}
    \State Set $\Lambda^* \leftarrow \Lambda^*_m$.
\EndIf
\State \textbf{Output:} Threshold value $\Lambda^*$.
\end{algorithmic}
\end{algorithm}
\clearpage

\section{Additional Figures \& Tables}\label{appendix:extrafigstables}
\setcounter{figure}{0}
\renewcommand\thefigure{D.\arabic{figure}}
\setcounter{table}{0}
\renewcommand\thetable{D.\arabic{table}}

\begin{figure}[h!]
  \centering
  \includegraphics[scale=0.8]{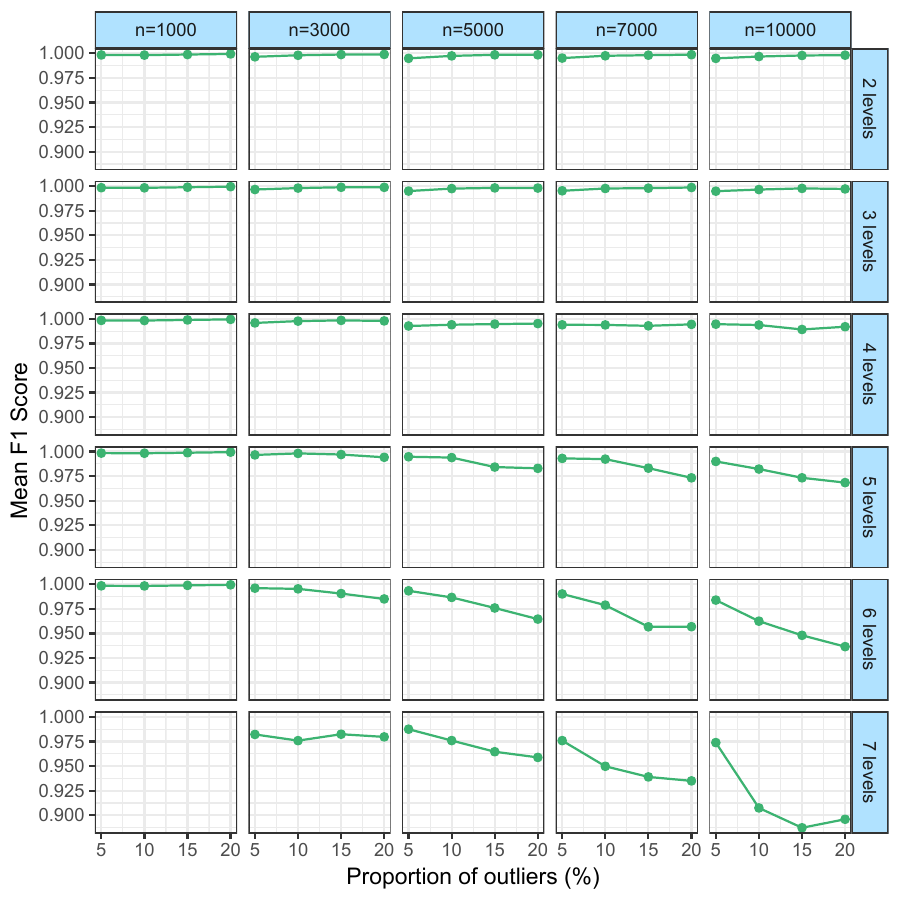}
  \caption{Mean F1 score for simulation study on detection of marginal outliers in artificial data sets for varying number of observations and discrete levels, as well as for increasing proportion of outliers.}
  \label{fig:res_marg_f1}
\end{figure}
\vspace{-1cm}
\clearpage
\begin{table}[ht]
\centering
\begin{tabular}{c|c|c}
Number of continuous variables & Number of levels & $\theta_\mathrm{thresh}(^\circ)$ \\ \hline
\multirow{5}{*}{3} & 3 & 167.50 \\
 & 4 & 168.00 \\
 & 5 & 168.10 \\
 & 6 & 180.00 \\
 & 7 & 180.00 \\ \hline
\multirow{5}{*}{4} & 3 & 166.60 \\
 & 4 & 167.90 \\
 & 5 & 168.00 \\
 & 6 & 180.00 \\
 & 7 & 180.00
\end{tabular}
\caption{Optimal $\theta_\mathrm{thresh}$ values for relationships between 3 \& 4 continuous and one discrete variables with 3--7 levels; $\theta_\mathrm{thresh}$ is the threshold value for $\theta_\mathrm{elbow}$, determining whether a small $\Lambda^*$ value of 2 or 3 or the method of consecutive angles should be used. The binary case is omitted as it is usually recommended to choose $\Lambda^*=2$ or 3 in that case.}
\label{tab:thetathresh34cont}
\end{table}

\end{appendices}

%%===========================================================================================%%
%% If you are submitting to one of the Nature Portfolio journals, using the eJP submission   %%
%% system, please include the references within the manuscript file itself. You may do this  %%
%% by copying the reference list from your .bbl file, paste it into the main manuscript .tex %%
%% file, and delete the associated \verb+\bibliography+ commands.                            %%
%%===========================================================================================%%
%\renewpagestyle{plain}{%
%\sethead[\thepage][][\textit{REFERENCES}]{}{}{\thepage}
%\setfoot[\thepage][1][1]{}{}{\thepage}
%}%
%\pagestyle{plain}
\end{document}